\newcommand{\vpa}{\ensuremath{\mathsf{VPA}}\xspace}
\newcommand{\vpt}{\ensuremath{\mathsf{VPT}}\xspace}
\newcommand{\vpts}{\ensuremath{\mathsf{VPTs}}\xspace}
\newcommand{\wn}{\ensuremath{\mathsf{wn}}\xspace}
\newcommand{\vpl}{\ensuremath{\mathsf{VPL}}\xspace}
\newcommand{\fsts}{\ensuremath{\mathsf{FSTs}}\xspace}
\newcommand{\fst}{\ensuremath{\mathsf{FST}}\xspace}
\newcommand{\nlogspace}{{\sc NLogSpace}\xspace}
\newcommand{\conlogspace}{{\sc Co-NLogSpace}\xspace}
\newcommand{\exptime}{{\sc ExpTime}\xspace}
\newcommand{\ptime}{{\sc PTime}\xspace}
\newcommand{\pspace}{{\sc PSpace}\xspace}
\newcommand{\exptimecomplete}{{\sc ExpTime-c}\xspace}
\newcommand{\Sys}{\mathcal{S}}
\newcommand{\img}{\textit{CoDom}} 
\newcommand{\dom}{\textit{Dom}} 
\newcommand\inter[1]{\llbracket #1 \rrbracket}
\newcommand{\ol}{\overline}
  \title{On Functionality of Visibly Pushdown Transducers}
\author{Emmanuel Filiot$^\dag$ $\qquad$ Jean-Fran\c{c}ois Raskin$^\dag$ $\qquad$\ Pierre-Alain Reynier$^*$\\
  Fr\'ed\'eric Servais$^\dag$  $\qquad$  Jean-Marc Talbot$^*$ }
\date{} 
\institute{$^\dag$ Universit\'e Libre de Bruxelles \\ $^*$ Universit\'e de Provence}
\begin{document}

\maketitle

\begin{abstract}
    Visibly pushdown transducers 
    form a subclass of pushdown transducers that (strictly) extends
    finite state transducers with a stack. Like visibly pushdown automata, the input symbols
    determine the stack operations. In this paper, we prove that functionality
    is decidable in \pspace for visibly pushdown transducers. The proof is done via a pumping
    argument: if a word with two outputs has a sufficiently large
    nesting depth, there
    exists a nested word with two outputs whose nesting depth is
    strictly smaller. The proof uses technics of word combinatorics.
    As a consequence of decidability of functionality, we also show that
    equivalence of functional visibly pushdown transducers is \exptimecomplete. 
\end{abstract}

\section{Introduction}

In \cite{RM04}, it has been shown that visibly pushdown languages (\vpl) form a robust subclass of context-free languages. 
This class strictly extends the class of regular languages and still enjoys strong properties: 
closure under all Boolean operators and decidability of emptiness, universality, inclusion and equivalence.
On the contrary, context-free languages are not closed under complement nor under intersection, moreover universality, inclusion 
and equivalence are all undecidable.

\textit{Visibly pushdown automata} (\vpa), that characterize \vpl, are obtained as a restriction of pushdown automata. 
In these automata the input symbol determines the stack operation. The input alphabet is partitioned into
call, return and internal symbols: if a call is read, the automaton must push a symbol on the stack; if it reads a return, 
it must pop a symbol; and while reading an internal symbol, it can not touch, not even read, the stack. 
\textit{Visibly pushdown transducers} have been introduced in \cite{RS08}. They
form a subclass of pushdown transducers, and are obtained by adding output to \vpa: each time the \vpa reads an input symbol 
it also outputs a letter. They allow for
$\epsilon$-transitions that can produce outputs. In this paper, we consider visibly pushdown transducers
where this operation is not allowed. Moreover, each transition can
output not only a single letter but a word, and no visibly restriction
is imposed on this output word. Therefore in the sequel we call the transducers of \cite{RS08} 
$\epsilon$-\vpts, and \vpts will denote the visibly pushdown
transducers considered here.

Consider the \vpt $T$ of Figure~\ref{fig:vpt}. Call (resp. return) symbols are
denoted by $c$ (resp. $r$). The domain of $T$ is $\dom(T) = \{
c_1(c_2)^nc_3r_3(r_2)^nr_1\ |\ n\in\mathbb{N}\}$. For each word of
$\dom(T)$, there are two accepting runs, corresponding respectively to
the upper and lower part of $T$. For instance, when reading $c_1$, it
pushes $\gamma_1$ and produces either $d$ (upper part) or $dfc$ (lower
part). By following the upper part (resp. lower part), it produces words of the form
$dfcab(cabcab)^ngh$ (resp. $dfc(abc)^nab(cab)^ngh$). Therefore $T$ is functional.

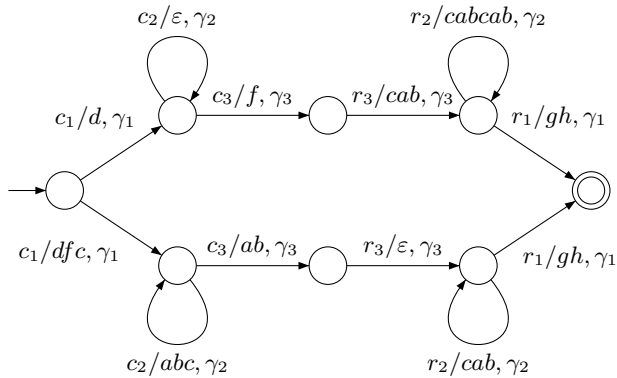
\begin{figure}[!ht]
\centering
\begin{picture}(60,50)(-5,-10)

\gasset{Nw=5,Nh=5,Nmr=5,NLdist=5,NLangle=180}

\node[Nmarks=i,NLangle=135](i)(-5,15){}
\node[NLangle=90](q1)(10,5){}
\node[NLangle=90](q2)(50,5){}
\node[NLangle=90](q3)(30,5){}
\node[NLangle=90](p3)(30,25){}
\node[NLangle=270](p1)(10,25){}
\node[NLangle=270](p2)(50,25){}
\node[NLangle=0,Nmarks=r](f)(65,15){}

\drawedge[ELside=r,ELpos=30](i,q1){$c_1 / dfc, \gamma_1$ }
\drawedge(q1,q3){$c_3 / ab, \gamma_3$}
\drawedge(q3,q2){$r_3 / \varepsilon, \gamma_3$}
\drawedge(i,p1){$c_1 / d, \gamma_1$ }
\drawedge(p1,p3){$c_3 / f, \gamma_3$}
\drawedge(p3,p2){$r_3 / cab, \gamma_3$}
\drawedge(p2,f){$r_1 / gh, \gamma_1$}
\drawedge[ELside=r,ELpos=60](q2,f){$r_1 / gh, \gamma_1$}

\drawloop(p1){$c_2 / \varepsilon, \gamma_2$}
\drawloop(p2){$r_2 / cabcab, \gamma_2$}
\drawloop[loopangle=270](q1){$c_2 / abc, \gamma_2$}
\drawloop[loopangle=270](q2){$r_2 / cab, \gamma_2$}

\end{picture}
\caption{A functional \vpt on $\Sigma_c=\{c_1,c_2,c_3\}$ and $\Sigma_r=\{r_1,r_2,r_3\}$.}\label{fig:vpt}
\end{figure}

In this paper, we prove that the problem of determining if a
\vpt transduction is functional is decidable. In particular, our algorithm is
in \pspace. Deciding functionality is one of the main
problem in transduction theory as it makes deciding equivalence of functional
transducers possible. Both problems are undecidable for pushdown
transductions. Our proof relies on a pumping argument: if a word is
long enough and has two outputs, we show that there is a strictly shorter word with
two outputs. We use technics of word combinatorics and in particular,
a strong result proved in \cite{Hakala99}. As a consequence, we show
that the equivalence problem for \vpts is \exptimecomplete.

\paragraph{Related Work}
$\epsilon$-\vpts have been introduced in \cite{RS08}. In contrast
to \vpts, they allow for $\epsilon$-transitions that produce outputs,
so that an arbitrary number of symbols can be inserted. Moreover, 
each transition of a \vpt can output a word while each transition of
an $\epsilon$-\vpt can output a single letter only. The \vpts we
consider here are strictly less expressive than $\epsilon$-\vpts, but
functionality and equivalence of functional transducers are decidable, which is not the case
for $\epsilon$-\vpts.

The functionality problem for finite state
transducers has been extensively studied. The first proof of
decidability was given by Sch{\"u}tzenberger in \cite{Schutz75}, and later in
\cite{JCSS::BlattnerH1977}. As the proof we give here, the proof of
Sch{\"u}tzenberger relies on a pumping lemma for functionality.
The first \ptime upper bound has
been proved in \cite{GurIba83}, and an efficient procedure has been
given in \cite{BealCPS03}.

Deciding equivalence of deterministic (and therefore functional) \vpts
is in \ptime \cite{SLLN09}. However, functional \vpts are strictly
more expressive than deterministic \vpts. In particular, 
non-determinism is often needed to
model functional transformations whose current production depends on some
input which may be arbitrary far away from the current input. For instance,
the transformation that swaps the first and the last input symbols is
functional but non-determinism is needed to guess the last input.

Ordered trees over an arbitrary finite alphabet $\Sigma$ can be naturally
represented by well nested words over the structured alphabet
$\Sigma\times\{c\}\cup \Sigma\times\{r\}$. As \vpts can express
transductions from well words to well nested words, they
are therefore well-suited to model tree tranformations. We distinguish 
\textit{ranked trees} from \textit{unranked trees}, whose nodes may have an arbitrary number of
ordered children. Ranked tree transducers have received a lot of
attention. Most notably, \textit{tree transducers} \cite{tata2007} and
\textit{macro tree transducers} \cite{Eng85} have been proposed and studied.
They are incomparable to \vpts however, as they allow for copy,
which is not the case of \vpts, but cannot define 
any context-free language as codomain, what \vpts can do.
Functionality is known to be decidable in \ptime for tree transducers
\cite{TCS::Seidl1992}. More generally, finite-valuedness (and
equivalence) of tree transducers is decidable \cite{MST::Seidl1994}.
There have been several attempts to generalize ranked tree transducers
to unranked tree transducers \cite{Maneth:2000:SDT,PerSei04}. As mentioned in
\cite{Eng09}, it is an important problem to decide
equivalence for unranked tree transformation formalisms. However, 
there is no obvious generalization of known results for ranked trees
to unranked trees, as unranked tree transformations have to
support concatenation of tree sequences, making usual binary encodings
of unranked trees badly suited. Considering classical ranked tree
transducers, their ability to copy subtrees is the main concern when dealing 
with functionality. However for \vpts, it is more 
their ability to concatenate sequences of trees which makes this problem
difficult, and which in a way led us to word combinatorics. To the best of our knowledge, \vpts consist in the first
(non-deterministic) model of  unranked tree transformations for which
functionality and equivalence of functional transformations is
decidable.

\paragraph{Organization of the paper} In Section \ref{sec:def}, we
define visibly pushdown transducers as a
extension of visibly pushdown automata. In Section \ref{sec:wordeq}, we recall
some notion of word combinatorics. In Section \ref{sec:pumpingfun}, we give a
reduction of functionality 
to a system of word equations. In Section \ref{sec:equations}, we prove a pumping lemma
that preserves non-functionality. Finally, we give a \pspace algorithm for functionality is  Section 
\ref{sec:pspace} and prove the \exptime completeness of
equivalence.

\section{Visibly Pushdown Transducers}\label{sec:def}

Let $\Sigma$ be a finite alphabet partitioned into two disjoint sets
$\Sigma_c$ and $\Sigma_r$ denoting respectively the \textit{call}
and \textit{return} alphabets\footnote{In contrast to \cite{RM04}, we do
  not consider \textit{internal} symbols $i$, 
  as they can be simulated by a (unique) call $c_i$ followed
  by a (unique) return $r_i$}. We denote by $\Sigma^*$ the set of
words over $\Sigma$ and by $\epsilon$ the empty word. The length of a word $u$ is denoted 
by $|u|$. The set of \textit{well nested} words $\Sigma^*_\wn$ is the smallest subset
of $\Sigma^*$ such that $\epsilon\in\Sigma^*_\wn$ and for all $c\in\Sigma^c$, all
$r\in \Sigma^r$, all $u,v\in \Sigma^*_\wn$, $cur\in \Sigma^*_\wn$ and
$uv\in\Sigma^*_\wn$. The \textit{height} of
a well nested word is inductively defined
by $h(\epsilon) = 0$, $h(cur) = 1+h(u)$, and $h(uv) = \text{max}(h(u),h(v))$.

\paragraph{Visibly Pushdown Languages} 
A {\em visibly pushdown automaton}
(\vpa)~\cite{RM04} on finite words over $\Sigma$
is a tuple $A=(Q,I, F, \Gamma, \delta)$ where 
$Q$ is a finite set of states, $I\subseteq Q$, respectively
$F\subseteq Q$, the set of initial states, respectively final
states, $\Gamma$ the (finite) stack alphabet, and $\delta=\delta_c
\uplus \delta_r$ where $\delta_c \subseteq Q \times\Sigma_c \times
\Gamma\times Q$ are the {\em call transitions}, $\delta_r \subseteq Q
\times\Sigma_r \times\Gamma \times Q$ are the {\em return transitions}.
On a call transition $(q,a,q',\gamma)\in\delta_c$, $\gamma$ is pushed
onto the stack and the control goes from $q$ to $q'$.
On a return transition $(q, \gamma,a,q')\in\delta_r$, $\gamma$ is
popped from the stack.
Stacks are elements of $\Gamma^*$, and we denote by $\bot$ the
empty stack. A {\em run} of a \vpa $A$ on a word
$w=a_1 \ldots a_l$ is a sequence  $\{(q_k, \sigma_k)\}_{0\leq k\leq
  l}$, where $q_k$ is the state and $\sigma_k\in \Gamma^*$ is the
stack at step $k$, such that $q_0\in I$, $\sigma_0=\bot$,
and for each $k< l$, we have either: 
$(i)$ $(q_k,a_{k+1}, \gamma,q_{k+1})\in \delta_c$ and 
$\sigma_{k+1}=\sigma_k\gamma$; 
$(ii)$ $(q_k, a_{k+1},\gamma, q_{k+1})\in \delta_r$, and 
$\sigma_{k}=\sigma_{k+1}\gamma$. A run is {\em accepting} if $q_l\in
F$ and $\sigma_l=\bot$. A word $w$ is {\em accepted} by $A$ if there exists an accepting run
of $A$ over $w$. Note that it is necessarily well nested. $L(A)$, the {\em language} of $A$, is the set of
words accepted by $A$.  A language $L$ over $\Sigma$ is a {\em visibly
  pushdown language} if there is a \vpa $A$ over $\Sigma$ such that
$L(A)=L$.

In contrast to \cite{RM04} and to ease the notations, we do not allow transitions on the empty stack.
Therefore the words accepted by a \vpa are well-nested (every call symbol has a matching return symbol and conversely).

\paragraph{Visibly Pushdown Transducers} As finite-state transducers
extend finite-state automata with outputs, visibly pushdown
transducers extend \vpa with outputs. To simplify notations, 
we suppose that the output alphabet is $\Sigma$, but our results
still hold for an arbitrary output alphabet.

\begin{definition}[Visibly pushdown transducers]
\rm
A {\em visibly pushdown transducer}\footnote{In contrast to \cite{RS08}, there is no producing $\epsilon$-transitions
  (inserting transitions) but a transition may produce a word and not a single symbol}
 (\vpt) on finite words over
$\Sigma$ is a tuple $T=(Q,I, F, \Gamma, \delta)$ where 
$Q$ is a finite set of states, $I\subseteq Q$ is the set of initial
states, $F\subseteq Q$ the set of final states, $\Gamma$ is the stack alphabet, $\delta=\delta_c \uplus
\delta_r$ the transition relation, with
$\delta_c \subseteq Q \times\Sigma_c \times \Sigma^* \times
\Gamma \times Q$,
$\delta_r \subseteq Q \times\Sigma_r \times \Sigma^* \times
\Gamma\times Q$.
\end{definition}

A \textit{configuration} of a \vpt is a pair $(q,\sigma)\in Q\times \Gamma^*$. 
A \textit{run} of $T$ on a word $u = a_1\dots a_{l}\in\Sigma^*$ from a configuration 
$(q,\sigma)$ to a configuration $(q',\sigma')$ is a finite sequence
$\rho = \{(q_i,\sigma_i)\}_{0\leq k \leq l}$ such
that $q_0=q$, $\sigma=\sigma_0$, $q'=q_n$, $\sigma'=\sigma_n$ and
for all $i\in\{1,\dots,l\}$, there exist $v_i\in\Sigma^*$ and
$\gamma_i\in\Gamma$ such that $(q_i-1,a_i,v_i,\gamma_i,q_{i})\in
\delta_c$ and either $a_i\in\Sigma_c$ and $\sigma_{i}
= \sigma_{i-1}\gamma_i$, or $a_i\in \Sigma_r$ and $\sigma_{i-1}
= \sigma_{i}\gamma_i$. The word $v = v_1\dots v_{l}$ is 
called an \textit{output} of $\rho$. We write
$(q,\sigma)\xrightarrow{u/v} (q',\sigma')$ when there exists
a run on $u$ from $(q,\sigma)$ to $(q',\sigma')$ producing $v$ as output. 
The transducer $T$ defines a word binary relation 
$\inter{T} = \{ (u,v)\ |\ \exists q\in I,p\in
F,\ (q,\bot)\xrightarrow{u/v}(p,\bot)\}$.

The \textit{domain} of $T$, resp. the \textit{codomain} of $T$,
denoted resp. by $\dom(T)$ and $\img(T)$, is the domain of
$\inter{T}$, resp. the codomain of $\inter{T}$. Note that the domain
of $T$ contains only well nested words, which is not the case of the codomain in general.

In this paper, we prove the following theorem:

\begin{theorem}
    Functionality of \vpts is decidable in \pspace.
\end{theorem}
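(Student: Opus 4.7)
\medskip

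\noindent\emph{Proof plan.} The plan is to reduce functionality to an emptiness-like question on a suitable squared construction, then bound the size (specifically, the nesting depth) of a minimal counterexample by a pumping argument, and finally run a nondeterministic polynomial-space search for such a counterexample. First I would build a product transducer $T\times T$ that, given an input word $u\in\Sigma^*_{\wn}$, simulates two accepting runs of $T$ on $u$ simultaneously; since the stack actions of $T$ are determined by the input letter, both copies push/pop synchronously and the resulting stack alphabet is $\Gamma\times\Gamma$. Non-functionality of $T$ then amounts to the existence of an input $u$ and a pair of accepting runs of $T\times T$ on $u$ whose two output words $v_1,v_2\in\Sigma^*$ differ. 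I would track along a run the \emph{delay} between $v_1$ and $v_2$, i.e.\ the remaining suffix of one output that has not yet been matched against the other, together with an indicator of which side is ahead.

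Next I would characterize the situation when two runs along a well-nested factor $cur$ can be pumped: removing (or repeating) a matched pair $c\cdots r$ corresponds to removing (resp.\ repeating) a fixed pair of output factors on the left and on the right of the current delay. This naturally gives rise to a system of word equations whose unknowns are the ambient delay and whose coefficients are the output contributions of the cycles in $T\times T$. This is exactly the reduction announced in Section~\ref{sec:pumpingfun}. The point of shifting to word equations is that one can then invoke strong combinatorial results (notably Hakala~\cite{Hakala99}) to bound the smallest solution, hence to bound the delays that can appear in a minimal non-functional run.

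The main step, and the principal obstacle, is the pumping lemma of Section~\ref{sec:equations}: if a word $u$ witnessing non-functionality has nesting depth strictly larger than a threshold $N$ polynomial in $|T|$, then there exists a well-nested witness $u'$ of strictly smaller nesting depth. The difficulty is that a naïve pumping argument \emph{can} accidentally synchronize the two outputs: removing a well-nested factor affects both the left and the right parts of both outputs, so the relation $v_1\neq v_2$ is not automatically preserved. Overcoming this is precisely where word combinatorics enters: one classifies the possible delay evolutions along cycles in $T\times T$ and shows, using the bound from Hakala's theorem, that above the threshold $N$ at least one pumping move keeps the two outputs distinct.

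Once this pumping lemma is established, the \pspace{} algorithm is standard. I would guess the input $u$ letter by letter, maintaining on the working tape (i) the pair of current states of $T\times T$, (ii) the stack of $T\times T$, whose height is bounded by $N$ and whose alphabet is $\Gamma\times\Gamma$, and (iii) the current delay between the two simulated outputs, which by the previous paragraph can be bounded by an exponential in $|T|$ and therefore stored in polynomial space after encoding it as a pointer into a product of cycle contributions rather than as a raw word. At the end of the guessed input, accept iff both components reach a final state with empty stack and the recorded delay is nonempty. The whole procedure runs in nondeterministic polynomial space, hence in \pspace{} by Savitch's theorem, proving the theorem.
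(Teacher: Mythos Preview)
Your overall plan---prove a vertical pumping lemma using word combinatorics (via Hakala--Kortelainen) to bound the nesting depth of a minimal witness, then search for such a witness---is exactly the paper's strategy. The difficulty you identify (that a na\"ive pump may accidentally equalize the two outputs) is the right one, and the paper resolves it precisely by the system of word equations you sketch.

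The gap is in the algorithmic part. Your pumping lemma only bounds the \emph{nesting depth} of a witness by a polynomial $N$; it says nothing about its \emph{length}. Well-nested words of height $\leq N$ can be arbitrarily long (e.g.\ $(cr)^k$), so without a further argument the guessed input $u$ need not terminate, and the delay between the two outputs need not be bounded at all. The sentence ``which by the previous paragraph can be bounded by an exponential in $|T|$'' is therefore unjustified: the previous paragraph only controls depth, not delay. And even granting an exponential delay bound, an exponential-length word over $\Sigma$ cannot be stored in polynomial space as is; your proposed ``pointer into a product of cycle contributions'' encoding is not spelled out enough to carry weight.

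The paper closes this gap differently, and quite cleanly: once the stack height is bounded by a polynomial, it encodes the bounded stack into the control, turning $T$ into an \fst $B$ of exponential size that agrees with $T$ on all inputs of height $\leq N$. Functionality of $T$ then reduces to functionality of $B$. For \fsts one has Sch\"utzenberger's horizontal bound (a non-functional \fst with $m$ states has a witness of length $O(m^2)$), which in turn yields an \nlogspace\ algorithm for \fst functionality that does \emph{not} store delays: it guesses a single output position $i$ where the two outputs differ and maintains only two counters and the $i$-th letters. Running this \nlogspace\ procedure on-the-fly on the exponential-size $B$ (never materializing $B$) gives \pspace. If you want to repair your argument without the \fst detour, you would need an independent horizontal pumping lemma and then replace delay storage by the position-guessing trick.
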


The rest of the paper is devoted to the proof of this theorem.

\section{Preliminaries on Word Combinatorics}\label{sec:wordeq}

The size of a word $x$ is denoted by $|x|$.
Given two words $x,y\in \Sigma^*$, we write $x\preceq y$ if $x$ is a
prefix of $y$. If we have $x\preceq y$, then we note $x^{-1} y$ the
unique word $z$ such that $y=xz$. A word $x\in\Sigma^*$ is
\textit{primitive} if there is no word $y$ such that $|y|<|x|$ and 
$x\in y^*$. The \textit{primitive root} of a word $x\in\Sigma^*$
is the (unique) primitive word $y$ such that $x\in y^*$. In
particular, if $x$ is primitive, then its primitive root is $x$.
Two words $x$ and $y$ are \textit{conjugate} if there exists
$z\in\Sigma^*$ such that $xz = zy$. It is well-known that
two words are conjugate iff there exist $t_1,t_2\in\Sigma^*$ such
that $x = t_1t_2$ and $y = t_2t_1$. Two words $x,y\in\Sigma^*$
\textit{commute} iff $xy = yx$.

\begin{lemma}[folklore]\label{lemma:commute}
  Let $x,y\in\Sigma^*$ and $n,m\in\mathbb{N}$.
  \begin{enumerate}
  \item if $x$ and $y$ commute, then 
    $x,y\in z^*$ for some $z\in \Sigma^*$. Moreover, if $xy$ is
    primitive, then $x=\epsilon$ or $y=\epsilon$;
  \item if $x^n$ and $y^m$ have a common subword of length at least $|x| + |y| - d$ ($d$ being the
    greatest common divisor of $|x|$ and $|y|$), then their primitive
    roots are conjugate.
  \end{enumerate}
\end{lemma}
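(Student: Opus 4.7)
Both parts are variants of classical results in word combinatorics (the Fine and Wilf theorem), so I would structure the proof along those lines.

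For part~1, I would proceed by induction on $|x|+|y|$. If $x=\epsilon$ or $y=\epsilon$, take $z=y$ or $z=x$, respectively. Otherwise, assume without loss of generality that $|x|\leq|y|$. From $xy=yx$ and $|x|\leq|y|$, $x$ must be a prefix of $y$, so $y=xz$ for some $z\in\Sigma^*$. Substituting in $xy=yx$ gives $x\cdot xz=xz\cdot x$, and left-cancellation of $x$ yields $xz=zx$. By the induction hypothesis, $x$ and $z$ are both powers of a common word $w$, hence so is $y=xz$. For the ``moreover'' part, if in addition $xy$ is primitive and both $x,y$ are nonempty, then the common root $w$ would divide $xy$ with $|w|\leq|x|<|xy|$, contradicting primitivity of $xy$.

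For part~2, I would reduce to the classical Fine and Wilf theorem on common prefixes. Suppose a common factor $w$ of length $\ell\geq|x|+|y|-d$ occurs in $x^n$ at position $i$ and in $y^m$ at position $j$. Write $x=x_1x_2$ with $|x_1|=i\bmod|x|$ and set $x'=x_2x_1$; define $y'$ analogously from $y$ and $j$. Then $w$ is a prefix of a sufficiently high power of $x'$ and of $y'$. Since $|w|\geq|x'|+|y'|-\gcd(|x'|,|y'|)$, the Fine and Wilf theorem yields a word $u$ with $x',y'\in u^*$. Hence $x'$ and $y'$ share the same primitive root. As $x$ is conjugate to $x'$ and $y$ is conjugate to $y'$, the primitive roots of $x$ and $y$ are each conjugate to that common root, and therefore to each other.

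The main subtlety is the index bookkeeping in part~2: one has to verify that the shifted conjugates $x'$ and $y'$ really do make $w$ a prefix of the corresponding powers, and that the Fine--Wilf bound is preserved under this shift. This rests on the fact that conjugation preserves length and primitive-root conjugacy class, so the quantities $|x|$, $|y|$, and $d$ are invariant under the reduction; the induction of part~1, invoked implicitly inside the Fine--Wilf argument, accounts for all remaining content.
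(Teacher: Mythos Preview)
Your argument is correct and in fact considerably more complete than what the paper provides. The paper's own proof dismisses the first sentence of item~1 as folklore, gives no argument at all for item~2, and only spells out the ``Moreover'' clause of item~1---and there it uses exactly your idea: from $x=z^\alpha$, $y=z^\beta$ with $\alpha,\beta>0$ one gets $xy=z^{\alpha+\beta}$, contradicting primitivity. Your induction for the commutation part and your reduction of item~2 to the classical Fine--Wilf theorem via conjugation are the standard textbook arguments; the bookkeeping you flag (that the conjugates $x',y'$ make $w$ a common prefix and that $|x|,|y|,d$ are preserved) is handled correctly. So you are not diverging from the paper's approach so much as filling in what the paper leaves implicit.
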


\begin{proof}
  The first assertion is folklore. 
  For the second, there exists $z\in\Sigma^*$ and $\alpha,\beta\geq 0$ such that
  $x = z^\alpha$ and $y=z^\beta$. If $x$ and $y$ are non-empty, then
  $\alpha,\beta > 0$ and $z\neq\epsilon$. Thus $xy =
  z^{\alpha+\beta}$, which contradicts the primitivity of $xy$. 
\end{proof}

\begin{lemma}[Hakala, Kortelainen, Theorem 7 of \cite{Hakala99}]\label{lem:finland}
    Let $v_0,v_1,v_m,v_{\ol 1},$ $v_{\ol 0},w_0,w_1,w_m,w_{\ol 1},w_{\ol
      m}\in \Sigma^*$ and $i\in\mathbb{N}$. If $v_0(v_1)^iv_m(v_{\ol
      1})^iv_{\ol 0} = w_0(w_1)^iw_m(w_{\ol 1})^iw_{\ol 0}$ holds for
    all $i\in\{0,1,2,3\}$, then it holds for all $i\in\mathbb{N}$.
\end{lemma}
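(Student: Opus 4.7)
Denote by $E_i$ the equation $v_0 v_1^i v_m v_{\ol 1}^i v_{\ol 0} = w_0 w_1^i w_m w_{\ol 1}^i w_{\ol 0}$, and write $L_i, R_i$ for its two sides. My approach is to extract length constraints from $E_0$ and $E_1$, then use $E_2$ and $E_3$ together with Lemma \ref{lemma:commute} to pin down the periodic structure of the blocks $v_1, w_1$ and $v_{\ol 1}, w_{\ol 1}$, and finally propagate to all $i \geq 4$ by induction. Comparing word-lengths in $E_0$ and in $E_1$, and subtracting, gives $|v_0|+|v_m|+|v_{\ol 0}| = |w_0|+|w_m|+|w_{\ol 0}|$ and $|v_1|+|v_{\ol 1}| = |w_1|+|w_{\ol 1}|$. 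Consequently $|L_i|=|R_i|$ for every $i \in \mathbb{N}$, so it suffices to check letter-by-letter agreement on each side.

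The core of the argument is a case split on whether $|v_1| = |w_1|$. In the \emph{aligned} case (so also $|v_{\ol 1}|=|w_{\ol 1}|$), the $v_1^i$ and $w_1^i$ block boundaries drift together as $i$ grows, so $E_1$ is essentially enough: I would show that $w_1$ is a cyclic rotation of $v_1$ whose offset is determined by $|v_0| - |w_0|$, that the analogous rotation relates $v_{\ol 1}$ and $w_{\ol 1}$, and that $v_m, w_m$ are the matching fragments of the surrounding periodic words (with $E_2$ used to eliminate any residual ambiguity across the middle). In the \emph{unaligned} case, WLOG $|v_1| < |w_1|$, for $i = 3$ the word $v_1^3$ sits as an internal factor of a short power of $w_1$ via the shift induced by $|v_0| - |w_0|$, giving a common factor of $v_1^\omega$ and $w_1^\omega$ of length at least $|v_1| + |w_1| - \gcd(|v_1|, |w_1|)$. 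By Lemma \ref{lemma:commute}(2) the primitive roots of $v_1$ and $w_1$ are then conjugate; a symmetric argument on the right periodic block gives the same for $v_{\ol 1}$ and $w_{\ol 1}$. The four equations $E_0, E_1, E_2, E_3$ then pin down the conjugation offsets on both sides and the central factors $v_m, w_m$.

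Once this rigid periodic skeleton is in place, $E_{i+1}$ amounts to $E_i$ with an extra copy of $v_1$ inserted on the left and an extra copy of $v_{\ol 1}$ inserted on the right (with the analogous insertions on the $w$-side), and the established conjugacies guarantee that the two pairs of inserted factors match. A straightforward induction starting from $E_3$ then yields $E_i$ for all $i \geq 4$. The main obstacle I expect is justifying that the threshold $i \leq 3$ is tight: one must rule out pathological length patterns around the middle factors $v_m, w_m$ where the Fine--Wilf bound is only met at $i = 3$, not for $i \leq 2$. This is precisely the combinatorial content of the result of Hakala and Kortelainen, and for this step I would follow the detailed case analysis of \cite{Hakala99} rather than reprove it from scratch.
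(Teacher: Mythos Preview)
The paper does not prove this lemma at all: it is quoted verbatim as Theorem~7 of \cite{Hakala99} and used as a black box. So there is no ``paper's own proof'' to compare against; the intended comparison collapses.

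As for your sketch on its own merits: the overall shape (length equations, a case split on $|v_1|=|w_1|$, then a Fine--Wilf/periodicity argument in the unaligned case) is the natural first attack, but it is not a proof. Two concrete issues. First, in the unaligned case you assert that for $i=3$ the block $v_1^3$ sits as an internal factor of a short power of $w_1$, giving enough overlap for Lemma~\ref{lemma:commute}(2). This is not true in general: the offset $|v_0|-|w_0|$ is unconstrained, so the $v_1^i$ block may straddle $w_0$, $w_m$, or even reach into the $w_{\ol 1}^i$ block, and no single value of $i\le 3$ guarantees the clean $|v_1|+|w_1|$-length overlap you want. The real argument has to track several sub-cases depending on where the block boundaries fall, and it is exactly this bookkeeping that makes the result nontrivial. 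Second, your final paragraph explicitly defers the ``main obstacle'' --- tightness at $i\le 3$ and the behaviour around $v_m,w_m$ --- back to the detailed case analysis of \cite{Hakala99}. Since that reference \emph{is} the lemma being stated, this is circular: you are proposing to prove the lemma by citing the lemma.

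In short, your write-up is a reasonable informal gloss on why one might believe the statement, but it is neither a standalone proof nor an alternative to what the paper does, because the paper simply imports the result.
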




\noindent Let $x\in\Sigma^*$, we denote by $x^\omega\in\Sigma^\omega$ the
infinite (countable) concatenation of $x$.

\begin{lemma}\label{lemma:overlap}
Let $x,x_1,x_2,y,z,t_1, t_2,p,q \in\Sigma^*$ with $t_1t_2, p,q$ primitive, then:
\begin{enumerate}
\item if $t_1\prec p$ and $xpt_1 = ypp$ then $xp^{\omega} = yp^\omega$ \label{lemma:overlap2:E}
\item if $xp^{\omega} = y p^\omega$ then $\exists \alpha, \beta\geq 0: xp^{\alpha} = y p^\beta$ \label{lemma:overlap2:O}
\item if $x(t_1t_2)^{\omega} = y (t_2t_1)^\omega$ and $t_1\neq \epsilon$, then $\exists \alpha, \beta\geq 0: x(t_1t_2)^{\alpha} = y (t_2t_1)^\beta t_2$ \label{lemma:overlap2:A}
\item if $x(t_1t_2)^{\omega} = (t_2t_1)^\omega$ and $t_1\neq\epsilon$,
  then $\exists \alpha \geq 0: x= (t_2t_1)^\alpha t_2$. \label{lemma:overlap2:B}
\item if $\forall i\in\{1,2\}$, $x_iy(t_1t_2)^{\omega} = y(t_1t_2)^\omega$ then 
$\exists \alpha_1,\alpha_2 \geq 0, \exists t_3,t_4\in\Sigma^*: t_3t_4=t_1t_2, x_i= (t_4t_3)^{\alpha_i}$ \label{lemma:overlap2:C}
\item if $xp^{\omega} = p^\omega$ then $\exists \alpha \geq 0: x= p^\alpha$ \label{lemma:overlap2:D}
\item if $\exists \alpha > 0$ such that $p^\alpha x
  p^\omega = x p^\omega$, then $x\in p^*$. \label{lemma:overlap2:F}
\item if $\exists \alpha>0,\ q^{\alpha} y p^{\omega} = y p^\omega$ then
  $qy=yp$ \label{lemma:overlap2:G}
\item if $\exists \alpha,\beta,\gamma\geq 1$ such that $x(t_1t_2)^\alpha y (t_1t_2)^\beta z
  = (t_2t_1)^\gamma$, then $y\in (t_1t_2)^*$\label{lem:cheese}.
\end{enumerate}
\end{lemma}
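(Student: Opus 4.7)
The plan is to prove the nine assertions in an order that lets later ones invoke earlier ones, with (6) serving as the foundation. For (6), I would write $x$ as $p^k s$ with $s$ a proper prefix of $p$, and observe that $s p^\omega = p^\omega$ together with the primitivity of $p$ forces $s = \epsilon$. Assertion (2) then follows by setting (WLOG) $x = yu$ and applying (6) to conclude $u \in p^*$. Assertion (4) splits on $|x|$ versus $|t_2|$, using $(t_2 t_1)^\omega = t_2(t_1 t_2)^\omega$: the case $|x| \geq |t_2|$ reduces to (6) after peeling off $t_2$; the case $|x| < |t_2|$ would force $t_1 t_2$ and its conjugate $t_2 t_1$ to agree as primitive roots, which by the commute lemma and primitivity of $t_1 t_2$ contradicts $t_1 \neq \epsilon$. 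Assertion (3) is then immediate from (2) after rewriting $y(t_2t_1)^\omega = y t_2 (t_1 t_2)^\omega$ and taking $p = t_1 t_2$. For (1), I would let $x = yu$ with $|u| = |p| - |t_1| > 0$ (since $t_1 \prec p$ strictly), then unfold $upt_1 = pp$ to obtain $p = ut_1$ and $t_1 u = u t_1$; the commute lemma combined with the primitivity of $p$ forces $t_1 = \epsilon$, whence $x = yp$ gives the conclusion directly.

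For (7), the key observation is that $p^\alpha x p^\omega = xp^\omega$ makes the $\omega$-word $w = xp^\omega$ have period $\alpha|p|$ with $p^\alpha$ as its first $\alpha|p|$ characters; combined with the tail period $|p|$ and the primitivity of $p$, a Fine-and-Wilf argument forces $w = p^\omega$, after which (6) applies. For (8), the same reasoning gives $yp^\omega = q^\omega$; since both $p$ and $q$ are primitive, $|p| = |q|$ and $q$ must be the cyclic rotation of $p$ prescribed by $|y| \bmod |q|$: writing $q = r_1 r_2$ and $p = r_2 r_1$ with $|r_1| = |y| \bmod |q|$ and $y = q^k r_1$, a direct computation gives $qy = q^{k+1} r_1 = yp$.

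Assertion (5) is the main obstacle, because the argument must produce a single split $t_3 t_4 = t_1 t_2$ that works for both $x_1$ and $x_2$. My plan is to show from $x_i w = w$ (with $w = y(t_1 t_2)^\omega$) that $|x_i|$ is a period of $w$ and that $x_i$ is a prefix of $w$, hence $w = x_i^\omega$; combined with the tail's primitive period $t_1 t_2$, Fine-and-Wilf forces $|t_1 t_2|$ to divide $|x_i|$ and the primitive root of $x_i^\omega$ to have length $|t_1 t_2|$. Writing $y = (t_1 t_2)^k t_3$ with $t_3$ the prefix of $t_1 t_2$ of length $|y| \bmod |t_1 t_2|$ and $t_1 t_2 = t_3 t_4$, one then checks that this primitive root is exactly $t_4 t_3$; since this rotation depends only on $y$, the same $t_3, t_4$ work for both $i$. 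Finally, for (9), I would compare positions on both sides of $x(t_1 t_2)^\alpha y (t_1 t_2)^\beta z = (t_2 t_1)^\gamma$, using that $(t_2 t_1)^\gamma$ is periodic with period $|t_1 t_2|$: matching the two $(t_1 t_2)$-blocks against this period forces $|x| \equiv |t_1|$ and $|y| \equiv 0 \pmod{|t_1 t_2|}$, and reading off the characters of $(t_2 t_1)^\gamma$ at the positions covered by $y$ yields $y \in (t_1 t_2)^*$.
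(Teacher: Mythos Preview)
Your overall strategy is sound and the nine items are handled correctly at the level of ideas, but the organization differs from the paper's and two of your justifications contain slips worth fixing.

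\textbf{Comparison with the paper.} The paper bootstraps from item~(1): it proves (1) first (via $p=t_1t_2$ and the commute lemma, exactly as you do), then derives (2) from (1), (3) from (2), and (4) from (3) with $y=\epsilon$. For (6) and (7) it obtains the finite commutation $xp=px$ (resp.\ $xp^\alpha=p^\alpha x$) directly from the $\omega$-equality and applies Lemma~\ref{lemma:commute}. For (9) it first uses (1) and (4) to pin down $x=(t_2t_1)^a t_2$ and $z=t_1(t_2t_1)^b$, then cancels to read off $y\in(t_1t_2)^*$. Your route instead makes (6) the foundation, derives (2), (4), (3), (7) from it, and handles (9) by a direct period/phase argument on $(t_2t_1)^\gamma$. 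Both are valid; your (9) is arguably cleaner since it avoids the detour through the form of $x$ and $z$, while the paper's chain $(1)\Rightarrow(2)\Rightarrow(3)\Rightarrow(4)$ is more economical than your separate treatments.

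\textbf{Two local slips.} In item~(4), your sentence for the case $|x|<|t_2|$ (``would force $t_1t_2$ and $t_2t_1$ to agree as primitive roots'') does not match what actually happens. The correct argument is: from $x(t_1t_2)^\omega=t_2(t_1t_2)^\omega$ and $|x|<|t_2|$ one gets $t_2=xs$ with $s\neq\epsilon$ and $s(t_1t_2)^\omega=(t_1t_2)^\omega$; by your (6), $s\in(t_1t_2)^*$, impossible since $0<|s|<|t_1t_2|$. In item~(5), the claim ``$y=(t_1t_2)^k t_3$ with $t_3$ a prefix of $t_1t_2$'' is false in general (take $t_1t_2=ab$, $y=b$: then $w=(ba)^\omega$ and $y$ is not a prefix of $(ab)^\omega$). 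What is true, and what your argument really needs, is that $w=y(t_1t_2)^\omega$ is purely periodic with primitive period $r$ of length $|t_1t_2|$, that $r$ is the rotation of $t_1t_2$ determined by $|y|\bmod|t_1t_2|$ (so $r=t_4t_3$ with $t_3t_4=t_1t_2$), and that each nonempty $x_i$ lies in $r^*$. Since $r$ depends only on $w$ (hence on $y$), the same $t_3,t_4$ work for both $i$---which is exactly the point. The paper reaches the same conclusion via the common-factor form of Fine--Wilf (Lemma~\ref{lemma:commute}.2), sidestepping the explicit computation of the split.
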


\begin{proof}

\begin{enumerate}
\item Let $t_2$ such that $p = t_1t_2$, then $xt_1t_2t_1 = y t_1t_2 t_1t_2 $, 
by Lemma \ref{lemma:commute} $t_1=\epsilon$ or $t_2=\epsilon$ i.e. 
either $t_1=\epsilon$ or $t_1=p$.

\item Direct consequence of the previous property since we have $xp^{\alpha}t_1 = y p^\beta$ for some $\alpha, \beta>1$ and $t_1\prec p$.

\item By applying the previous property to $x(t_1t_2)^{\omega} = y t_2 (t_1t_2)^\omega$.

\item The second assertion is a direct consequence of the first when taking
$y=\epsilon$. 

\item It is clear if $x_1 = x_2 = \epsilon$. Suppose that $x_1\neq
  \epsilon$. Since $x_1y(t_1t_2)^\omega = y(t_1t_2)^\omega$, we also
  have $x_1x_1y(t_1t_2)^\omega = y(t_1t_2)^\omega$, and more
  generally, for all $\beta\geq 1$, $(x_1)^\beta y (t_1t_2)^\omega =
  y(t_1t_2)^\omega$. By taking $\beta$ large enough, there exists
  $\gamma\geq 0$ such that $(x_1)^\beta$ and
  $(t_1t_2)^{\gamma}$ have a common factor of length at most
  $|x_1| + |t_1t_2| - \text{gcd}(|x_1|,|t_1t_2|)$. By the fundamental
  lemma, there exists $t_3,t_4\in\Sigma^*$ such that $t_3t_4$ is
  primitive, $x_1\in (t_4t_3)^*$ and $t_1t_2\in (t_3t_4)^*$. Since $t_1t_2$ is
  primitive, we have $t_1t_2 = t_3t_4$. Suppose that
  $x_2\neq\epsilon$. Similarly, we can prove that 
  $x_2 = (t'_4t'_3)^{\gamma}$ for some $\gamma > 0$ and
  $t'_3,t'_4$ such that $t_1t_2 = t'_3t'_4$. 
  We have $x_1y (t_1t_2)^\omega = x_2y(t_1t_2)^\omega$, 
  therefore $t_4t_3 = t'_4t'_3$, and $x_2\in (t_4t_3)^*$. 

\item We have $xp^\omega=p^\omega$ so we also have 
$px p^\omega=p^\omega$, therefore $xp^\omega=p xp^\omega$ i.e. 
 $xp = p x$, and by Lemma \ref{lemma:commute}, $x\in p^*$.

\item We clearly have $xp^\alpha=p^\alpha x$ therefore, by Lemma \ref{lemma:commute},
 $x\in p^*$.

\item We have $q^{\alpha} y p^{\omega} = y p^\omega$, this implies that for any $x\geq 0$ 
$q^{x \alpha} y p^{\omega} = y p^\omega$. 
Therefore, there exist $\beta\geq 0$ and $t_1 \prec q$ with $y=q^\beta t_1$. 
Let $t_2\in\Sigma^*$ such that $q=t_1t_2$, we have $(t_1t_2)^{\alpha+\beta}t_1=(t_1t_2)^{\beta}t_1 p^\alpha$.
Therefore because $|p|=|q|=|t_1t_2|$ we have $p=t_2t_1$. This concludes the proof.

\item We assume $t_1, t_2\neq \epsilon$ (otherwise it is obvious).
By \ref{lemma:overlap2:E} and \ref{lemma:overlap2:B} we have that $x= (t_2t_1)^at_2$. 
By the same argument we have $z= t_1(t_2t_1)^b$
So we have: $t_2(t_1t_2)^{\alpha+a} y (t_1t_2)^{\beta+b} t_1=(t_2t_1)^\gamma$.
Therefore $y\in(t_1t_2)^*$.

\end{enumerate}
\qed
\end{proof}

\section{From Functionality to Word Equations}\label{sec:pumpingfun}

Given some words $u_0, \dots u_n, u_m, u_{\ol n}, \dots, u_{\ol
  0}\in\Sigma^*$, $k\in\mathbb{N}$, and a function
$\pi:\{1,\dots,k\}\rightarrow \{1,\dots,n\}$, we denote by $u_\pi$ the
word $u_0u_{\pi(1)}\dots u_{\pi(j)}u_m u_{\ol \pi(j)}\dots u_{\ol
  \pi(1)} u_{\ol 0}$.  We denote by $id_n$ the identity function on
domain $\{1,\dots,n\}$. The following lemma states that if a word $u$
translated into two words $v,w$ is high enough, $u$, $v$ and $w$ can
be decomposed into subwords that can be removed, repeated, or
permutted in parallel in $u$, $v$ and $w$, while preserving the
transduction relation.

\begin{lemma}\label{lem:verticalpumping}
  Let $T$ be a \vpt with $N$ states, and $n\geq 1$.  Let
  $u,v,w\in\Sigma^*$ such that $v,w\in T(u)$ ($u$ is thus well nested)
  and $h(u)>nN^4$.  Then there exist $u_m,v_m,w_m \in\Sigma^*$ and
  $u_i,u_{\ol i},v_i,v_{\ol i},w_i,w_{\ol i}\in\Sigma^*$ for all
  $i\in\{0,\dots,n\}$ such that $u_{id_n} = u$, $v_{id_n} = v$,
  $w_{id_n} = w$ and for all $k\in \mathbb{N}$ and all
  $\pi:\{1,\dots,k\}\rightarrow \{1,\dots,n\}$: $v_\pi,w_\pi\in
  T(u_\pi)$ and $u_i,u_{\ol i}\neq\epsilon$ for all $i=1,\dots,n$.
\end{lemma}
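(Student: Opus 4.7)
The plan is to apply a pigeonhole argument on a deepest nested branch of $u$, tracking both accepting runs of $T$ simultaneously. Let $\rho_v$ and $\rho_w$ be accepting runs of $T$ on $u$ producing $v$ and $w$ respectively. Since $u$ is well-nested with $h(u) > nN^4$, it admits a maximal nested chain of $h(u)$ matched call-return pairs $(c^{(1)}, r^{(1)}), \ldots, (c^{(h)}, r^{(h)})$. To each such pair I associate the 4-tuple $(p^v_j, p^w_j, q^v_j, q^w_j)$ recording the states of the two runs just before $c^{(j)}$ and just after $r^{(j)}$. Since there are only $N^4$ such tuples, pigeonhole yields indices $l_0 < l_1 < \cdots < l_n$ sharing the same 4-tuple, say $(a_v, a_w, b_v, b_w)$.

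From these matched levels I define the decomposition: let $u_0$ be the prefix of $u$ ending just before $c^{(l_0)}$ and $u_{\ol 0}$ the suffix starting just after $r^{(l_0)}$; for $i \in \{1,\ldots,n\}$, let $u_i$ be the factor from $c^{(l_{i-1})}$ up to (excluded) $c^{(l_i)}$, and $u_{\ol i}$ the factor starting just after $r^{(l_i)}$ and ending at (included) $r^{(l_{i-1})}$; let $u_m$ be the factor from $c^{(l_n)}$ to $r^{(l_n)}$ (both included). Define $v_0, v_1, \ldots, v_n, v_m, v_{\ol n}, \ldots, v_{\ol 0}$ (resp.\ the analogous $w$-pieces) as the outputs of $\rho_v$ (resp.\ $\rho_w$) on the corresponding factors. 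By construction, $u_{id_n} = u$, $v_{id_n} = v$, $w_{id_n} = w$; each $u_i$ begins with a call and each $u_{\ol i}$ ends with a return, so both are non-empty.

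It remains to verify the pumping property. For any $\pi : \{1,\ldots,k\} \to \{1,\ldots,n\}$, I construct a run on $u_\pi$ by splicing together the original sub-runs of $\rho_v$ on $u_0, u_{\pi(1)}, \ldots, u_{\pi(k)}, u_m, u_{\ol \pi(k)}, \ldots, u_{\ol \pi(1)}, u_{\ol 0}$ (and analogously for $\rho_w$). The state sequence is consistent because, by the equality of 4-tuples, $\rho_v$ enters and exits every $u_i$ in state $a_v$ and every $u_{\ol i}$ in state $b_v$, while $u_m$ links $a_v$ to $b_v$ as in the original run (and symmetrically for $\rho_w$). Concatenating the outputs of these sub-runs gives exactly $v_\pi$ and $w_\pi$.

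The main obstacle, I expect, is ensuring stack consistency under arbitrary $\pi$, and this is where the visibility discipline is essential. Any call inside $u_i$ that is matched within $u_i$ has its matching return also inside $u_i$, so the sub-run on $u_i$ only touches the portion of the stack it pushes itself; its behaviour is therefore independent of the surrounding stack context. The $l_i - l_{i-1}$ outer unmatched calls of $u_{\pi(j)}$ push a fixed sequence of stack symbols above whatever is already there, and the mirrored layout of $u_\pi$ guarantees that $u_{\ol \pi(j)}$ pops exactly that sequence in LIFO order, regardless of the permutation. Since the call/return classification of each input symbol is shared by both runs, one single decomposition of $u$ pumps $v$ and $w$ in parallel, yielding $v_\pi, w_\pi \in T(u_\pi)$ for every~$\pi$.
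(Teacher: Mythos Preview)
Your proof is correct and follows essentially the same route as the paper: pick a deepest position, apply pigeonhole on the $N^4$ possible quadruples of states (both runs, call side and return side) along the nested chain leading to it, and splice the resulting segments. Your treatment of stack consistency is in fact more explicit than the paper's, which dispatches the point in a single sentence about visibility.
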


\begin{proof}
  Let $T$ be a \vpt, with set of states $Q$. Let $N=|Q|$, $n\geq 1$,
  and $u,v,w\in\Sigma^*$ such that $v,w\in T(u)$ and $h(u)>nN^4$. In
  particular, $u$ is well nested. We denote by $\ell$ the length of
  the word $u$ and write $u=(a_j)_{1\leq j \leq \ell}$, with $a_j\in
  \Sigma$ for all $j$. There exists a position $1\leq j \leq \ell$ in
  $u$ whose height is equal to $h(u)$. We fix such a position
  $j$. Then, for any height $0\leq k \leq h(u)$, we define two
  positions, denoted $\alpha(k)$ and $\beta(k)$. $\alpha(k)$
  (resp. $\beta(k)$) is the largest (resp. the smallest) index $d$,
  such that $d \leq j$ (resp. $d\geq j$) and the height of $u$ in
  position $d$ is equal to $k$. The part of the word concerned by
  mapping $\alpha$ (resp. $\beta$) is represented in blue (resp. in
  red) on Figure~\ref{fig:pumpingvertical}.
  
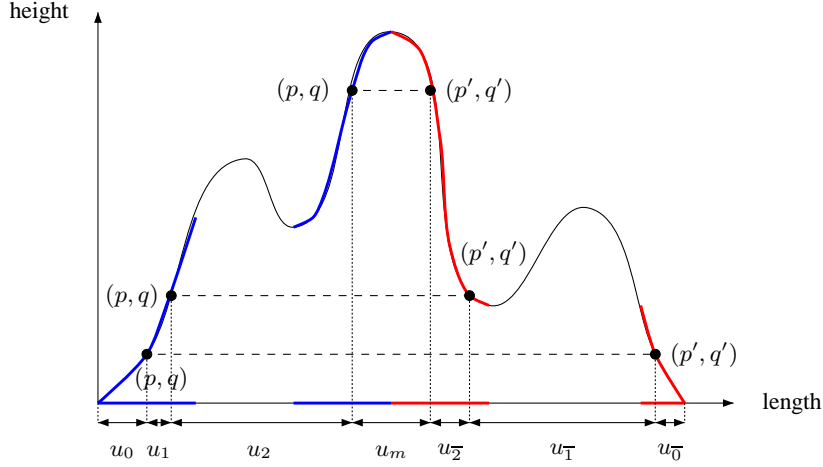
\begin{figure}[!ht]
\centering
\begin{picture}(90,62)(0,-10)
  
  \unitlength=1.3mm
  \drawcurve[AHnb=0](0,0)(5,5)(15,25)(20,18)(30,38)(40,10)(50,20)(57,5)(60,0)
  \drawcurve[AHnb=0,linewidth=0.3,linecolor=blue](0,0)(5,5)(7,10)(10,19)
  \drawcurve[AHnb=0,linewidth=0.3,linecolor=blue](20,18)(22,19)(23,21)(26,32)(28,37)(30,38)

  \drawcurve[AHnb=0,linewidth=0.3,linecolor=red](30,38)(32,37)(33,36)(35,27)(36,16)(38,11)(40,10)
  \drawcurve[AHnb=0,linewidth=0.3,linecolor=red](55.5,10)(57,5)(60,0)

  \drawline(0,0)(0,40)
  \drawline(0,0)(65,0)

  \drawline[AHnb=0,linewidth=0.3,linecolor=blue](0,0)(10,0)
  \drawline[AHnb=0,linewidth=0.3,linecolor=blue](20,0)(30,0)

  \drawline[AHnb=0,linewidth=0.3,linecolor=red](30,0)(40,0)
  \drawline[AHnb=0,linewidth=0.3,linecolor=red](55.5,0)(60,0)

  \node[Nframe=n,NLdist=6,NLangle=180](t1)(0,40){height}
  \node[Nframe=n,NLdist=6,NLangle=0](t2)(65,0){length}

  \node[Nw=1,Nh=1,fillcolor=black,Nmr=1,NLdist=3,NLangle=300](p1)(5,5){$(p,q)$}
  \node[Nw=1,Nh=1,fillcolor=black,Nmr=1,NLdist=4,NLangle=180](p2)(7.5,11){$(p,q)$}
  \node[Nw=1,Nh=1,fillcolor=black,Nmr=1,NLdist=5,NLangle=180](p3)(26,32){$(p,q)$}
  
  \node[Nw=1,Nh=1,fillcolor=black,Nmr=1,NLdist=5,NLangle=0](q1)(57,5){$(p',q')$}
  \node[Nw=1,Nh=1,fillcolor=black,Nmr=1,NLdist=5,NLangle=60](q2)(38,11){$(p',q')$}
  \node[Nw=1,Nh=1,fillcolor=black,Nmr=1,NLdist=5,NLangle=0](q3)(34,32){$(p',q')$}

  \drawedge[dash={0.8}1,AHnb=0](p1,q1){}
  \drawedge[dash={0.8}1,AHnb=0](p2,q2){}
  \drawedge[dash={0.8}1,AHnb=0](p3,q3){}

  \gasset{Nframe=n,Nmr=0,Nw=0,Nh=0}

  \node[Nframe=n](b1)(5,-2){}
  \node[Nframe=n](b2)(7.5,-2){}
  \node[Nframe=n](b3)(26,-2){}
  \node[Nframe=n](b4)(34,-2){}
  \node[Nframe=n](b5)(38,-2){}
  \node[Nframe=n](b6)(57,-2){}

  \drawedge[dash={0.2}1,AHnb=0](b1,p1){}
  \drawedge[dash={0.2}1,AHnb=0](b2,p2){}
  \drawedge[dash={0.2}1,AHnb=0](b3,p3){}  
  \drawedge[dash={0.2}1,AHnb=0](b4,q3){} 
  \drawedge[dash={0.2}1,AHnb=0](b5,q2){} 
  \drawedge[dash={0.2}1,AHnb=0](b6,q1){}  
  \drawline[dash={0.2}1,AHnb=0](0,0)(0,-2)
  \drawline[dash={0.2}1,AHnb=0](60,0)(60,-2)

  \drawline[ATnb=1](0,-2)(5,-2)
  \node(u0)(2.5,-5){$u_0$}

  \drawline[ATnb=1](5,-2)(7.5,-2)
  \node(u1)(6.25,-5){$u_1$}

  \drawline[ATnb=1](7.5,-2)(26,-2)
  \node(u2)(16.5,-5){$u_2$}

  \drawline[ATnb=1](26,-2)(34,-2)
  \node(um)(30,-5){$u_m$}

  \drawline[ATnb=1](34,-2)(38,-2)
  \node(um)(36,-5){$u_{\ol 2}$}

  \drawline[ATnb=1](38,-2)(57,-2)
  \node(um)(47.5,-5){$u_{\ol 1}$}

  \drawline[ATnb=1](57,-2)(60,-2)
  \node(um)(58.5,-5){$u_{\ol 0}$}

\end{picture}
\caption{Form of pumping}
\label{fig:pumpingvertical}
\end{figure}

  As $v,w\in T(u)$, there exists two runs $\varrho_v, \varrho_w$ on
  $u$ in $T$ which produce respectively the outputs $v$ and $w$. We
  denote by $(p_i)_{0\leq i \leq \ell}$ (resp. $(q_i)_{0\leq i \leq
    \ell}$) the states we encounter along $\varrho_v$
  (resp. $\varrho_w$). As $h(u)>nN^4$, there exists two pairs of states 
  $(p,p'), (q,q')\in Q^2$ such that 
$$
|\{0\leq k \leq h(u) \mid p_{\alpha(k)}=p \text{ and }
p_{\beta(k)}=p' \text{ and }
q_{\alpha(k)}=q \text{ and }
q_{\beta(k)}=q'\}| > n
$$ 
We denote by $0\leq k_1<\ldots<k_{n+1}\leq h(u)$ the $n+1$ different
heights associated with the pairs $(p,p')$ and $(q,q')$. For each
$i=0,\ldots,n-1$, this means that the two runs pass simultaneously in
states $p$ and $q$ before a call transition with a height equal to
$k_i$, and that the height of the stack will never be smaller than
$k_i$, until reaching again states $p$ and $q$ with a stack of height
$k_{i+1}$. A symmetric property can be stated for states $p'$ and
$q'$. As a consequence, we obtain $n$ fragments which behave as
synchronized ``call loops'' around $p$ and $q$ with corresponding
``return loops'' around $p'$ and $q'$. This situation is described on
Figure~\ref{fig:pumpingvertical}.

Then, we can define the different fragments of $u$ as follows: (see
Figure~\ref{fig:pumpingvertical})
\begin{itemize}
\item $u_0 = a_1 \ldots a_{\alpha(k_1)-1}$,
\item $\forall 1\leq i \leq n, u_i = a_{\alpha(k_i)} \ldots a_{\alpha(k_{i+1})-1}$,
\item $u_m = a_{\alpha(k_{n+1})}\ldots a_{\beta(k_{n+1})-1}$,
\item $\forall 1\leq i \leq n, u_{\ol i} = a_{\beta(k_{i+1})} \ldots a_{\beta(k_{i})-1}$,
\item $u_{\ol 0} = a_{\beta(k_1)} \ldots a_\ell$.
\end{itemize}

We immediately obtain $u=u_{id_n}$ and $u_i,u_{\ol i} \neq\epsilon$
for all $i=1,\dots,n$. The decompositions of $v$ and $w$ are obtained
by considering the outputs produced by the corresponding fragments of
$u$ on the two runs $\varrho_v$ and $\varrho_w$.

Finally, the property of commutativity ( $v_\pi,w_\pi\in T(u_\pi)$ for
all $\pi:\{1,\dots,k\}\rightarrow \{1,\dots,n\}$) easily follows from
the fact that for each $i\in \{1,\dots,n\}$, the fragments of the runs
associated with $u_i$ and $u_{\ol i}$ do not depend on the content of
the stack as $T$ is a \emph{visibly} pushdown transducer.
\qed
\end{proof}



The following lemma states that if a word $u$ with at least two
outputs is high enough, there is a word $u'$ strictly less higher with
at least two outputs.

\begin{lemma}\label{lem:pumpvert}
  Let $T$ be a \vpt with $N$ states and $u\in \dom(T)$ such that
  $|T(u)|>1$ and $h(u)> 8N^4$. There
  exists $u'\in \dom(T)$ such that $|T(u')|\geq 2$ and $|u'| < |u|$.
\end{lemma}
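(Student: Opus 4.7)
The plan is to invoke Lemma~\ref{lem:verticalpumping} with $n=8$, which is permitted by the hypothesis $h(u)>8N^4$. Fix two distinct outputs $v, w \in T(u)$. The lemma produces decompositions
\[ u = u_0 u_1 \cdots u_8 u_m u_{\bar 8} \cdots u_{\bar 1} u_{\bar 0},\]
and analogous ones for $v$ and $w$, with each $u_i, u_{\bar i}$ non-empty, such that for every map $\pi\colon \{1,\dots,k\}\to\{1,\dots,8\}$ both $v_\pi$ and $w_\pi$ belong to $T(u_\pi)$.

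First I would try the empty pumping $\pi=\emptyset$: then $u_\emptyset = u_0 u_m u_{\bar 0}$ is strictly shorter than $u$ (by at least $|u_1|+|u_{\bar 1}|$) and carries both outputs $v_0 v_m v_{\bar 0}$ and $w_0 w_m w_{\bar 0}$. If these differ we are done; otherwise we obtain the word equation $(\star)\colon v_0 v_m v_{\bar 0} = w_0 w_m w_{\bar 0}$.

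Next I would apply a Hakala--Kortelainen style pumping to a single loop. After reordering heights in the proof of Lemma~\ref{lem:verticalpumping}, I may assume $|u_1|+|u_{\bar 1}|$ is minimal among $|u_i|+|u_{\bar i}|$. For each $k\in\{0,1,2,3\}$ take $\pi^k$ to be the constant map with value $1$; then $u_{\pi^k}=u_0 u_1^k u_m u_{\bar 1}^k u_{\bar 0}$, and a short length count (using that the seven other loops are skipped, which is precisely where the factor $8$ in the hypothesis is exploited) gives $|u_{\pi^k}|<|u|$ for each such $k$. If any of the four pumpings yields distinct outputs we are done; otherwise Lemma~\ref{lem:finland} upgrades the four equalities into an infinite family
\[ v_0 v_1^k v_m v_{\bar 1}^k v_{\bar 0} = w_0 w_1^k w_m w_{\bar 1}^k w_{\bar 0} \qquad \text{for all } k\in\mathbb{N}. \]
Iterating the same reasoning on the remaining loops yields (unless a shorter counterexample is exhibited along the way) one such infinite family per loop.

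The main obstacle is to close the argument: combining $(\star)$, these infinite families, and the assumption $v\neq w$, derive a contradiction. The intended strategy is to push the equations into infinite words: letting $k\to\infty$ informally gives $v_0 v_1^\omega = w_0 w_1^\omega$ on the left and a symmetric relation on the right, and then Lemma~\ref{lemma:overlap} (parts~\ref{lemma:overlap2:O}--\ref{lemma:overlap2:D}, on overlapping powers of primitive words and on common primitive roots) pins down the relationship between $v_0,w_0,v_1,w_1$ up to a common primitive-root structure, and likewise for the right-hand parts. Propagating these structural constraints across all eight loops and matching them at the interface with $u_m$ should ultimately force the full decompositions of $v$ and $w$ to coincide, yielding the contradiction $v=w$. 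This word-combinatorial closing is the technical heart of the proof.
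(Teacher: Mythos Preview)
Your overall architecture matches the paper's: invoke Lemma~\ref{lem:verticalpumping} with $n=8$, assume every shorter pumping $u_\pi$ has $v_\pi=w_\pi$, and derive the contradiction $v=w$ by word combinatorics. The paper packages that last step as a separate Theorem~\ref{thm:sys}, whose proof occupies all of Section~\ref{sec:equations}.

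The gap is in your plan for that closing step. You propose to extract, for each loop $i$ separately, the infinite family
\[
v_0 v_i^{\,k} v_m v_{\bar i}^{\,k} v_{\bar 0}=w_0 w_i^{\,k} w_m w_{\bar i}^{\,k} w_{\bar 0}\qquad(k\in\mathbb N),
\]
and then argue via $v_0 v_i^\omega=w_0 w_i^\omega$ that all the pieces share a common primitive-root structure. There are two problems. First, for a non-minimal loop $i$ you cannot guarantee that the pumpings with $k=2,3$ are shorter than $u$: you need $3|u_i u_{\bar i}|<\sum_{j}|u_j u_{\bar j}|$, which is not implied by minimality of $\ell$ alone. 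Second, and more seriously, single-loop families are not what the paper uses, and there is no evident way to finish with only these. The paper's system $\Sys$ contains the \emph{mixed} equations
\[
v_0 v_i (v_\ell)^a v_m (v_{\bar\ell})^a v_{\bar i} v_{\bar 0}=w_0 w_i (w_\ell)^a w_m (w_{\bar\ell})^a w_{\bar i} w_{\bar 0}
\quad\text{and}\quad
v_0 (v_\ell)^a v_i (v_\ell)^b v_m\cdots=w_0 (w_\ell)^a w_i (w_\ell)^b w_m\cdots
\]
(the systems $\Sys_3,\Sys_4$), and in the case $|v_\ell|=|w_\ell|$ also the ``drop one loop'' equation $\Sys_5$. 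These cross-loop equations are where the work happens: for instance, in Lemma~\ref{lem:allform} the form $v_i\in(t_1t_2)^*$ is obtained by sandwiching $v_i$ between two copies of $v_\ell$ inside a power of $w_{\bar\ell}$, which is impossible to arrange from single-loop families. The constant $8$ in the hypothesis $h(u)>8N^4$ is calibrated exactly so that these mixed pumpings (up to $6$ copies of $\ell$ together with one copy of $i$) are still strictly shorter than $u$; your single-loop scheme would only require $n=4$, which already signals that you are not exploiting the full strength of the hypothesis.

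So the reduction to word equations is right, but the system of equations you intend to analyse is too thin. You should instead take the full system $\Sys=\{v_\pi=w_\pi:|u_\pi|<|u|\}$ and follow the case analysis of Section~\ref{sec:equations} (split on $|v_\ell|\gtrless|w_\ell|$, then on various emptiness conditions), where the mixed equations $\Sys_3,\Sys_4,\Sys_5$ are the essential ingredients.
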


\begin{proof}
    Let $v,w\in T(u)$ such that $v\neq w$. Thanks to
    Lemma \ref{lem:verticalpumping}, there exist $u_m,v_m,w_m \in\Sigma^*$, and 
    for all $i\in\{0,\dots,8\}$, there exist $u_i,u_{\ol i},v_i,v_{\ol i},w_i,w_{\ol i}\in\Sigma^*$, 
    such that $u_{id_8} = u$, $v_{id_8} = v$, $w_{id_8} = w$ and for all
    $k\in \mathbb{N}$ and all $\pi:\{1,\dots,k\}\rightarrow \{1,\dots,n\}$:
    $v_\pi,w_\pi\in T(u_\pi)$ and $u_i,u_{\ol i}\neq\epsilon$ for all $i=1,\dots,n$.
    We prove that there exist $k\in\{0,\dots,7\}$ and $\pi:\{1,\dots,
    j\}\rightarrow \{1,\dots,{8}\}$ such that $v_\pi\neq w_\pi$ and $|u_\pi|<|u|$.
    We proceed by contradiction. Suppose that for all $k\in\{0,\dots,7\}$ and for all 
    $\pi:\{1,\dots,k\}\rightarrow \{1,\dots,{8}\}$ such that $|u_\pi|<|u|$ we
    have $v_\pi = w_\pi$. This defines a system of equations
    $\Sys = \{ v_\pi = w_\pi\ |\ \pi:\{1,\dots,k\}\rightarrow \{1,\dots,8\},\ |u_\pi|< |u|\}$.
    We show in the next section that it implies $v=w$ (Theorem \ref{thm:sys}).
\end{proof}

\section{Word Equations}\label{sec:equations}

In this section, we fix some $n\geq 8$, some
words $u_m,v_m,w_m\in\Sigma^*$ and for all $i\in\{0,\dots,n\}$, 
we fix $u_i,v_i,w_i,u_{\ol i},v_{\ol i},w_{\ol i}\in \Sigma^*$ such that
$u_i,u_{\ol i}\neq\epsilon$. We consider the system
$\Sys = \{ v_\pi = w_\pi\ |\ \pi:\{1,\dots,k\}\rightarrow \{1,\dots,n\},\ |u_\pi|< |u_{id_n}|\}$.
The main result we prove is the following:

\begin{theorem}\label{thm:sys}
    If $\Sys$ holds, then $v_{id_n} = w_{id_n}$.
\end{theorem}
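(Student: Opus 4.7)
The goal is to deduce $v_{id_n} = w_{id_n}$ from the equations of $\Sys$, namely the $v_\pi = w_\pi$ with $|u_\pi| < |u_{id_n}|$. Since the $u_i, u_{\ol i}$ are non-empty, any $\pi$ whose associated cost $M_\pi := \sum_j (|u_{\pi(j)}| + |u_{\ol{\pi(j)}}|)$ is strictly less than $L := \sum_{i=1}^n (|u_i|+|u_{\ol i}|)$ is admissible; in particular every $\pi$ whose image is a proper subset of $\{1,\dots,n\}$ is admissible. The target equation has cost $M_{id_n} = L$, saturating the bound, so it sits just outside $\Sys$ and at least one application of Lemma \ref{lem:finland} is unavoidable.

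For each $i_0 \in \{1,\dots,n\}$, I would first select an admissible $\pi$ which omits $i_0$ and for which $M_\pi + 3 L_{i_0} < L$, where $L_{i_0} := |u_{i_0}| + |u_{\ol{i_0}}|$. Pumping $i_0$ immediately before $u_m$ (and $\ol{i_0}$ immediately after) yields a family $\pi_t$ of cost $M_\pi + t L_{i_0}$; the four instances $t \in \{0,1,2,3\}$ then lie in $\Sys$, and the equations $v_{\pi_t} = w_{\pi_t}$ fit exactly the template of Lemma \ref{lem:finland}, so they hold for every $t \geq 0$. The hypothesis $n \geq 8$ is what allows such a $\pi$ to be found uniformly in $i_0$: when $L_{i_0}$ is small relative to $L$ one takes $\pi$ empty, whereas when $L_{i_0}$ is large one must choose $\pi$ as a careful subset of the remaining indices to provide the required $3L_{i_0}$ of slack, and eight indices turn out to suffice regardless of how the $L_i$'s are distributed.

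From each resulting infinite family, the periodicity and conjugacy clauses of Lemma \ref{lemma:overlap}, together with Lemma \ref{lemma:commute}, would force the four words $v_{i_0}, w_{i_0}, v_{\ol{i_0}}, w_{\ol{i_0}}$ into tight periodic relations anchored by a common primitive root determined by the ambient words appearing before and after the pumped block. Gathering these constraints over every $i_0$ and combining them with the base equation $v_0 v_m v_{\ol 0} = w_0 w_m w_{\ol 0}$ (which is $v_\pi = w_\pi$ for $\pi = \emptyset$), I would assemble $v_{id_n} = w_{id_n}$ by induction on the set of indices already inserted into the equation, each insertion being legitimated by the periodicities just derived.

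The main obstacle will be the combinatorial bookkeeping: reconciling the different orderings that pumping and reordering produce, ensuring that the periodic structures extracted for different $i_0$ share a common primitive root compatible with the fixed boundary and middle words $v_0, v_m, v_{\ol 0}$ and $w_0, w_m, w_{\ol 0}$, and gluing them into the single target equation without introducing conflicting constraints. The threshold $8$ is precisely what the length-slack argument demands in order to work uniformly for every $i_0$.
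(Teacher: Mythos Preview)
Your proposal has a concrete gap in the very first step. You claim that for every $i_0$ one can find an admissible $\pi$ with $M_\pi + 3L_{i_0} < L$, and that ``eight indices turn out to suffice regardless of how the $L_i$'s are distributed.'' This is false. Consider $n=8$ with $L_{i_0}=1000$ and $L_j=2$ for the seven other indices; then $L=1014$ and even the empty $\pi$ gives $3L_{i_0}=3000 > L$. Enlarging $\pi$ only increases $M_\pi$, so your remark that one ``chooses $\pi$ as a careful subset of the remaining indices to provide the required $3L_{i_0}$ of slack'' goes the wrong way: more indices in $\pi$ means less slack, not more. In short, a large index cannot be pumped three times while staying below the budget $L$, so Lemma~\ref{lem:finland} is simply unavailable for such $i_0$ under your scheme.

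The paper circumvents this by never pumping an arbitrary index. It fixes the \emph{minimal} index $\ell$ (so $L_\ell \le L_j$ for all $j$) and only ever pumps $\ell$, possibly in the presence of one other fixed index $i$. Since $n\ge 8$, six copies of $\ell$ together with one copy of $i$ cost at most $L_i + 6L_\ell \le L_i + \sum_{k=1}^{6} L_{j_k} < L$ for six indices $j_k$ distinct from $i$, which is exactly the slack computation that legitimates the four instances $a,b\in\{0,1,2,3\}$ needed by Lemma~\ref{lem:finland}. The information about the other $v_i,w_i,v_{\ol i},w_{\ol i}$ is then extracted not by pumping them, but from the overlap structure in the equations $v_0(v_\ell)^a v_i(v_\ell)^b v_m\cdots = w_0(w_\ell)^a w_i(w_\ell)^b w_m\cdots$ via Lemma~\ref{lemma:overlap}. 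This in turn requires a case split on whether $|v_\ell|=|w_\ell|$ or $|v_\ell|>|w_\ell|$ (the latter being where the overlaps are asymmetric and force everything into powers of a common primitive root), a distinction your outline does not anticipate and which drives most of the actual work.
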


We let $\ell\in\{1,\dots,n\}$ such that $|u_\ell u_{\ol \ell}|\leq |u_iu_{\ol i}|$ for
all $i\in\{1,\dots,n\}$. We consider several cases to prove Theorem \ref{thm:sys}:
$$
(1)\ |v_\ell| = |w_\ell|\qquad (2)\ |v_\ell| > |w_\ell|\qquad (3)\ |w_\ell| > |v_\ell|
$$

\noindent Cases $2$ and $3$ being symmetric, we consider cases $1$ and $2$ only in
the two following subsections.

\subsection{Proof of Theorem \ref{thm:sys}: case $|v_\ell|>|w_\ell|$}

We denote by $\Sys[|v_\ell|>|w_\ell|]$ the system $\Sys$ with the assumption $|v_\ell|>|w_\ell|$ and from
now one we assume that this system holds. We consider the following set of
equations, defined for all $a,b\geq 0$ and all $i\in\{1,\dots,n\}$:

$$
\left\{\begin{array}{rclr}
v_0v_mv_{\ol 0} & = & w_0w_mw_{\ol 0} & (1)\\
v_0(v_\ell)^av_m(v_{\ol \ell})^av_{\ol 0} & = & w_0(w_\ell)^aw_m(w_{\ol \ell})^aw_{\ol 0} & (2)\\
v_0v_i(v_\ell)^av_m(v_{\ol \ell})^av_{\ol i}v_{\ol 0} & = &
w_0w_i(w_\ell)^aw_m(w_{\ol \ell})^aw_{\ol i}w_{\ol 0}  & (3) \\
v_0(v_\ell)^av_i(v_\ell)^bv_m(v_{\ol \ell})^bv_{\ol i}(v_{\ol \ell})^av_{\ol 0} & = &
w_0(w_\ell)^aw_i(w_\ell)^bw_m(w_{\ol \ell})^bw_{\ol i}(w_{\ol \ell})^aw_{\ol 0} & (4) \\
\end{array}\right.
$$

For $k\in\{1,2,3,4\}$, we denote by $\Sys_k$ the subsystem that
of equations of type $k$. For instance, $\Sys_2$ is the system
of equations $\{v_0(v_\ell)^av_m(v_{\ol \ell})^av_{\ol 0} = 
w_0(w_\ell)^aw_m(w_{\ol \ell})^aw_{\ol 0}\ |\ a\in\mathbb{N}\}$.

\begin{lemma}
For all $k\in\{1,\dots,4\}$, $\Sys_k$ holds.
\end{lemma}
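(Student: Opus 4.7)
The plan is to realise each equation of $\Sys_k$ as $v_\pi=w_\pi$ for a well-chosen $\pi:\{1,\dots,k\}\to\{1,\dots,n\}$, and to show that either this equation already belongs to $\Sys$ (because $|u_\pi|<|u_{id_n}|$), or that it follows from finitely many such equations via Lemma~\ref{lem:finland}. The quantitative driver throughout is the minimality of $|u_\ell u_{\ol \ell}|$, which yields $\sum_{j=1}^n |u_j u_{\ol j}|\ge n|u_\ell u_{\ol \ell}|$; this provides the ``budget'' $|u_{id_n}|-|u_0 u_m u_{\ol 0}|$ large enough to accommodate a bounded number of copies of $u_\ell u_{\ol \ell}$, plus possibly one extra factor $u_i u_{\ol i}$.

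First, $\Sys_1$ corresponds to the empty $\pi$: then $u_\pi=u_0 u_m u_{\ol 0}$ is strictly shorter than $u_{id_n}$ because each $u_j, u_{\ol j}$ (for $j\ge 1$) is nonempty, so the single equation lies in $\Sys$. For $\Sys_2$, the $\pi$ constantly equal to $\ell$ of length $a$ gives $|u_\pi|=|u_0 u_m u_{\ol 0}|+a|u_\ell u_{\ol \ell}|<|u_{id_n}|$ as soon as $a<n$, and in particular for $a\in\{0,1,2,3\}$. Lemma~\ref{lem:finland} then extends the equation to every $a\in\mathbb{N}$.

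For $\Sys_3$ and $\Sys_4$, the case $i=\ell$ reduces to an equation of $\Sys_2$ (with parameter $a+1$ and $a+b+1$ respectively), so I may assume $i\neq\ell$. For $\Sys_3$, the $\pi$ picking $i$ once and $\ell$ exactly $a$ times satisfies $|u_\pi|<|u_{id_n}|$ iff $a|u_\ell u_{\ol \ell}|<\sum_{j\neq i}|u_j u_{\ol j}|$; the right-hand side is at least $(n-1)|u_\ell u_{\ol \ell}|$, so this holds for $a\le n-2$, in particular for $a\in\{0,1,2,3\}$. Applying Lemma~\ref{lem:finland} with $v_i$ and $v_{\ol i}$ absorbed into the outer prefix and suffix (and analogously for $w$) gives $\Sys_3$ for all $a$. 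For $\Sys_4$, the $\pi$ of shape ``$a$ copies of $\ell$, then $i$, then $b$ copies of $\ell$'' satisfies $|u_\pi|<|u_{id_n}|$ iff $(a+b)|u_\ell u_{\ol \ell}|<\sum_{j\neq i}|u_j u_{\ol j}|$, which holds whenever $a+b\le n-2$. Since $n\ge 8$, all sixteen seed equations with $a,b\in\{0,1,2,3\}$ lie in $\Sys$. Two successive applications of Lemma~\ref{lem:finland} then finish the job: first fix $b\in\{0,1,2,3\}$ and vary $a$ (using $v_m\leftarrow v_i(v_\ell)^b v_m(v_{\ol \ell})^b v_{\ol i}$), extending to all $a\ge 0$; then for each fixed $a$ vary $b$ (using $v_0\leftarrow v_0(v_\ell)^a v_i$ and $v_{\ol 0}\leftarrow v_{\ol i}(v_{\ol \ell})^a v_{\ol 0}$).

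The main obstacle is bookkeeping: identifying the correct $\pi$ for each equation type, verifying the length bound afforded by minimality of $|u_\ell u_{\ol \ell}|$, and slotting the words at hand into the template $v_0(v_1)^i v_m(v_{\ol 1})^i v_{\ol 0}$ of Lemma~\ref{lem:finland}. No combinatorial work beyond these verifications is required. Finally, the bound $n\ge 8$ is exactly tight: $\Sys_4$ needs $a+b\le 6$ with $a,b\in\{0,1,2,3\}$, hence $n-2\ge 6$.
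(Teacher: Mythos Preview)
Your proof is correct and follows essentially the same approach as the paper: realise each seed equation as some $v_\pi=w_\pi$ with $|u_\pi|<|u_{id_n}|$ using the minimality of $|u_\ell u_{\ol\ell}|$, then extend to all $a$ (and, for $\Sys_4$, all $b$) by two successive applications of Lemma~\ref{lem:finland}. Your treatment is in fact slightly more explicit than the paper's, handling the degenerate case $i=\ell$ and spelling out the exact budget inequality $(a+b)|u_\ell u_{\ol\ell}|<\sum_{j\neq i}|u_j u_{\ol j}|$.
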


\begin{proof}
First, $|u_0u_mu_{\ol 0}| < |u_{id_n}|$ and $u_0u_mu_{\ol 0} = u_\pi$ where
$\pi$ is the function with empty domain. Since $\Sys$ holds by hypothesis, 
this equation holds.

We prove that $\Sys_4$ holds, as $\Sys_3$ is a particular case of $\Sys_4$ and
$\Sys_2$ is a similar but easier case. First, $\Sys_4$ holds for all $a,b\in\{0,1,2,3\}$. Indeed, since $n\geq 8$, there are six pairwise
different integers $i_1,\dots,i_6\in\{1,\dots,n\}$ such that
$i_k\neq i$ for all $k\in\{1,\dots,6\}$ and
$6|u_{\ell} u_{\ol \ell}| + |u_iu_{\ol i}| \leq |u_iu_{\ol i}| + \sum_{k=1}^6
|u_{i_k}u_{\ol {i_k}}| < |u_{id_n}|$.
Second, by Lemma \ref{lem:finland}, $\Sys_4$ holds for all $a\in\mathbb{N}$ and $b=0,1,2,3$.
If we fix $a_0\in\mathbb{N}$, it holds for $a=a_0$ and $b=0,1,2,3$. Thus by Lemma \ref{lem:finland}
it holds for $a=a_0$ and all $b\in\mathbb{N}$.\qed
\end{proof}


\begin{proposition}\label{prop:lengtheq}
    For all $i\in\{1,\dots,n\}$, $|v_iv_{\ol i}| = |w_iw_{\ol i}|$.
\end{proposition}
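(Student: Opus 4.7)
The proposition is a length-only statement, so the plan is purely arithmetical: extract it by counting symbols in two of the word equations of the preceding lemma, without touching word combinatorics at all. Since $\Sys_k$ holds for each $k\in\{1,\dots,4\}$ (as just established), we have in particular the two equalities of words
\[
v_0v_mv_{\ol 0} = w_0w_mw_{\ol 0}
\qquad\text{and}\qquad
v_0v_iv_mv_{\ol i}v_{\ol 0} = w_0w_iw_mw_{\ol i}w_{\ol 0},
\]
the first being equation $(1)$ and the second being the instance of equation $(3)$ obtained by setting $a=0$, for any fixed $i\in\{1,\dots,n\}$.

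Since equal words have equal lengths, this gives
\[
|v_0|+|v_m|+|v_{\ol 0}| = |w_0|+|w_m|+|w_{\ol 0}|
\]
and
\[
|v_0|+|v_i|+|v_m|+|v_{\ol i}|+|v_{\ol 0}| = |w_0|+|w_i|+|w_m|+|w_{\ol i}|+|w_{\ol 0}|.
\]
Subtracting the first equation from the second yields $|v_i|+|v_{\ol i}| = |w_i|+|w_{\ol i}|$, that is, $|v_iv_{\ol i}| = |w_iw_{\ol i}|$, which is the claim.

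Since $i\in\{1,\dots,n\}$ was arbitrary, this finishes the proof; nothing beyond $\Sys_1$ and the $a=0$ instance of $\Sys_3$ is needed. There is no genuine obstacle here: the only thing to check is that $\Sys_3$ with $a=0$ is indeed covered by the hypothesis $\Sys$ (equivalently, that the word $u_0u_iu_mu_{\ol i}u_{\ol 0}$ is strictly shorter than $u_{id_n}$), which is guaranteed by the choice of $\ell$ and by the fact that $n\geq 8$, so that omitting all indices except $i$ drops at least the $n-1\geq 7$ nonempty factors $u_ju_{\ol j}$ for $j\neq i$.
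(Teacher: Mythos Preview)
Your proof is correct and essentially identical to the paper's: the paper writes only ``This is implied by $\Sys_1$ and $\Sys_4$ (with $a=b=0$)'', and $\Sys_4$ with $a=b=0$ is exactly the same word equation $v_0v_iv_mv_{\ol i}v_{\ol 0} = w_0w_iw_mw_{\ol i}w_{\ol 0}$ as your $\Sys_3$ with $a=0$. Your explicit length-subtraction and the check that the relevant $u_\pi$ is shorter than $u_{id_n}$ simply spell out what the paper leaves implicit.
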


\begin{proof}
    This is implied by $\Sys_1$ and $\Sys_4$ (with $a=b=0$).\qed
\end{proof}

Thanks to $\Sys_1,\dots,\Sys_4$ we can characterize the form of $v_i,w_i,w_{\ol i}$
for all $i$ and prove a property on $v_m,w_m$. This characterization is then used
to prove $v_{id_n} = w_{id_n}$. Wlog we assume that 
$v_0=\epsilon$ or $w_0=\epsilon$, and $v_{\ol 0}=\epsilon$ or $w_{\ol 0} = \epsilon$.
Otherwise we can remove their common prefixes in $\Sys_1,\dots,\Sys_4$. 

\begin{lemma}\label{lem:allform}
    If there exist $k\in\{1,\dots,n\}$ such that $w_k\neq\epsilon$.
    Then there exist
    $t_1,t_2,t_3,t_4\in\Sigma^*$, $\alpha_0,\beta_0\geq 0$, 
    $\alpha_i,\beta_i,\beta_{\ol i}\geq 0$ for all $i\in\{1,\dots,n\}$
    such that $t_1t_2$ is primitive and for all $i\in\{1,\dots,n\}$:
    $$
    \begin{array}{c@{\quad}c@{\quad}c@{\quad}c@{\quad}ccccc}
    t_1t_2 = t_3t_4 & t_4t_3w_m = w_mt_2t_1 & v_i  =
    (t_1t_2)^{\alpha_i} &
    w_i  =  (t_4t_3)^{\beta_i} &
    w_{\ol i}  =  (t_2t_1)^{\beta_{\ol i}}
    \end{array}
    $$
    and if $w_0 = \epsilon$, then $v_0  = (t_4t_3)^{\alpha_0}t_4$, and
    if $v_0 = \epsilon$, then  $w_0 =(t_3t_4)^{\beta_0}t_3$.
\end{lemma}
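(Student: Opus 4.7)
The plan is to extract periodic structure from the system in stages, exploiting $|v_\ell|>|w_\ell|$ and, by Proposition~\ref{prop:lengtheq}, $|w_{\ol\ell}|>|v_{\ol\ell}|$, to force the relevant blocks to be powers of a common primitive word read in possibly different conjugates. In outline, $\Sys_2$ determines the primitive root $p$ and the two splits $t_1|t_2$ and $t_3|t_4$ of $p$; the normalization $v_0=\epsilon$ or $w_0=\epsilon$ fixes the form of the boundary word; $\Sys_3$ propagates the structure to every index $i\in\{1,\dots,n\}$; and comparing $\Sys_2$ at two consecutive values of $a$ yields the commutation $t_4t_3w_m=w_mt_2t_1$.

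First I would work with $\Sys_2$ alone. Since $v_0(v_\ell)^a v_m(v_{\ol\ell})^a v_{\ol 0}=w_0(w_\ell)^a w_m(w_{\ol\ell})^a w_{\ol 0}$ holds for every $a$ and $|v_\ell|>|w_\ell|$, the prefix $v_0(v_\ell)^a$ of the LHS coincides with a prefix of the RHS lying entirely inside $w_0(w_\ell)^aw_m$ once $a$ is large enough. Letting $a\to\infty$ yields the $\omega$-equality $v_0(v_\ell)^\omega=w_0(w_\ell)^\omega$, and a sufficiently long common factor of $(v_\ell)^a$ and $(w_\ell)^a$ inside it satisfies the hypothesis of the second clause of Lemma~\ref{lemma:commute}; hence the primitive roots of $v_\ell$ and $w_\ell$ are conjugate. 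Fixing the primitive root $p$ of $v_\ell$ and the split $p=t_1t_2=t_3t_4$ dictated by the alignment offset $|v_0|-|w_0|$ modulo $|p|$ gives $v_\ell=(t_1t_2)^{\alpha_\ell}$ and $w_\ell=(t_4t_3)^{\beta_\ell}$. The analogous argument on the suffix of $\Sys_2$ using $|w_{\ol\ell}|>|v_{\ol\ell}|$ yields $w_{\ol\ell}=(t_2t_1)^{\beta_{\ol\ell}}$. In the subcase $w_0=\epsilon$, rewriting the inner $\omega$-word via $t_1t_2=t_3t_4$ and applying clause~\ref{lemma:overlap2:B} of Lemma~\ref{lemma:overlap} to $v_0(t_3t_4)^\omega=(t_4t_3)^\omega$ yields $v_0=(t_4t_3)^{\alpha_0}t_4$; the subcase $v_0=\epsilon$ is symmetric and gives $w_0=(t_3t_4)^{\beta_0}t_3$.

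Next I would propagate the structure to every $i\in\{1,\dots,n\}$ by running the same limit argument on $\Sys_3$: from $v_0v_i(v_\ell)^a v_m\cdots=w_0w_i(w_\ell)^a w_m\cdots$ I get $v_0v_i(t_1t_2)^\omega=w_0w_i(t_4t_3)^\omega$, and combining this with the previously derived $v_0(t_1t_2)^\omega=w_0(t_4t_3)^\omega$ and cancelling common prefixes reduces matters to an identity to which clauses~\ref{lemma:overlap2:F}, \ref{lemma:overlap2:D} and~\ref{lemma:overlap2:C} of Lemma~\ref{lemma:overlap} apply, forcing $v_i=(t_1t_2)^{\alpha_i}$ and $w_i=(t_4t_3)^{\beta_i}$. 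The symmetric analysis on the suffix of $\Sys_3$ gives $w_{\ol i}=(t_2t_1)^{\beta_{\ol i}}$. Finally, comparing $\Sys_2$ at $a$ and $a+1$ inserts $(t_4t_3)^{\beta_\ell}$ just before $w_m$ on the RHS, $(t_1t_2)^{\alpha_\ell}$ just before $v_m$ on the LHS, and the analogous blocks on the right; tracking the displacement of $w_m$ and using the established periodicity on both sides produces an $\omega$-identity of the form $(t_4t_3)^\alpha w_m(t_2t_1)^\omega=w_m(t_2t_1)^\omega$ to which clause~\ref{lemma:overlap2:G} of Lemma~\ref{lemma:overlap} applies, yielding $t_4t_3w_m=w_mt_2t_1$.

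The hard part will be the bookkeeping of conjugates: the same primitive word $p$ admits two splits $t_1|t_2$ and $t_3|t_4$, one dictated by the left alignment and the other by the right, and every identity in the conclusion must be expressed in the correct conjugate at the correct site ($t_1t_2$ for the $v$-blocks on the left, $t_4t_3$ for the $w$-blocks on the left, $t_2t_1$ for the $w$-blocks on the right, $t_3t_4=t_1t_2$ for the boundary rewrites). Choosing a labelling that is simultaneously consistent with $\Sys_1,\Sys_2,\Sys_3,\Sys_4$ is exactly what is encoded by the coupling $t_1t_2=t_3t_4$ together with the non-trivial commutation $t_4t_3w_m=w_mt_2t_1$ on the central block.
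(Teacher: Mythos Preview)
Your outline has a genuine gap at the very first step. You claim that for large $a$ the prefix $v_0(v_\ell)^a$ of the LHS of $\Sys_2$ lies entirely inside $w_0(w_\ell)^aw_m$. Since $|v_\ell|>|w_\ell|$, the opposite happens: $|v_0(v_\ell)^a|-|w_0(w_\ell)^aw_m|=a(|v_\ell|-|w_\ell|)+\text{const}\to+\infty$, so for large $a$ the block $(v_\ell)^a$ overruns $w_0(w_\ell)^aw_m$ completely and reaches deep into $(w_{\ol\ell})^a$. In particular your $\omega$-identity $v_0(v_\ell)^\omega=w_0(w_\ell)^\omega$ is unjustified as stated, and it is simply false when $w_\ell=\epsilon$ (nothing in the hypotheses rules this out; the assumption $w_k\neq\epsilon$ concerns some index $k$, not $\ell$). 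The paper exploits precisely the overrun you mis-described: from $\Sys_2$ at large $a$ one gets a long common factor of $(v_\ell)^a$ and $(w_{\ol\ell})^a$, whence Lemma~\ref{lemma:commute} gives $v_\ell=(t_1t_2)^{\alpha_\ell}$ and $w_{\ol\ell}=(t_2t_1)^{\beta_{\ol\ell}}$ directly. Both blocks are guaranteed nonempty ($|v_\ell|>0$ and $|w_{\ol\ell}|>|v_{\ol\ell}|\geq 0$).

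This cross-comparison also fixes the second gap in your plan. Your ``analogous argument on the suffix'' would only tell you that $w_{\ol\ell}$ and $v_{\ol\ell}$ have conjugate primitive roots; it does not by itself force $w_{\ol\ell}$ to be a power of $t_2t_1$, the conjugate determined on the \emph{left}. You assert $w_{\ol\ell}=(t_2t_1)^{\beta_{\ol\ell}}$ without explaining why the right-hand primitive coincides with the left-hand one. In the paper this link is immediate because $t_1,t_2$ are \emph{defined} via the overlap of $(v_\ell)^a$ with $(w_{\ol\ell})^a$. After that, the form of the $v_i$ and $w_{\ol i}$ is obtained not from $\Sys_3$ alone but from $\Sys_4$: fixing $b$ large and letting $a\to\infty$ traps $v_\ell v_iv_\ell$ as a factor of a power of $w_{\ol\ell}=(t_2t_1)^{\beta_{\ol\ell}}$, and Lemma~\ref{lemma:overlap}.\ref{lem:cheese} then forces $v_i\in(t_1t_2)^*$. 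Only \emph{after} this do $\Sys_2,\Sys_3$ and clause~\ref{lemma:overlap2:C} produce the second split $t_3,t_4$ and the $w_i=(t_4t_3)^{\beta_i}$; the existence of some $w_k\neq\epsilon$ is used exactly here (via clause~\ref{lemma:overlap2:G}) to obtain $t_4t_3w_m=w_mt_2t_1$, not merely by ``comparing $\Sys_2$ at $a$ and $a+1$'' as you suggest.
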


\begin{proof}
  First we infer the form of $v_{\ell}$ and $w_{\ol \ell}$.
  Since $|v_\ell| > |w_{\ell}|$, by $\Sys_2$, there is $a\geq 0$ such that
  $(v_{\ell})^a$ and $(w_{\ol \ell})^a$ have a common factor of length
  at least $|v_{\ell}| + |w_{\ol \ell}| - \text{gcd}(|v_{\ell}|,
  |w_{\ol \ell}|)$ (see Fig.~\ref{figS2}). Therefore by Lemma \ref{lemma:commute}.2,
  there exist $t_1,t_2\in \Sigma^*$ such that
  $t_1t_2$ is primitive, $v_\ell\ =\ (t_1t_2)^{\alpha_\ell}$ and
  $w_{\ol \ell}\ =\ (t_2t_1)^{\beta_{\ol \ell}}$ for some
  $\alpha_{\ell},\beta_{\ol \ell} > 0$. 
  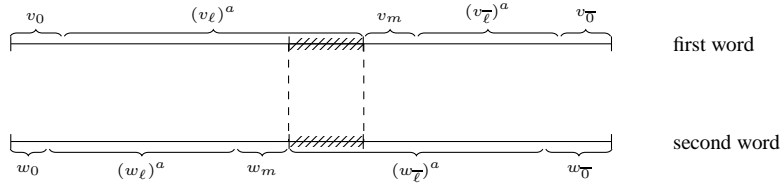
\begin{figure}[!ht]
\centering
\begin{tikzpicture}

\draw[|-|] (0,0) node[left=2mm] {} -- (8,0) node[right=7mm] 
  {\fontsize{8pt}{7pt}\selectfont first word}; 

\draw[snake=brace] (0,.1) -- (0.68,.1) node[midway,above=2pt] {$\scriptstyle v_0$};
\draw[snake=brace] (0.72,.1) -- (4.68,.1) node[midway,above=2pt] {$\scriptstyle (v_\ell)^a$};
\draw[snake=brace] (4.72,.1) -- (5.38,.1) node[midway,above=2pt] {$\scriptstyle v_m$};
\draw[snake=brace] (5.42,.1) -- (7.28,.1) node[midway,above=2pt] {$\scriptstyle (v_{\ol \ell})^a$};
\draw[snake=brace] (7.32,.1) -- (7.98,.1) node[midway,above=2pt] {$\scriptstyle v_{\ol 0}$};

\draw[|-|] (0,-1.3) node[left=2mm] {} -- (8,-1.3) node[right=7mm] 
  {\fontsize{8pt}{7pt}\selectfont second word}; 

\draw[snake=brace] (8,-1.4) -- (7.12,-1.4) node[midway,below=3pt] {$\scriptstyle w_{\ol 0}$};
\draw[snake=brace] (7.08,-1.4) -- (3.72,-1.4) node[midway,below=2pt] {$\scriptstyle (w_{\ol \ell})^a$};
\draw[snake=brace] (3.68,-1.4) -- (3.02,-1.4) node[midway,below=3pt] {$\scriptstyle w_m$};
\draw[snake=brace] (2.98,-1.4) -- (0.52,-1.4) node[midway,below=2pt] {$\scriptstyle (w_\ell)^a$};
\draw[snake=brace] (0.48,-1.4) -- (0,-1.4) node[midway,below=3pt] {$\scriptstyle w_0$};

\draw[|-|] (3.7,0) -- (4.7,0);
\draw[|-|] (3.7,-1.3) -- (4.7,-1.3);
\draw[dashed] (3.7,0) -- (3.7,-1.3);
\draw[dashed] (4.7,0) -- (4.7,-1.3);

\begin{scope}
\path[clip] (3.6,-.07) -| (4.7,.07) -| (3.6,-.07);
\foreach \i in {3.6,3.7,...,4.7} 
  {\draw (\i,-.2) -- +(.4,.4);}
\end{scope}

\begin{scope}[xshift=0mm,yshift=-1.3cm]
\path[clip] (3.6,-.07) -| (4.7,.07) -| (3.6,-.07);
\foreach \i in {3.6,3.7,...,4.7} 
  {\draw (\i,-.2) -- +(.4,.4);}
\end{scope}

\end{tikzpicture}
\caption{System $\Sys_2$ for large values of $a$, case $|v_\ell|> |w_\ell|$.}\label{figS2}
\end{figure}


  Second we derive the form of $v_i$ and $w_{\ol i}$ for all
  $i\in\{1,\dots,n\}$. As $v_\ell\neq \epsilon$ there is $b_0\geq 1$ such that 
  $|(v_\ell)^{b_0-1} v_m (v_{\ol \ell})^{b_0} v_{\ol i}v_{\ol 0}| \geq   |w_{\ol
    0}|$. We consider $\Sys_4$ with $b=b_0$. 
    The size of the suffix $v_\ell v_i(v_\ell)^{b_0} v_m (v_{\ol \ell})^{b_0} v_{\ol i}(v_{\ol \ell})^{a}v_{\ol 0}$ 
    is of the form $l_1(a)=k_1+a|v_{\ol \ell}|$ and the size of the suffix $(w_{\ol \ell})^a w_{\ol 0}$ is of the form
    $l_2(a) = k_2+a|w_{\ol \ell}|$. As $|w_{\ol \ell}|> |v_{\ol \ell}|$ (by Proposition \ref{prop:lengtheq} and $|v_{ \ell}|> |w_{\ell}|$),
    there exists $a_0\geq 1$ such that $l_2(a_0)\geq l_1(a_0)$. 
    Therefore (see Fig.~\ref{figS5}) $v_\ell v_i v_\ell$ is a factor of
  $(w_{\ol \ell})^{a_0}$. Thus there is $X,Z\in\Sigma^*$ such that
  $X(t_1t_2)^{\alpha_{\ell}} v_i (t_1t_2)^{\alpha_{\ell}} Z =
  (t_2t_1)^{a_0\beta_{\ol \ell}}$. Since $\alpha_{\ell},\beta_{\ol \ell}>0$, we can
  apply Lemma \ref{lemma:overlap}.\ref{lem:cheese} and we get $v_i\in (t_1t_2)^*$.
  Since $|w_{\ol \ell}|>|v_{\ol \ell}|$ and $w_{\ol \ell} = (t_2t_1)^{\beta_{\ol \ell}}$, by symmetry, 
  we also get $w_{\ol i}\in (t_2t_1)^*$.

  \begin{figure}[!ht]
\centering
\begin{tikzpicture}

\draw[dashed] (0,0) -- (1,0);
\draw[-|] (1,0) node[left=2mm] {} -- (8,0) node[right=3mm] 
  {\fontsize{8pt}{7pt}\selectfont end of the first word}; 

\draw[snake=brace] (0,.1) -- (1.98,.1) node[midway,above=2pt] {$\scriptstyle (v_\ell)^a$};
\draw[snake=brace] (2.02,.1) -- (2.48,.1) node[midway,above=2pt] {$\scriptstyle v_i$};
\draw[snake=brace] (2.52,.1) -- (4.28,.1) node[midway,above=2pt] {$\scriptstyle (v_\ell)^{b_0}$};
\draw[snake=brace] (4.32,.1) -- (4.78,.1) node[midway,above=2pt] {$\scriptstyle v_m$};
\draw[snake=brace] (4.82,.1) -- (5.48,.1) node[midway,above=2pt] {$\scriptstyle (v_{\ol \ell})^{b_0}$};
\draw[snake=brace] (5.52,.1) -- (5.98,.1) node[midway,above=2pt] {$\scriptstyle v_{\ol i}$};
\draw[snake=brace] (6.02,.1) -- (7.28,.1) node[midway,above=2pt] {$\scriptstyle (v_{\ol \ell})^a$};
\draw[snake=brace] (7.32,.1) -- (7.98,.1) node[midway,above=2pt] {$\scriptstyle v_{\ol 0}$};

\draw[dashed] (0,-1.3) -- (1,-1.3);
\draw[-|] (1,-1.3) node[left=2mm] {} -- (8,-1.3) node[right=3mm] 
  {\fontsize{8pt}{7pt}\selectfont end of the second word}; 

\draw[snake=brace] (8,-1.4) -- (6.52,-1.4) node[midway,below=3pt] {$\scriptstyle w_{\ol 0}$};
\draw[snake=brace] (6.48,-1.4) -- (1,-1.4) node[midway,below=2pt] {$\scriptstyle (w_{\ol \ell})^a$};

\draw[|-|] (1.5,0) -- (3,0);
\draw[|-|] (1.5,-1.3) -- (3,-1.3);
\draw[dashed] (1.5,0) -- (1.5,-1.3);
\draw[dashed] (3,0) -- (3,-1.3);

\begin{scope}
\path[clip] (1.4,-.07) -| (3,.07) -| (1.4,-.07);
\foreach \i in {1.4,1.5,...,3} 
  {\draw (\i,-.2) -- +(.4,.4);}
\end{scope}

\begin{scope}[xshift=0mm,yshift=-1.3cm]
\path[clip] (1.4,-.07) -| (3,.07) -| (1.4,-.07);
\foreach \i in {1.4,1.5,...,3} 
  {\draw (\i,-.2) -- +(.4,.4);}
\end{scope}


\draw[snake=brace] (2.98,-0.15) -- (1.52,-0.15) node[midway,below=2pt] {$\scriptstyle v_\ell v_iv_\ell$};


\draw[dashed] (8,0.7) -- (8,0);
\draw[dashed] (1.5,0.7) -- (1.5,0);

\draw[dashed] (8,-1.3) -- (8,-2);
\draw[dashed] (1,-1.3) -- (1,-2);

\draw[<->,dotted] (1.5,0.7) -- (8,0.7) node[midway,above=1pt] {$\scriptstyle l_1(a)$};
\draw[<->,dotted] (1,-2) -- (8,-2) node[midway,below=1pt] {$\scriptstyle l_2(a)$};

\end{tikzpicture}
\caption{System $\Sys_5$ for value $b_0$ and large values of $a$, case $|v_\ell|> |w_\ell|$.}\label{figS5}
\end{figure}
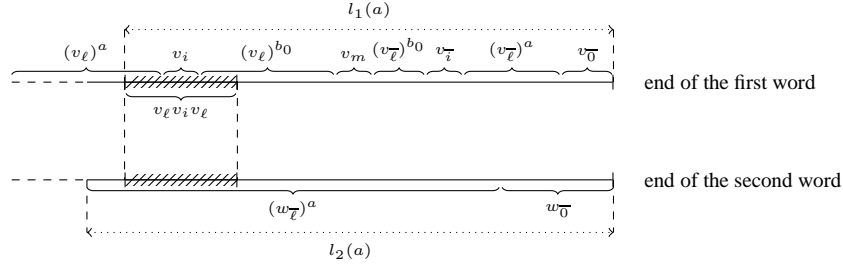

  Third we determine the form of the words $w_i$ and prove the
  property on $w_m$. Since $v_\ell =
  (t_1t_2)^{\alpha_\ell}$, $w_{\ol \ell} = (t_2t_1)^{\beta_{\ol
      \ell}}$ and $v_i = (t_1t_2)^{\alpha_i}$ for some $\alpha_i\geq
  0$, $\Sys_2$ and
  $\Sys_3$ can be rewritten as follows:
  \begin{eqnarray*}
v_0(t_1t_2)^{a.\alpha_{\ell}}v_m(v_{\ol \ell})^av_{\ol 0} & = & w_0(w_\ell)^aw_m(t_2t_1)^{a.\beta_{\ol \ell}}w_{\ol 0}\\
v_0(t_1t_2)^{\alpha_i + a.\alpha_{\ell}}v_m(v_{\ol \ell})^av_{\ol
  i}v_{\ol 0} & = & w_0w_i(w_\ell)^aw_m(t_2t_1)^{a.\beta_{\ol \ell}}w_{\ol i}w_{\ol 0}
  \end{eqnarray*}
  
  \noindent Since $|v_\ell| > |w_\ell|$, there exist $\alpha,\beta,\gamma,\gamma'\geq 2$ and $t',t''\prec t_1t_2$ such that
  $$
  v_0(t_1t_2)^{\alpha}t' =  w_0(w_\ell)^\beta w_mt_2(t_1t_2)^{\gamma}
  \qquad v_0(t_1t_2)^{\alpha}t'' = 
    w_0w_i(w_\ell)^\beta w_mt_2 (t_1t_2)^{\gamma'}
    $$

  \noindent   By Lemma
  \ref{lemma:overlap}.\ref{lemma:overlap2:E}, we get
  $v_0(t_1t_2)^{\omega} = w_0w_i(w_\ell)^\beta w_m(t_2t_1)^{\omega}$ and 
  \begin{eqnarray}
  v_0(t_1t_2)^{\omega}  =  w_0(w_\ell)^\beta w_m(t_2t_1)^{\omega}\  \label{eqn:aaa}
  \end{eqnarray}
  \noindent Therefore \begin{eqnarray}
      (w_\ell)^\beta w_m(t_2t_1)^{\omega} = w_i(w_\ell)^\beta w_m(t_2t_1)^{\omega}\ \label{eqn:eqq}
  \end{eqnarray}

  Eq. \ref{eqn:eqq} is equivalent to $(w_\ell)^\beta w_mt_2(t_1t_2)^{\omega} =
  w_i(w_\ell)^\beta w_mt_2(t_1t_2)^{\omega}$, thus by
  Lemma \ref{lemma:overlap}.\ref{lemma:overlap2:C}, there exist $t_3,t_4\in\Sigma^*$ such
  that $t_1t_2 = t_3t_4$ and for all $i\in\{1,\dots,n\}$, $w_i =
  (t_4t_3)^{\beta_i}$ for some $\beta_i\geq 0$. By hypothesis, there
  is $k\in\{1,\dots,n\}$ such that $w_k\neq \epsilon$, and therefore
  $\beta_k > 0$. Eq. \ref{eqn:eqq} gives  $(t_4t_3)^{\beta_\ell} w_m(t_2t_1)^{\omega}  =  (t_4t_3)^{\beta_\ell
    + \beta_k} w_m(t_2t_1)^{\omega}$, i.e. $w_m(t_2t_1)^{\omega}  =
  (t_4t_3)^{\beta_k} w_m(t_2t_1)^{\omega}$. By lemma \ref{lemma:overlap}.\ref{lemma:overlap2:G} we get $w_mt_2t_1 = t_4t_3w_m$.

  Finally, we determine the form of $v_0$ and $w_0$.
  If $w_0 = \epsilon$, then
  Eq. \ref{eqn:aaa} gives $v_0(t_1t_2)^{\omega} =
  (t_4t_3)^{\beta.\beta_{\ell}}w_m(t_2t_1)^\omega$.
  Since $t_1t_2 = t_3t_4$ and $t_4t_3w_m = w_mt_2t_1$, 
  $v_0(t_3t_4)^\omega = (t_4t_3)^\omega$. Wlog we can assume that
  $t_3\neq\epsilon$. Indeed, $v_\ell \in (t_1t_2)^*$ is non-empty
  and $t_1t_2 = t_3t_4$, so that $t_3t_4\neq\epsilon$. By Lemma
  \ref{lemma:overlap}.\ref{lemma:overlap2:B}, 
  $v_0\in (t_4t_3)^*t_4$. Alike, if $v_0 = \epsilon$, then
  wlog we can suppose that $t_4\neq \epsilon$, and
  conclude similarly that $w_0\in (t_3t_4)^*t_3$.\qed
\end{proof}

The mirror of a word $t\in\Sigma^*$ is denoted by $\overline{t}$ and is
inductively defined by $\ol{\epsilon} = \epsilon$, $\ol{ct} = \ol{t}c$ for all
$c\in\Sigma$. The mirror of an equation $t = t'$ is $\ol{t} = \ol{t'}$. By taking the
mirror of the equations $\Sys_1,\dots,\Sys_4$, we obtain a system of equations
which has the same form as $\Sys_1,\dots,\Sys_4$. Since $|v_{\ell}| > |w_{\ell}|$, by
Prop. \ref{prop:lengtheq}, $|w_{\ol \ell}|>|v_{\ol \ell}|$. Therefore we can apply
Lemma \ref{lem:allform} on the mirrors of $\Sys_1,\dots,\Sys_4$ and obtain
the following corollary:

\begin{corollary}\label{lem:allformmirror}
    If there exist $k\in\{1,\dots,n\}$ such that $v_{\ol k}\neq\epsilon$.
    Then there exist $t_1,t_2,t_5,t_6\in\Sigma^*$, $\alpha_0,\beta_0\geq 0$, 
    $\alpha_i,\beta_i,\beta_{\ol i}\geq 0$ for all $i\in\{1,\dots,n\}$
    such that $t_2t_1$ is primitive and for all $i\in\{1,\dots,n\}$:
    $$
    \begin{array}{c@{\quad}c@{\quad}c@{\quad}c@{\quad}ccccc}
    t_2t_1 = t_6t_5 & t_1t_2v_m = v_mt_5t_6 & v_i  =
    (t_1t_2)^{\alpha_i} &
    v_{\ol i}  =  (t_5t_6)^{\alpha_{\ol i}} &
    w_{\ol i}  =  (t_2t_1)^{\beta_{\ol i}}
    \end{array}
    $$
    and if $w_{\ol 0} = \epsilon$, then $v_{\ol 0}  = t_5(t_6t_5)^{\alpha_{\ol 0}}$, and if $v_{\ol 0} = \epsilon$, then  $w_{\ol 0} =t_6(t_5t_6)^{\beta_{\ol 0}}$
\end{corollary}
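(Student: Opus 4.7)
The plan is to derive the corollary by transporting Lemma~\ref{lem:allform} through the mirror operation $x \mapsto \ol x$. The key observation is that every equation in $\Sys_1,\dots,\Sys_4$ is stable in shape under mirroring: using $\ol{xy} = \ol y\,\ol x$, mirroring both sides of, e.g., $v_0(v_\ell)^a v_m(v_{\ol\ell})^a v_{\ol 0} = w_0(w_\ell)^a w_m(w_{\ol\ell})^a w_{\ol 0}$ produces another equation of the same shape with the roles of the subscripts $i$ and $\ol i$ (and $0$ and $\ol 0$) interchanged. Thus, introducing hatted variables $\hat v_i := \ol{v_{\ol i}}$, $\hat v_{\ol i} := \ol{v_i}$, $\hat v_m := \ol{v_m}$ (and analogously for $w$), the mirrored system $\hat\Sys_1,\dots,\hat\Sys_4$ is literally of the form handled by Lemma~\ref{lem:allform}.

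Next, I would identify which sub-case of Lemma~\ref{lem:allform} applies to the hatted system. Since $|\hat v_\ell| = |v_{\ol\ell}|$ and $|\hat w_\ell| = |w_{\ol\ell}|$, Proposition~\ref{prop:lengtheq} combined with the standing assumption $|v_\ell|>|w_\ell|$ gives $|\hat v_\ell| < |\hat w_\ell|$; this is the symmetric case of Lemma~\ref{lem:allform}, and by the remark that cases $2$ and $3$ are symmetric, the analog of the lemma obtained by swapping $v$ and $w$ throughout applies. The hypothesis ``$v_{\ol k}\neq\epsilon$ for some $k$'' translates precisely to ``$\hat v_k\neq\epsilon$ for some $k$'', which is the nonemptiness hypothesis required by the swapped lemma. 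Applying it yields words $\hat t_1,\hat t_2,\hat t_3,\hat t_4$ with $\hat t_1\hat t_2$ primitive, $\hat t_1\hat t_2 = \hat t_3\hat t_4$, $\hat t_4\hat t_3\,\hat v_m = \hat v_m\,\hat t_2\hat t_1$, together with exponents $\alpha_i,\beta_i,\beta_{\ol i}$ such that $\hat w_i=(\hat t_1\hat t_2)^{\alpha_i}$, $\hat v_i=(\hat t_4\hat t_3)^{\beta_i}$, $\hat v_{\ol i}=(\hat t_2\hat t_1)^{\beta_{\ol i}}$, plus the usual special forms for $\hat v_0$ and $\hat w_0$ when one of them is empty.

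The last step is to translate these conclusions back to the original variables by mirroring each derived equation. Setting $t_1=\ol{\hat t_1}$, $t_2=\ol{\hat t_2}$, $t_5=\ol{\hat t_3}$, $t_6=\ol{\hat t_4}$ and applying $\ol{xy}=\ol y\,\ol x$ repeatedly: primitivity of $\hat t_1\hat t_2$ gives primitivity of $\ol{\hat t_1\hat t_2}=t_2 t_1$; the factorization $\hat t_1\hat t_2=\hat t_3\hat t_4$ becomes $t_2 t_1=t_6 t_5$; the relation $\hat t_4\hat t_3\,\hat v_m=\hat v_m\,\hat t_2\hat t_1$ becomes $t_1 t_2\, v_m = v_m\, t_5 t_6$; and the equalities on $\hat w_i,\hat v_i,\hat v_{\ol i}$ turn into $w_{\ol i}=(t_2 t_1)^{\alpha_i}$, $v_{\ol i}=(t_5 t_6)^{\beta_i}$, $v_i=(t_1 t_2)^{\beta_{\ol i}}$, which after a harmless renaming of exponents are exactly the claims of the corollary. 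The special-case conclusions on $\hat v_0$ and $\hat w_0$ mirror to the stated forms $v_{\ol 0}=t_5(t_6 t_5)^{\alpha_{\ol 0}}$ (when $w_{\ol 0}=\epsilon$) and $w_{\ol 0}=t_6(t_5 t_6)^{\beta_{\ol 0}}$ (when $v_{\ol 0}=\epsilon$).

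The only real subtlety in the plan is keeping track of two orthogonal symmetries at once: mirroring must be applied in tandem with the $v/w$ swap (forced because mirroring flips the relevant length inequality via Prop.~\ref{prop:lengtheq}), and the primitive factorization $\hat t_1\hat t_2=\hat t_3\hat t_4$ of the original lemma must be re-expressed in terms of the conjugate factorization $t_2 t_1=t_6 t_5$ after mirroring. No additional word-combinatorial reasoning beyond Lemma~\ref{lem:allform} is required.
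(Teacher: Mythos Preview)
Your proposal is correct and follows exactly the paper's approach: mirror the systems $\Sys_1,\dots,\Sys_4$, observe via Proposition~\ref{prop:lengtheq} that the length inequality flips (forcing the $v/w$-swapped instance of Lemma~\ref{lem:allform}), apply that lemma to the mirrored/hatted variables, and then un-mirror. You have spelled out carefully what the paper leaves implicit---namely the simultaneous $v\leftrightarrow w$ swap and the bookkeeping $t_1=\ol{\hat t_1}$, $t_2=\ol{\hat t_2}$, $t_5=\ol{\hat t_3}$, $t_6=\ol{\hat t_4}$---but there is no substantive difference in strategy.
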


We are now equipped to prove that $v_{id_n} = w_{id_n}$:

\begin{theorem}\label{thm:syseq}
  $\Sys[|v_\ell|>|w_\ell|] \implies v_0\dots v_nv_mv_{\ol n}\dots v_{\ol 0} = w_0\dots w_nw_mw_{\ol n}\dots w_{\ol 0}$
\end{theorem}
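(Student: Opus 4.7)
The plan is to substitute the explicit forms supplied by Lemma~\ref{lem:allform} and Corollary~\ref{lem:allformmirror} into both sides of the target identity, and then reduce the equality to a short calculation in the free monoid controlled by the base equation $\Sys_1$. The two key rewriting tools are the twisted commutations $(t_1t_2)^k v_m = v_m (t_5t_6)^k$ and $(t_4t_3)^k w_m = w_m (t_2t_1)^k$, together with the conjugacies $t_1t_2=t_3t_4$ and $t_2t_1=t_6t_5$.

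Concretely, in the generic case where both Lemma~\ref{lem:allform} and Corollary~\ref{lem:allformmirror} apply, I substitute $v_i=(t_1t_2)^{\alpha_i}$, $v_{\ol i}=(t_5t_6)^{\alpha_{\ol i}}$, $w_i=(t_4t_3)^{\beta_i}$, $w_{\ol i}=(t_2t_1)^{\beta_{\ol i}}$ into $v_{id_n}$ and $w_{id_n}$, and use the commutations above to push every power past the central blocks $v_m$ and $w_m$. Proposition~\ref{prop:lengtheq} together with $|t_1t_2|=|t_5t_6|=|t_2t_1|$ yields $\sum_i(\alpha_i+\alpha_{\ol i}) = \sum_i(\beta_i+\beta_{\ol i})$; call this common value $S$. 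Both sides then collapse to $v_0 v_m (t_5t_6)^S v_{\ol 0}$ and $w_0 w_m (t_2t_1)^S w_{\ol 0}$, so only the boundary pieces and a global power remain to be matched. I then case-split on which of $v_0,w_0,v_{\ol 0},w_{\ol 0}$ is empty (after stripping common prefixes and suffixes in $\Sys_1$, one boundary on each side must be empty). In each of the four subcases the non-empty boundary has a known form (for instance $w_0 = t_3(t_4t_3)^{\beta_0}$ when $v_0=\epsilon$, and $v_{\ol 0}=t_5(t_6t_5)^{\alpha_{\ol 0}}$ when $w_{\ol 0}=\epsilon$), and the verification amounts to multiplying $\Sys_1$ on the right by $(t_2t_1)^S=(t_6t_5)^S$ and absorbing the boundary corrections into the shifted powers: e.g.\ in the subcase $v_0=\epsilon=w_{\ol 0}$, $\Sys_1$ reads $v_m t_5(t_6t_5)^{\alpha_{\ol 0}} = t_3 w_m (t_2t_1)^{\beta_0}$, and the target equation falls out after one such multiplication and one application of $t_6t_5=t_2t_1$.

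The main obstacle will be the degenerate cases where Lemma~\ref{lem:allform} or Corollary~\ref{lem:allformmirror} does not apply, because all $w_k$ (respectively all $v_{\ol k}$) vanish for $k\in\{1,\dots,n\}$. These I would handle by noting that the first two steps of the proof of Lemma~\ref{lem:allform}, namely the derivation of $v_\ell=(t_1t_2)^{\alpha_\ell}$, $w_{\ol\ell}=(t_2t_1)^{\beta_{\ol\ell}}$, and $v_i\in(t_1t_2)^*$, use only $|v_\ell|>|w_\ell|$, $\Sys_2$, and Lemma~\ref{lemma:overlap}.\ref{lem:cheese}, and therefore remain valid; the same substitute-and-commute scheme then applies, with one of the two central commutation identities becoming trivial on the empty side. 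The truly delicate point throughout is that $v_m$ and $w_m$ twist the primitive roots in opposite directions, so one must carefully track which of $t_1t_2,\,t_3t_4,\,t_5t_6,\,t_6t_5$ each exponent is written relative to; done correctly, the two conjugacies supplied by the Lemma and Corollary guarantee that the boundary corrections align exactly under $\Sys_1$ in every subcase.
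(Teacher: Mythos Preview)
Your plan is correct and is essentially the paper's own argument: substitute the structural forms from Lemma~\ref{lem:allform} and Corollary~\ref{lem:allformmirror}, shuffle powers across $v_m,w_m$ via the twisted commutations, and collapse the identity onto $\Sys_1$; the paper merely replaces your four-way boundary case-split by a single normalization (premultiplying $v_0,w_0$ by $t_3$ and $v_{\ol 0},w_{\ol 0}$ by $t_5$) so that all cases look alike. One caveat: in the fully degenerate subcase where \emph{both} $w_k\equiv\epsilon$ and $v_{\ol k}\equiv\epsilon$ for all $k$, neither commutation identity is available (and not just ``trivial''), so ``the same substitute-and-commute scheme'' does not quite go through; as the paper does, you must return to $\Sys_2$ and use Lemma~\ref{lemma:overlap}.\ref{lemma:overlap2:B} to pin down the form of $w_0w_m$ before the final substitution.
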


\begin{proof}
    We consider several cases:

\begin{enumerate}

\item \textit{there exist $k,k'\in\{1,\dots,n\}$ such that
$w_{k'}\neq \epsilon$ and $v_{\ol k}\neq \epsilon$}.

By Lemma \ref{lem:allform}, there exist $t_1,t_2,t_3,t_4\in\Sigma^*$
and $\alpha_0,\beta_0,\dots,\alpha_n,\beta_n,\beta_{\ol
  n},\dots,\beta_{\ol 1}\geq 0$ such that:
$$
\begin{array}{ccccccccc}
  t_1t_2 = t_3t_4 \quad t_4t_3w_m = w_mt_2t_1 \quad v_i  = (t_1t_2)^{\alpha_i} \quad  w_i  =  (t_4t_3)^{\beta_i} \quad 
  w_{\ol i}  =  (t_2t_1)^{\beta_{\ol i}}
\end{array}
$$
and if $w_0 = \epsilon$, then $v_0  = (t_4t_3)^{\alpha_0}t_4$, and if $v_0 = \epsilon$, then  $w_0 =(t_3t_4)^{\beta_0}t_3$

By Corollary \ref{lem:allformmirror} and the fact that a word is uniquely 
decomposed as a power of a primitive word, there exist
$t_5,t_6\in\Sigma^*$ and $\alpha_{\ol n},\dots,\alpha_{\ol 1}\geq 0$
such that: 
$$
t_2t_1 = t_6t_5 \quad \quad t_1t_2v_m = v_mt_5t_6 \quad v_{\ol i}  = (t_5t_6)^{\alpha_{\ol i}}
$$
and if $w_{\ol 0} = \epsilon$, then $v_{\ol 0}  = t_5(t_6t_5)^{\alpha_{\ol 0}}$, and if $v_{\ol 0} = \epsilon$, then  $w_{\ol 0} =t_6(t_5t_6)^{\beta_{\ol 0}}$





    \noindent We can also suppose that $v_0 = (t_3t_4)^{\alpha_0} = (t_1t_2)^{\alpha_0}$ and $w_0
    =(t_3t_4)^{\beta_0}t_3$. Indeed, if $w_0 = \epsilon$,
    we simply replaced $v_0$ by  $t_3v_0$ and $w_0$ by $t_3w_0$. 
    Similarly, we assume that $w_{\ol 0} = (t_6t_5)^{\beta_{\ol
        0}}$ and $v_{\ol 0} = t_5(t_6t_5)^{\alpha_{\ol 0}}$.
    By Prop \ref{prop:lengtheq}, $\alpha_i+\alpha_{\ol i} = \beta_i+\beta_{\ol i}$ for all
    $i\in\{1,\dots,n\}$. Finally:
    $$
    \begin{array}{cllll}
      & & v_0v_1\dots v_nv_mv_{\ol n}\dots v_{\ol 0}  \\
      & = &
      (t_1t_2)^{\alpha_0+\dots+\alpha_n}v_m (t_5t_6)^{\alpha_{\ol n}+\dots+\alpha_{\ol 0}}t_5 \\
      & = &
      (t_1t_2)^{\alpha_0+\beta_1+\dots+\beta_n}v_m (t_5t_6)^{\beta_{\ol
          n}+\dots+\beta_{\ol 1}+\alpha_{\ol 0}}t_5 &
      \text{(since } \alpha_i+\alpha_{\ol i}=\beta_i+\beta_{\ol
        i} \text{ and } \\
      & & & t_1t_2v_m = v_mt_5t_6) \\
      & = &
      (t_1t_2)^{\beta_1+\dots+\beta_n}v_0 v_m  v_{\ol 0} (t_6t_5)^{\beta_{\ol n}+\dots+\beta_{\ol 1}}\\
      & = &
      (t_1t_2)^{\beta_1+\dots+\beta_n}w_0 w_m w_{\ol 0} (t_6t_5)^{\beta_{\ol n}+\dots+\beta_{\ol 1}}
       & \text{(by } \Sys_1 \text{)}\\
       & = &
      (t_1t_2)^{\beta_1+\dots+\beta_n}(t_3t_4)^{\beta_0}t_3 w_m
      (t_6t_5)^{\beta_{\ol 0}}(t_6t_5)^{\beta_{\ol n}+\dots+\beta_{\ol 1}}\\
      & = &
      (t_3t_4)^{\beta_0+\beta_1+\dots+\beta_n} t_3 w_m
      (t_2t_1)^{\beta_{\ol n}+\dots+\beta_{\ol 1} + \beta_{\ol 0}} & \text{(as
        $t_1t_2 = t_3t_4$ and $t_2t_1 = t_6t_5$)}\\
      & = &
      w_0w_1\dots w_nw_mw_{\ol n}\dots w_{\ol 1}w_{\ol 0} & \hfill \square\\
  \end{array}
    $$

    \item for all $k\in\{1,\dots,n\}$, $w_k = v_{\ol k} = \epsilon$.
      As in the proof of Lemma \ref{lem:allform}, we can characterize
      the form of $v_i$ and $w_{\ol i}$ for all $i\in\{1,\dots,n\}$. In particular,
      there exists $t_1,t_2\in\Sigma^*$ such that $t_1t_2$ is primitive and
      $v_i = (t_1t_2)^{\alpha_i}$ for some $\alpha_i\geq 0$, and
      $w_{\ol i} = (t_2t_1)^{\beta_{\ol i}}$ for some $\beta_i\geq
      0$. By Proposition \ref{prop:lengtheq}, $\alpha_i=\beta_i$ for
      all $i$. We let $w'_0 = w_0w_m$ and $v'_{\ol 0} = v_mv_{\ol
        0}$. The systems $\Sys_1,\Sys_2$ can therefore
      be rewritten as follows:
      $$
      \left\{\begin{array}{rclr}
      v_0v'_{\ol 0} & = & w'_0w_{\ol 0} & (1)\\
      v_0(t_1t_2)^{a\alpha_\ell}v'_{\ol 0} & = & w'_0(t_2t_1)^{a\alpha_\ell}w_{\ol 0} & (2)\\
      \end{array}\right.
      $$
      
      Wlog, we can assume that $v_0=\epsilon$ or $w'_0=\epsilon$. Both
      cases are symmetric, so that we consider only the case
      $v_0 = \epsilon$. Wlog we can assume that $t_1\neq\epsilon$.
      By Lemma \ref{lemma:overlap}.\ref{lemma:overlap2:B} and
      $\Sys_2$, we get $w'_0 = (t_1t_2)^\alpha t_1$ for some
      $\alpha\geq 0$. Therefore:
      $$
      \begin{array}{lllllll}
        & & v_0v_1\dots v_nv_mv_{\ol n}\dots v_{\ol 1}v_{\ol 0}
        \\ & = &
        (t_1t_2)^{\alpha_1+\dots+\alpha_n}v'_0 \\
        & = & (t_1t_2)^{\alpha_1+\dots+\alpha_n}w'_0w_{\ol 0}\text{ by
          $\Sys_1$} \\
        & = & (t_1t_2)^{\alpha_1+\dots+\alpha_n+\alpha}t_1w_{\ol 0} \\
        & = & w'_0(t_2t_1)^{\alpha_1+\dots+\alpha_n}w_{\ol 0} \\
        & = & w_0w_m(t_2t_1)^{\alpha_1+\dots+\alpha_n}w_{\ol 0} \\
        & = & w_0w_1\dots w_nw_mw_{\ol n}\dots w_{\ol 1}w_{\ol 0} \\
      \end{array}
      $$

    \item for all $k\in\{1,\dots,n\}$, $v_{\ol k} = \epsilon$ and there exists $p\in\{1,\dots,n\}$ such that
      $w_p \neq \epsilon$. By Lemma \ref{lem:allform}, there exist
      $t_1,t_2,t_3,t_4\in\Sigma^*$ and $\alpha_0,beta_0$ and
      $\alpha_i,\beta_i,\beta_{\ol i}\geq 0$ for all
      $i\in\{1,\dots,n\}$ such that $t_1t_2$ is primitive and for all
      $i\in\{1,\dots,n\}$, $t_1t_2=t_3t_4$, $t_4t_3w_m = w_mt_2t_1$,
      $v_i=(t_1t_2)^{\alpha_i}$, $w_i = (t_4t_3)^{\beta_i}$ and
      $w_{\ol i} = (t_2t_1)^{\beta_{\ol i}}$. Moreover, if $w_0 =
      \epsilon$, then $v_0 = (t_4t_3)^{\alpha_0}t_4$, and if $v_0 =
      \epsilon$, then $w_0 = (t_3t_4)^{\beta_0}t_3$. By Proposition
      \ref{prop:lengtheq}, since $v_{\ol k} = \epsilon$ for all
      $k\in\{1,\dots,n\}$, we get $\alpha_k = \beta_k + \beta_{\ol
        k}$. As for the case given in the paper, we can suppose that
      $v_0 = (t_3t_4)^{\alpha_0} = (t_1t_2)^{\alpha_0}$ and $w_0
      =(t_3t_4)^{\beta_0}t_3$. Indeed, if $w_0 = \epsilon$,
      we simply replaced $v_0$ by  $t_3v_0$ and $w_0$ by $t_3w_0$. 
      Finally:
      $$
      \begin{array}{cllll}
        & & v_0v_1\dots v_nv_mv_{\ol n}\dots v_{\ol 0}  \\
        & = &
        (t_1t_2)^{\alpha_0+\dots+\alpha_n}v_mv_{\ol 0} \\
        & = &
        (t_1t_2)^{\alpha_1+\dots+\alpha_n}v_0v_mv_{\ol 0} \\
        & = &
        (t_1t_2)^{\alpha_1+\dots+\alpha_n}w_0w_mw_{\ol 0} \text{ by $\Sys_1$}\\
        & = &
        (t_3t_4)^{\alpha_1+\dots+\alpha_n+\beta_0}t_3w_mw_{\ol 0}\\
        & = &
        w_0(t_4t_3)^{\alpha_1+\dots+\alpha_n}w_mw_{\ol 0}\\
        & = &
        w_0(t_4t_3)^{\beta_1+\dots+\beta_n} (t_4t_3)^{\beta_{\ol
            1}+\dots+\beta_{\ol n}}w_mw_{\ol 0}\text{ since $\alpha_i
          = \beta_i+\beta_{\ol i}$}\\
        & = &
        w_0(t_4t_3)^{\beta_1+\dots+\beta_n} w_m (t_2t_1)^{\beta_{\ol
            1}+\dots+\beta_{\ol n}} w_{\ol 0}\text{ since $t_4t_3w_m =
          w_mt_2t_1$}\\
        & = &f
        w_0w_1\dots w_n w_m w_{\ol 1}\dots w_{\ol n} w_{\ol 0}
      \end{array}
      $$

    \item for all $k\in\{1,\dots,n\}$, $w_{k} = \epsilon$ and there exists $p\in\{1,\dots,n\}$ such that
      $v_{\ol p} \neq \epsilon$. This case is symmetric to case $2$.
\end{enumerate}
\end{proof}

\subsection{Proof of Theorem \ref{thm:sys}: case $|v_\ell| = |w_\ell|$}

Remind that we have fixed some $n\geq 8$, some
words $u_m,v_m,w_m\in\Sigma^*$ and for all $i\in\{0,\dots,n\}$, 
we have fixed $u_i,v_i,w_i,u_{\ol i},v_{\ol i},w_{\ol i}\in \Sigma^*$ such that
$u_i,u_{\ol i}\neq\epsilon$ such that the following system holds:
$\Sys = \{ v_\pi = w_\pi\ |\ \pi:\{1,\dots,k\}\rightarrow
\{1,\dots,n\},\ |u_\pi|< |u_{id_n}|\}$.

Consider the following equations, defined for all
$a\in\mathbb{N}$, for all $i,k\in\{1,\dots,n\}$:

$$
\left\{\begin{array}{rcllllllll}
v_0v_mv_{\ol 0}  & = & w_0w_mw_{\ol 0} & (1) \\
v_0(v_\ell)^av_m(v_{\ol \ell})^av_{\ol 0} & = & w_0(w_\ell)^aw_m(w_{\ol \ell})^aw_{\ol 0} & (2)\\
v_0v_i(v_\ell)^av_m(v_{\ol \ell})^av_{\ol i}v_{\ol 0} & = &
w_0w_i(w_\ell)^aw_m(w_{\ol \ell})^aw_{\ol i}w_{\ol 0}  & (3) \\
v_0v_iv_k(v_\ell)^av_m(v_{\ol \ell})^av_{\ol k}v_{\ol i}v_{\ol 0} & = &
w_0w_iw_k(w_\ell)^aw_m(w_{\ol \ell})^aw_{\ol k}w_{\ol i}w_{\ol 0}  & (4) \\
v_0 \dots v_{\ell-1} v_{\ell+1} \dots v_n v_m v_{\ol n}\dots v_{\ol{\ell-1}} v_{\ol{\ell+1}} \dots v_{\ol 0} & = & 
w_0 \dots w_{\ell-1} w_{\ell+1} \dots w_n w_m w_{\ol n}\dots
w_{\ol{\ell-1}} w_{\ol{\ell+1}} \dots w_{\ol 0} & (5)
\end{array}\right.
$$
 
As done for the case $|v_\ell| > |w_\ell|$, we denoty by $\Sys_k$
the set of equations of type $k$, $k=1,\dots,5$.
As for the equations given in the paper for the case $|v_\ell| >
|w_\ell|$, we can prove similarly the
following proposition:

\begin{proposition}
  For all $k=1,\dots,5$, $\Sys_k$ holds.
\end{proposition}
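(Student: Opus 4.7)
The plan is to follow the strategy used for the case $|v_\ell|>|w_\ell|$: the singleton systems $\Sys_1$ and $\Sys_5$ are handled by exhibiting one concrete $\pi$ whose induced equation already lies in $\Sys$, while the parametrised families $\Sys_2$, $\Sys_3$, $\Sys_4$ are verified at $a\in\{0,1,2,3\}$ and then extended to all $a\in\mathbb{N}$ via Lemma~\ref{lem:finland}.

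For $\Sys_1$ I would take $\pi$ with empty domain, so that $u_\pi = u_0 u_m u_{\ol 0}$ is strictly shorter than $u_{id_n}$ (every $u_i$, $u_{\ol i}$ with $i\geq 1$ is non-empty); the equation $v_\pi=w_\pi$ provided by $\Sys$ is exactly $\Sys_1$. For $\Sys_5$ I would take the order-preserving bijection $\pi:\{1,\ldots,n-1\}\to\{1,\ldots,n\}\setminus\{\ell\}$; then $u_\pi$ is obtained from $u_{id_n}$ by deleting the non-empty factors $u_\ell$ and $u_{\ol\ell}$, so $|u_\pi|<|u_{id_n}|$, and $v_\pi=w_\pi$ is exactly $\Sys_5$.

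For each of $\Sys_2$, $\Sys_3$, $\Sys_4$ and each $a\in\{0,1,2,3\}$, I would build the function $\pi$ that inserts $a$ copies of $\ell$ around the distinguished indices (none for $\Sys_2$, one index $i$ for $\Sys_3$, two indices $i,k$ for $\Sys_4$). The length of $u_\pi$ then has the form $|u_0 u_m u_{\ol 0}| + a\,|u_\ell u_{\ol\ell}| + \sum_{j\in J}|u_j u_{\ol j}|$ with $|J|\leq 2$. By the choice of $\ell$ as minimising $|u_j u_{\ol j}|$ and the assumption $n\geq 8$, the indices outside $J\cup\{\ell\}$ contribute at least $(n-|J|-1)|u_\ell u_{\ol\ell}|\geq 5\,|u_\ell u_{\ol\ell}|$ to $|u_{id_n}|$, which strictly exceeds $a\,|u_\ell u_{\ol\ell}|\leq 3\,|u_\ell u_{\ol\ell}|$. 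Hence $|u_\pi|<|u_{id_n}|$ for these four values of $a$, so the four corresponding instances lie in $\Sys$.

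Finally, I would apply Lemma~\ref{lem:finland} once to each of the families $\Sys_2,\Sys_3,\Sys_4$, instantiated with $v_1:=v_\ell$, $v_{\ol 1}:=v_{\ol\ell}$, and with the fixed factors $v_i$, $v_k$ (resp.\ $w_i$, $w_k$) absorbed into the positions $v_0,v_{\ol 0}$ (resp.\ $w_0,w_{\ol 0}$) of the lemma's template; this extends each family to all $a\in\mathbb{N}$. The argument is essentially bookkeeping, and I do not expect a genuine obstacle; the only point that requires mild care is the length count for $\Sys_4$, and the hypothesis $n\geq 8$ is precisely what guarantees enough \emph{free} indices after setting aside the at most two distinguished ones.
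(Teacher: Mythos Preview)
Your proposal is correct and follows the same strategy as the paper, which in fact merely writes ``we can prove similarly'' and refers back to the analogous lemma in the case $|v_\ell|>|w_\ell|$. One minor caveat: your length estimate for $\Sys_4$ tacitly assumes the two distinguished indices $i,k$ are distinct---when $i=k$, the factor $|u_i u_{\ol i}|$ is counted twice in $|u_\pi|$ but only once in $|u_{id_n}|$, so the inequality $|u_\pi|<|u_{id_n}|$ need not hold---but this is harmless, since $\Sys_4$ is only ever invoked with $i\neq k$ in the subsequent lemmas.
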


As for the case $|v_\ell| > |w_\ell|$, we have the following
proposition (which is in fact indepent from the cases
$|v_\ell|= |w_\ell|$ or not):

\begin{proposition}\label{prop:lengtheq2}
  For all $i\in\{1,\dots,n\}$, $|v_iv_{\ol i}| = |w_iw_{\ol i}|$.
\end{proposition}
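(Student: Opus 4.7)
The plan is to mimic the argument used for Proposition \ref{prop:lengtheq} in the previous subsection: extract two equations from the system $\Sys$ whose only difference is the presence of the factors $v_i, v_{\ol i}$ on one side and $w_i, w_{\ol i}$ on the other, then compare lengths.

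Concretely, I would first take $\Sys_1$, which gives $v_0 v_m v_{\ol 0} = w_0 w_m w_{\ol 0}$, so in particular $|v_0 v_m v_{\ol 0}| = |w_0 w_m w_{\ol 0}|$. Next, for an arbitrary $i \in \{1,\dots,n\}$, I would instantiate $\Sys_3$ at $a = 0$, which yields
\[
v_0 v_i v_m v_{\ol i} v_{\ol 0} \;=\; w_0 w_i w_m w_{\ol i} w_{\ol 0},
\]
so $|v_0 v_i v_m v_{\ol i} v_{\ol 0}| = |w_0 w_i w_m w_{\ol i} w_{\ol 0}|$. Subtracting these two length equalities immediately gives $|v_i v_{\ol i}| = |w_i w_{\ol i}|$, which is the desired conclusion.

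The only thing to check is that $\Sys_3$ at $a=0$ is actually available, i.e.\ that the corresponding instance of $\Sys$ is one of the equations whose hypothesis $|u_\pi| < |u_{id_n}|$ holds. This is exactly where the choice of $\ell$ as an index minimising $|u_\ell u_{\ol \ell}|$ enters: the word $u_0 u_i u_m u_{\ol i} u_{\ol 0}$ corresponds to the function $\pi$ whose range is $\{i\}$, and it is obtained from $u_{id_n}$ by removing the factors $u_j, u_{\ol j}$ for $j \neq i$, so its length is $|u_{id_n}| - \sum_{j\neq i}|u_j u_{\ol j}| < |u_{id_n}|$ because $u_j, u_{\ol j} \neq \epsilon$ for $j \neq i$ and $n \geq 8 \geq 2$. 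Hence the relevant equation belongs to $\Sys$.

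I do not expect any real obstacle here: the statement is an easy length-comparison identical to Proposition \ref{prop:lengtheq}, and does not depend on whether $|v_\ell| = |w_\ell|$ or not, as noted in the paragraph preceding the proposition. The only subtlety is bookkeeping the definition of $\Sys$ so as to justify that the equations $\Sys_1$ and $\Sys_3|_{a=0}$ are genuinely in the system.
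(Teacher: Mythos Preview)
Your proposal is correct and takes essentially the same approach as the paper: the paper simply refers back to Proposition~\ref{prop:lengtheq}, whose one-line proof uses $\Sys_1$ together with $\Sys_4$ at $a=b=0$, which is exactly the equation you obtain here as $\Sys_3$ at $a=0$. One small remark: your justification that the relevant equation lies in $\Sys$ does not actually need the minimality of $|u_\ell u_{\ol\ell}|$; it only uses $n\geq 2$ and $u_j,u_{\ol j}\neq\epsilon$, so the sentence invoking $\ell$ is superfluous.
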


\paragraph{Case study} There are four cases:

\begin{enumerate}

\item[$(i)$] $|v_\ell| = |w_\ell|=  0$ and
$|v_{\ol \ell}| = |w_{\ol \ell}| = 0$;
\item[$(ii)$] $|v_\ell| = |w_\ell|
\neq 0$ and $|v_{\ol \ell}| = |w_{\ol \ell}| = 0$;
\item[$(iii)$] $|v_{\ol \ell}| = |w_{\ol \ell}| \neq 0$ and $|v_\ell|
  = |w_\ell| \neq 0$;
\item[$(iv)$] $|v_{\ol \ell}| = |w_{\ol \ell}| \neq 0$ and $|v_\ell|
  = |w_\ell| = 0$;
\end{enumerate}

Cases $(iv)$ is syntactically the same as case $(ii)$ if we consider
the mirror of the equations. Therefore we consider only
case $(i)$,$(ii)$ and $(iii)$. For each of those three cases, 
we prove that $v_{id_n} = w_{id_n}$ (Theorem \ref{thm:sys}).

Similarly as the case $|v_\ell|>|w_\ell|$, we can assume
wlog that $v_0 = \epsilon$ or $w_0 = \epsilon$, and
$v_{\ol 0} = \epsilon$ or $w_{\ol 0} = \epsilon$, otherwise
we remove their common prefixes in the systems
$\Sys_1,\dots,\Sys_5$.

\subsubsection{Subcase $|v_\ell| = |w_\ell|= |v_{\ol \ell}| = |w_{\ol
    \ell}| = 0$}

\begin{lemma}
If $|v_\ell| = |w_\ell|=  0$ and
$|v_{\ol \ell}| = |w_{\ol \ell}| = 0$, then 
$v_{id_n} = w_{id_n}$.
\end{lemma}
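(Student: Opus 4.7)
The key observation is that in this subcase, the position $\ell$ contributes nothing to either output: since $v_\ell = w_\ell = \epsilon$ and $v_{\ol\ell} = w_{\ol\ell} = \epsilon$, inserting or removing a factor of $v_\ell$ or $v_{\ol\ell}$ (resp. $w_\ell$, $w_{\ol\ell}$) leaves the composed word unchanged. Hence I would argue that the desired equation $v_{id_n} = w_{id_n}$ collapses to precisely the equation already contained in $\Sys_5$.

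In more detail: the plan is first to recall that $\Sys_5$ is the equation associated with the map $\pi$ that enumerates $\{1,\dots,n\}\setminus\{\ell\}$ in order. This equation belongs to $\Sys$ because $|u_\pi| = |u_{id_n}| - |u_\ell u_{\ol\ell}|$, and since $u_\ell, u_{\ol\ell}\neq\epsilon$ by assumption, $|u_\pi| < |u_{id_n}|$. Hence by hypothesis $v_\pi = w_\pi$, i.e.\
$$v_0 v_1 \cdots v_{\ell-1} v_{\ell+1} \cdots v_n\, v_m\, v_{\ol n} \cdots v_{\ol{\ell+1}} v_{\ol{\ell-1}} \cdots v_{\ol 0} \;=\; w_0 w_1 \cdots w_{\ell-1} w_{\ell+1} \cdots w_n\, w_m\, w_{\ol n} \cdots w_{\ol{\ell+1}} w_{\ol{\ell-1}} \cdots w_{\ol 0}.$$

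Second, I would insert the empty words $v_\ell, v_{\ol\ell}$ (resp.\ $w_\ell, w_{\ol\ell}$) at their natural positions on each side. Since all four are $\epsilon$, this does not change either side, but the left-hand side now reads as $v_{id_n}$ and the right-hand side as $w_{id_n}$. This yields the conclusion directly, and no further case analysis or appeal to word-combinatorics lemmas is needed. There is essentially no obstacle in this subcase; it is the degenerate situation where $\Sys_5$ already provides the conclusion.
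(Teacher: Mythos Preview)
Your argument is correct and is exactly the approach the paper takes: the paper's proof is the single line ``It is an obvious consequence of $\Sys_5$,'' and you have simply spelled out why---namely that $\Sys_5$ is the equation for $\pi$ omitting index $\ell$, and inserting the empty factors $v_\ell,v_{\ol\ell},w_\ell,w_{\ol\ell}$ back in yields $v_{id_n}=w_{id_n}$.
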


\begin{proof}
It is an obvious consequence of $\Sys_5$.\qed
\end{proof}

\subsubsection{Subcase $|v_\ell| = |w_\ell| \neq 0$ and $|v_{\ol \ell}| = |w_{\ol \ell}|
  \neq 0$}

\begin{lemma}\label{lemma:formabove}
There exist $t_1,t_2\in\Sigma^*$ such that $t_1t_2$ is primitive and
$\alpha_0,\beta_0,\alpha_\ell,\beta_\ell\geq 0$ such that:

$$
v_\ell =  (t_1t_2)^{\alpha_\ell}\qquad w_\ell=
(t_2t_1)^{\beta_\ell}\qquad
w_0 = \epsilon \Rightarrow v_0 = (t_2t_1)^{\alpha_0}t_2\qquad
v_0 = \epsilon \Rightarrow w_0 = (t_1t_2)^{\beta_0}t_1
$$
\end{lemma}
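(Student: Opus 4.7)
The plan is to extract the decomposition from the identity
\[
v_0 (v_\ell)^a v_m (v_{\ol \ell})^a v_{\ol 0} \ =\ w_0 (w_\ell)^a w_m (w_{\ol \ell})^a w_{\ol 0}
\]
of $\Sys_2$, which holds for every $a\geq 0$. By the normalisation assumption, at least one of $v_0, w_0$ is empty, so the shift $d = \max(|v_0|,|w_0|)$ between the occurrence of $(v_\ell)^a$ on the left and that of $(w_\ell)^a$ on the right is a constant independent of $a$. Since $|v_\ell|=|w_\ell|\neq 0$, the aligned overlap of these two blocks has length $a|v_\ell|-d$, which for $a$ large exceeds $|v_\ell|+|w_\ell|-\gcd(|v_\ell|,|w_\ell|) = |v_\ell|$. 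Lemma \ref{lemma:commute}.2 then forces the primitive roots of $v_\ell$ and $w_\ell$ to be conjugate, yielding $t_1,t_2\in\Sigma^*$ with $t_1t_2$ primitive and exponents $\alpha_\ell,\beta_\ell \geq 0$ such that $v_\ell=(t_1t_2)^{\alpha_\ell}$ and $w_\ell=(t_2t_1)^{\beta_\ell}$.

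Next, to pin down $v_0$ in the case $w_0 = \epsilon$, set $p = t_1 t_2$ and $q = t_2 t_1$. Comparing the prefixes of length $a|v_\ell|$ on both sides of $\Sys_2$ (the right-hand prefix being exactly $q^{a\beta_\ell}$) shows that $q^{a\beta_\ell}$ is a prefix of $v_0 p^{a\alpha_\ell}$ for every $a\geq 1$. Letting $a$ tend to infinity yields the infinite identity $v_0 p^\omega = q^\omega$. When $t_1\neq\epsilon$, Lemma \ref{lemma:overlap}.\ref{lemma:overlap2:B} directly concludes $v_0 = (t_2 t_1)^{\alpha_0} t_2$. If instead $t_1 = \epsilon$ (so $p = q = t_2$), Lemma \ref{lemma:overlap}.\ref{lemma:overlap2:D} gives $v_0 \in t_2^*$, from which the stated form follows after swapping $t_1$ and $t_2$ in the decomposition (harmless since one of them is empty). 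The symmetric case $v_0 = \epsilon$ is handled in the same way, yielding $w_0 = (t_1 t_2)^{\beta_0} t_1$.

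The main obstacle is the alignment analysis: checking that the shift between the two $\ell$-power blocks is really constant in $a$, so that the common overlap grows linearly and exceeds the threshold of Lemma \ref{lemma:commute}.2, and then passing rigorously from the family of finite prefix identities to the single infinite identity $v_0 p^\omega = q^\omega$. Once this is in place, the remainder is a routine periodicity manipulation, closely mirroring the first two steps of the proof of Lemma \ref{lem:allform}.
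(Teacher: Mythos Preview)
Your argument is correct and follows the same route as the paper: use $\Sys_2$ and Fine--Wilf (Lemma~\ref{lemma:commute}.2) to obtain conjugate primitive roots $t_1t_2$ and $t_2t_1$ for $v_\ell$ and $w_\ell$, then pass to the infinite identity $v_0(t_1t_2)^\omega=(t_2t_1)^\omega$ and read off the shape of $v_0$ (symmetrically $w_0$). The only cosmetic differences are that the paper invokes Lemma~\ref{lemma:overlap}.\ref{lemma:overlap2:A} (with $y=\epsilon$) where you invoke its specialisation Lemma~\ref{lemma:overlap}.\ref{lemma:overlap2:B}, and that it handles the degenerate case $t_1=\epsilon$ by swapping $t_1,t_2$ \emph{before} applying the lemma rather than afterwards.
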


\begin{proof}
  Remind that by hypothesis, $v_\ell\neq \epsilon$. Then $w_\ell\neq \epsilon$. By $\Sys_2$,
  there exists $a\geq 0$ such that $(v_\ell)^a$ and $(w_\ell)^a$ 
  have a common factor of length at least $|v_\ell| + |w_\ell|  -
  \text{gcd}(|v_\ell|,|w_\ell|)$. By the fundamental lemma, there
  exist $t_1,t_2\in\Sigma^*$ such that $t_1t_2$ is primitive,
  $v_\ell \in (t_1t_2)^+$ and $w_\ell \in (t_2t_1)^+$.
  We now infer the form of $v_0$ when $w_0 = \epsilon$ (the form
  of $w_0$ when $v_0=\epsilon$ can be obtained by symmetry).
  Wlog, we can assume that $t_1\neq \epsilon$. Indeed,
  since $v_\ell\neq\epsilon$, we have $t_1t_2\neq\epsilon$, so that
  if $t_1 = \epsilon$, then we take $t'_1 = t_2$ and $t'_2 = t_1 =
  \epsilon$, and we have $v_\ell \in (t'_1t'_2)^+$ and $w_\ell\in
  (t'_2t'_1)^+$. By $\Sys_2$, we get
  $v_0(t_1t_2)^\omega = (t_2t_1)^\omega$. By Lemma \ref{lemma:overlap}.\ref{lemma:overlap2:A},
  $v_0 = (t_2t_1)^{\alpha_0}t_2$ for some $\alpha_0\geq 0$.\qed
\end{proof}

Since by hypothesis we have $|v_{\ol \ell}| = |w_{\ol \ell}|\neq 0$, 
by considering the mirror of the equations, we can prove the following
corollary of Lemma \ref{lemma:formabove}:

\begin{corollary}\label{coro:formabove}
There exist $t_3,t_4\in\Sigma^*$ such that $t_3t_4$ is primitive and
$\alpha_{\ol 0},\beta_{\ol 0},\alpha_{\ol \ell},\beta_{\ol \ell}\geq 0$ such that:

$$
v_{\ol \ell} =  (t_3t_4)^{\alpha_{\ol \ell}}\qquad w_\ell=
(t_4t_3)^{\beta_{\ol \ell}}\qquad
w_{\ol 0} = \epsilon \Rightarrow v_{\ol 0} = (t_3t_4)^{\alpha_{\ol 0}}t_3\qquad
v_{\ol 0} = \epsilon \Rightarrow w_{\ol 0} = (t_4t_3)^{\beta_0}t_4
$$
\end{corollary}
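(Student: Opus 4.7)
The plan is to reduce the claim to Lemma~\ref{lemma:formabove} by applying the mirror operation. Recall that mirror reverses concatenation: $\overline{xy} = \overline{y}\,\overline{x}$ and $\overline{x^n} = (\overline{x})^n$, and that a word is primitive iff its mirror is primitive.

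First, I would take the mirror of each equation in $\Sys_1,\dots,\Sys_5$. Because mirroring reverses the order of factors, the mirror of the type-$(2)$ equation $v_0(v_\ell)^a v_m(v_{\ol \ell})^a v_{\ol 0} = w_0(w_\ell)^a w_m(w_{\ol \ell})^a w_{\ol 0}$ is
\[
\overline{v_{\ol 0}}\,(\overline{v_{\ol \ell}})^a\,\overline{v_m}\,(\overline{v_\ell})^a\,\overline{v_0} \;=\; \overline{w_{\ol 0}}\,(\overline{w_{\ol \ell}})^a\,\overline{w_m}\,(\overline{w_\ell})^a\,\overline{w_0},
\]
and analogously for the other equations. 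Setting $v'_i = \overline{v_{\ol i}}$, $v'_{\ol i} = \overline{v_i}$, $v'_m = \overline{v_m}$ (and the same for $w$), the mirrored system has exactly the same syntactic shape as $\Sys_1,\dots,\Sys_5$. In particular, the mirrored version of Proposition~\ref{prop:lengtheq2} trivially holds, and the mirrored type-$(2)$ system is satisfied for all $a \in \mathbb{N}$.

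Second, I would check the hypothesis of Lemma~\ref{lemma:formabove} in the primed system. We have $|v'_\ell| = |v_{\ol \ell}|$ and $|w'_\ell| = |w_{\ol \ell}|$, so the case assumption $|v_{\ol \ell}| = |w_{\ol \ell}| \neq 0$ translates into $|v'_\ell| = |w'_\ell| \neq 0$, which is exactly the subcase hypothesis under which Lemma~\ref{lemma:formabove} was proved (the remainder of the subcase setup, like the normalization $v'_0 = \epsilon$ or $w'_0 = \epsilon$, is inherited from the standing assumption $v_{\ol 0} = \epsilon$ or $w_{\ol 0} = \epsilon$). Applying the lemma yields primitive $t'_1 t'_2$ and exponents such that
\[
v'_\ell = (t'_1 t'_2)^{\alpha'_\ell}, \quad w'_\ell = (t'_2 t'_1)^{\beta'_\ell},
\]
and the two implications about $v'_0$ and $w'_0$.

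Finally, I would translate back by defining $t_3 := \overline{t'_2}$ and $t_4 := \overline{t'_1}$. Then $t_3 t_4 = \overline{t'_2}\,\overline{t'_1} = \overline{t'_1 t'_2}$, so $t_3 t_4$ is primitive because $t'_1 t'_2$ is. Mirroring $v'_\ell = (t'_1 t'_2)^{\alpha'_\ell}$ gives $v_{\ol \ell} = (\overline{t'_2}\,\overline{t'_1})^{\alpha'_\ell} = (t_3 t_4)^{\alpha'_\ell}$, and mirroring $w'_\ell = (t'_2 t'_1)^{\beta'_\ell}$ gives $w_{\ol \ell} = (t_4 t_3)^{\beta'_\ell}$. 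For the boundary cases, $w_{\ol 0} = \epsilon$ is equivalent to $w'_0 = \epsilon$, in which case $v'_0 = (t'_2 t'_1)^{\alpha'_0} t'_2$ and mirroring yields $v_{\ol 0} = \overline{t'_2}\,(\overline{t'_1}\,\overline{t'_2})^{\alpha'_0} = t_3(t_4 t_3)^{\alpha'_0} = (t_3 t_4)^{\alpha'_0} t_3$; the case $v_{\ol 0} = \epsilon$ is symmetric and gives $w_{\ol 0} = (t_4 t_3)^{\beta'_0} t_4$. Setting $\alpha_{\ol \ell} = \alpha'_\ell$, $\beta_{\ol \ell} = \beta'_\ell$, $\alpha_{\ol 0} = \alpha'_0$, $\beta_{\ol 0} = \beta'_0$ yields exactly the conclusion of the corollary. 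The only delicate point is checking that the shape of the equations (and hence the proof of Lemma~\ref{lemma:formabove}) is genuinely invariant under mirroring, but this is immediate once one rewrites $\Sys_1,\dots,\Sys_5$ with the primed variables.
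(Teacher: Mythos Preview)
Your proposal is correct and follows exactly the approach the paper indicates: the paper proves the corollary simply by ``considering the mirror of the equations'' and applying Lemma~\ref{lemma:formabove}, and you have spelled out precisely that reduction (including the correct choice $t_3=\overline{t'_2}$, $t_4=\overline{t'_1}$ so that $t_3t_4=\overline{t'_1t'_2}$ is primitive). Your derivation in fact yields $w_{\ol\ell}=(t_4t_3)^{\beta_{\ol\ell}}$, which is what the mirror argument produces; the ``$w_\ell$'' in the printed statement is a typo for $w_{\ol\ell}$.
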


Under certain conditions, we can characterize the form of 
$v_i$'s and $w_i$'s:

\begin{lemma}\label{lemma:uk_neq_vk}\
If there exists $1\leq k\leq n$ such that $|v_k| \neq |w_k|$ then
there exist $\alpha_1,\dots,\alpha_n,\beta_1,\dots,\beta_n\geq 0$ such that
for all $i\neq k$:
$$
v_i =  (t_1t_2)^{\alpha_i}\qquad w_i= (t_2t_1)^{\beta_i}\qquad
$$
\end{lemma}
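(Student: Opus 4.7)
I will treat only the case $|v_k| > |w_k|$; the case $|v_k| < |w_k|$ will follow by symmetry. By Proposition~\ref{prop:lengtheq2} this gives $|v_{\ol k}| < |w_{\ol k}|$, and since $|v_\ell| = |w_\ell|$, Lemma~\ref{lemma:formabove} forces $\alpha_\ell = \beta_\ell$. My strategy parallels that of Lemma~\ref{lem:allform}, but with the length-asymmetry now located at index $k$ rather than at index $\ell$: the idea is to transfer this asymmetry into rigidity of the remaining $v_i$'s and $w_i$'s via the word combinatorics of Lemma~\ref{lemma:overlap}.

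The first step will be to produce infinite-word identities by passing $a \to \infty$ in $\Sys_3$ (with index $k$), $\Sys_3$ (with index $i$) and $\Sys_4$ (with indices $i,k$). After substituting $v_\ell = (t_1t_2)^{\alpha_\ell}$ and $w_\ell = (t_2t_1)^{\alpha_\ell}$ and comparing both sides letter by letter in the ``middle'' region, where $(t_1t_2)^{a\alpha_\ell}$ on the left is shifted against $(t_2t_1)^{a\alpha_\ell}$ on the right by the corresponding prefix-length difference, I will obtain the three identities
\begin{align*}
v_0 v_k (t_1t_2)^\omega &\,=\, w_0 w_k (t_2t_1)^\omega, \\
v_0 v_i (t_1t_2)^\omega &\,=\, w_0 w_i (t_2t_1)^\omega, \\
v_0 v_i v_k (t_1t_2)^\omega &\,=\, w_0 w_i w_k (t_2t_1)^\omega.
\end{align*}

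Next I will assume WLOG that one of $v_0, w_0$ is $\epsilon$, by stripping common prefixes exactly as in the $|v_\ell| > |w_\ell|$ case; say $w_0 = \epsilon$. Substituting the first identity into the third cancels the factor $w_k (t_2t_1)^\omega = v_0 v_k (t_1t_2)^\omega$ and reduces the third identity to $v_0 v_i v_k (t_1t_2)^\omega = w_i v_0 v_k (t_1t_2)^\omega$. I will then apply items~\ref{lemma:overlap2:C}, \ref{lemma:overlap2:F} and~\ref{lemma:overlap2:G} of Lemma~\ref{lemma:overlap} to this cancellation identity, combined with the second infinite-word identity above, to force both $v_i$ and $w_i$ to be powers of conjugates of the primitive word $t_1t_2$. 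Finally, the length-shift information extracted from the first identity via Lemma~\ref{lemma:overlap}.\ref{lemma:overlap2:B}---namely $|v_0 v_k| - |w_0 w_k| \equiv |t_2| \pmod{|t_1t_2|}$---will pin down the specific conjugates as $t_1t_2$ for $v_i$ and $t_2t_1$ for $w_i$.

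The main obstacle will be precisely this last step of selecting the correct conjugate, since the raw conclusion from Lemma~\ref{lemma:overlap}.\ref{lemma:overlap2:C} only guarantees membership in \emph{some} conjugate class $(t_4 t_3)^*$ with $t_3 t_4 = t_1 t_2$. Identifying these classes with $t_1t_2$ (respectively $t_2t_1$) will require combining several of the derived identities simultaneously so that the shifts modulo $|t_1t_2|$ are pairwise compatible, and also a careful treatment of the degenerate situations in which $t_1$ or $t_2$ is empty (where $(t_1t_2)^\omega = (t_2t_1)^\omega$ collapses the asymmetry).
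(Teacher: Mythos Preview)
Your setup is sound up to the three infinite-word identities, and your use of Lemma~\ref{lemma:formabove} to write $v_0=(t_2t_1)^{\alpha_0}t_2$ (when $w_0=\epsilon$) matches the paper. The gap is in the choice of the $\Sys_4$ instance: you take the order $i,k$, obtaining $v_0v_iv_k(t_1t_2)^\omega = w_iw_k(t_2t_1)^\omega$, and after substituting the first identity you arrive at
\[
v_0\,v_i\,v_k\,(t_1t_2)^\omega \;=\; w_i\,v_0\,v_k\,(t_1t_2)^\omega .
\]
Here $v_k$ sits in the \emph{inner} position on both sides and does not cancel; the hypothesis $|v_k|\neq|w_k|$ has effectively been absorbed into the substitution without producing any nontrivial period shift on $v_i$. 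Concretely, when $|v_i|=|w_i|$ your equation collapses to $v_0v_i=w_iv_0$, and together with your identity~(B) this is satisfied for instance by $v_0=t_2$, $v_i=ct_2$, $w_i=t_2c$ for an arbitrary letter $c$, while one can still choose $v_k\in(t_1t_2)^*$, $w_k\in(t_2t_1)^*$ of different lengths so that identity~(A) holds. Thus your three identities do \emph{not} force $v_i\in(t_1t_2)^*$. Your final ``length-shift'' argument does not rescue this: the congruence $|v_0v_k|-|w_0w_k|\equiv |t_2|\pmod{|t_1t_2|}$ is the same one already given by $\Sys_2$, so it carries no new modular information about $v_i$.

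The paper avoids this by taking $\Sys_4$ in the \emph{other} order $k,i$, yielding $v_0v_kv_i(t_1t_2)^\omega=w_kw_i(t_2t_1)^\omega$. Now $v_k,w_k$ sit outermost: writing $v_0v_k=w_kw$ (or $w_k=v_0v_kv$) from identity~(A) and stripping the common prefix $w_k$ gives $w\,v_i(t_1t_2)^\omega=w_i(t_2t_1)^\omega$; comparing with identity~(B) yields $w\,v_i(t_1t_2)^\omega=v_0\,v_i(t_1t_2)^\omega$. Since $w=(t_2t_1)^\beta t_2$ and $v_0=(t_2t_1)^{\alpha_0}t_2$ with $\beta\neq\alpha_0$ \emph{precisely because} $|v_k|\neq|w_k|$, one obtains $(t_1t_2)^{|\alpha_0-\beta|}v_i(t_1t_2)^\omega=v_i(t_1t_2)^\omega$ with a strictly positive exponent, whence $v_i\in(t_1t_2)^*$ by Lemma~\ref{lemma:overlap}.\ref{lemma:overlap2:F}; then $w_i\in(t_2t_1)^*$ follows from~(B) and Lemma~\ref{lemma:overlap}.\ref{lemma:overlap2:D}. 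Swapping the roles of $i$ and $k$ in $\Sys_4$ is exactly the missing idea.
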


\begin{proof}
 There are two cases: either $v_0=\epsilon$ or $w_0=\epsilon$. We
 consider the second case only, the first being symmetric. By Lemma
 \ref{lemma:formabove}, $v_0 = (t_2t_1)^{\alpha_0}t_2$ for some
 $\alpha_0\geq 0$ and $t_1,t_2\in\Sigma^*$ with $t_1t_2$ primitive.
 By $\Sys_3$ and $\Sys_4$, we have:

  $$
  \begin{array}{lllllllllllllllllll}
    (1) & v_0v_i(t_1t_2)^\omega & = & w_i(t_2t_1)^\omega & \quad & (2) &
    v_0v_k(t_1t_2)^\omega & = & w_k(t_2t_1)^\omega & \quad & (3) &
    v_0v_kv_i(t_1t_2)^\omega & = & w_kw_i(t_2t_1)^\omega
  \end{array}
  $$

  \noindent  
  We again consider two cases:
  \begin{enumerate}
    \item $v_0v_k = w_kw$ for some $w$. $\Sys_2$
      gives $w(t_1t_2)^\omega = (t_2t_1)^\omega$. By Lemma
      \ref{lemma:overlap}.\ref{lemma:overlap2:A}, $w = (t_2t_1)^\beta t_2$ for some
      $\beta\geq 0$. $\Sys_3$ gives $wv_i(t_1t_2)^\omega =
      w_i(t_2t_1)^\omega$, and by $\Sys_1$, we get
      $wv_i(t_1t_2)^\omega = v_0v_i(t_1t_2)^\omega$, i.e.
      $(t_2t_1)^\beta t_2 v_i (t_1t_2)^\omega = (t_2t_1)^{\alpha_0}
      t_2v_i(t_1t_2)^\omega$. Since $|v_k|\neq |w_k|$ and 
      $v_0v_k = w_kw$, $|v_0|\neq |w|$, and 
      $\beta\neq \alpha_0$. Thus by taking
      $\gamma = |\alpha_0-\beta|>0$, we get
      $(t_1t_2)^{\gamma}v_i(t_1t_2)^\omega = v_i(t_1t_2)^\omega$.
      By Lemma \ref{lemma:overlap}.\ref{lemma:overlap2:G}, $v_i\in (t_1t_2)^*$.

    \item $w_k = v_0v_kv$ for some $v\neq \epsilon$. $\Sys_2$
      gives $(t_1t_2)^\omega = v(t_2t_1)^\omega$,
      i.e. $(t_1t_2)^\omega = vt_2(t_1t_2)^\omega$. Therefore by Lemma
      \ref{lemma:overlap}.\ref{lemma:overlap2:D}, $vt_2\in (t_1t_2)^\omega$. Since
      $v\neq\epsilon$, we get $v =  (t_1t_2)^\eta t_1$ for some
      $\eta\geq 0$. Now, $\Sys_3$ gives $v_i(t_1t_2)^\omega = vw_i(t_2t_1)^\omega$,
      and by $\Sys_1$, $v_i(t_1t_2)^\omega =
      vv_0v_i(t_1t_2)^\omega = (t_1t_2)^\eta t_1(t_2t_1)^{\alpha_0}
      t_2 v_i (t_1t_2)^\omega = (t_1t_2)^{\eta+\alpha_0+1}
      v_i(t_1t_2)^\omega$. By Lemma \ref{lemma:overlap}.\ref{lemma:overlap2:G}, $v_i\in
      (t_1t_2)^*$.
  \end{enumerate}
  
  \noindent In both cases $v_i\in (t_1t_2)^*$. By $\Sys_1$ 
  $v_0v_i(t_1t_2)^\omega = (t_2t_1)^\omega = w_i(t_2t_1)^\omega$ and
  by Lemma \ref{lemma:overlap}.\ref{lemma:overlap2:D} $w_i\in (t_2t_1)^*$.
\end{proof}

Again by considering the mirror of the equations, we can prove the
following corollary of Lemma \ref{lemma:uk_neq_vk}:

\begin{corollary}\label{coro:formeenforme}
If there exists $1\leq k\leq n$ such that $|v_{\ol k}| \neq |w_{\ol k}|$ then
there exist $\alpha_{\ol 1},\dots,\alpha_{\ol n},\beta_{\ol
  1},\dots,\beta_{\ol n}\geq 0$ such that
for all $i\neq k$:
$$
v_{\ol i} =  (t_3t_4)^{\alpha_{\ol i}}\qquad w_{\ol i}=
(t_4t_3)^{\beta_{\ol i}}\qquad
$$
\end{corollary}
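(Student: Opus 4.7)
The plan is to obtain Corollary \ref{coro:formeenforme} as a direct consequence of Lemma \ref{lemma:uk_neq_vk} by applying that lemma to the mirrored system of equations. The mirror map $t \mapsto \ol{t}$ is a length-preserving, primitivity-preserving anti-automorphism of the free monoid. Taking the mirror of every equation in $\Sys_1,\ldots,\Sys_5$ produces a system of exactly the same syntactic shape, where $\ol{v_{\ol i}}$ plays the role of $v_i$, $\ol{v_i}$ plays the role of $v_{\ol i}$, and similarly for $w$ and $u$; the middle words $\ol{v_m}, \ol{w_m}, \ol{u_m}$ remain middle words.

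First I would verify that the hypotheses of Section \ref{sec:equations} are preserved under mirroring: the requirement $u_i, u_{\ol i} \neq \epsilon$ becomes $\ol{u_{\ol i}}, \ol{u_i} \neq \epsilon$; the side condition $|u_\pi| < |u_{id_n}|$ defining $\Sys$ is preserved because $|\ol{u_\pi}| = |u_\pi|$; and the distinguished index $\ell$ remains minimal in the mirror system since $|\ol{u_{\ol i}}\,\ol{u_i}| = |u_i u_{\ol i}|$. Next, the primitive decomposition that Lemma \ref{lemma:uk_neq_vk} takes as input is supplied in the mirror setting by Corollary \ref{coro:formabove}: mirroring $v_{\ol \ell} = (t_3 t_4)^{\alpha_{\ol \ell}}$ and $w_{\ol \ell} = (t_4 t_3)^{\beta_{\ol \ell}}$ yields $\ol{v_{\ol \ell}} = (\ol{t_4}\,\ol{t_3})^{\alpha_{\ol \ell}}$ and $\ol{w_{\ol \ell}} = (\ol{t_3}\,\ol{t_4})^{\beta_{\ol \ell}}$, with $\ol{t_4}\,\ol{t_3}$ primitive; this puts $t'_1 = \ol{t_4}$ and $t'_2 = \ol{t_3}$ into the roles that $t_1, t_2$ play inside Lemma \ref{lemma:uk_neq_vk}.

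The current hypothesis $|v_{\ol k}| \neq |w_{\ol k}|$ mirrors to $|\ol{v_{\ol k}}| \neq |\ol{w_{\ol k}}|$, which is exactly the precondition of Lemma \ref{lemma:uk_neq_vk} applied to the mirror system. Invoking that lemma provides, for every $i \neq k$, nonnegative integers $\alpha_{\ol i}, \beta_{\ol i}$ such that $\ol{v_{\ol i}} = (\ol{t_4}\,\ol{t_3})^{\alpha_{\ol i}}$ and $\ol{w_{\ol i}} = (\ol{t_3}\,\ol{t_4})^{\beta_{\ol i}}$. Mirroring these identities once more, using $\ol{\ol{t_4}\,\ol{t_3}} = t_3 t_4$ and $\ol{\ol{t_3}\,\ol{t_4}} = t_4 t_3$, yields the required $v_{\ol i} = (t_3 t_4)^{\alpha_{\ol i}}$ and $w_{\ol i} = (t_4 t_3)^{\beta_{\ol i}}$.

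No new combinatorial content is introduced. The only real work is the bookkeeping around the mirror substitution: checking that each of the five subsystems $\Sys_1,\ldots,\Sys_5$ transfers verbatim, that the minimality of $\ell$ and the nonemptiness conditions carry over, and that the primitive decomposition from Corollary \ref{coro:formabove} plays exactly the role of the decomposition assumed inside the proof of Lemma \ref{lemma:uk_neq_vk}. That bookkeeping is the only place where a subtle mismatch could arise, so it is where the main care should be invested; everything else is a purely formal consequence of the mirror being an anti-automorphism of $\Sigma^*$.
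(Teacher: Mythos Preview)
Your proposal is correct and follows exactly the approach the paper uses: the paper simply states that the corollary follows by considering the mirror of the equations and applying Lemma~\ref{lemma:uk_neq_vk}. Your version spells out the mirror bookkeeping (how $\ol{v_{\ol i}},\ol{w_{\ol i}}$ take over the roles of $v_i,w_i$, how Corollary~\ref{coro:formabove} supplies the primitive pair $t'_1=\ol{t_4},\,t'_2=\ol{t_3}$, and how mirroring back yields $(t_3t_4)^{\alpha_{\ol i}}$ and $(t_4t_3)^{\beta_{\ol i}}$) more carefully than the paper does, but there is no substantive difference.
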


\begin{lemma}\label{lemma:concatenation}
Let $\alpha\in\mathbb{N}$. If for all $i\in \{1\dots n\}$,
$|v_i|=|w_i|$ and there exist $a_i, b_i\in\mathbb{N}$ such that:
\begin{eqnarray}
(t_2t_1)^\alpha t_2 v_i (t_1t_2)^{a_i} = w_i (t_2t_1)^{b_i}t_2\label{eqn:toto}
\end{eqnarray}

then $$(t_2t_1)^\alpha t_2 v_1\dots v_n  = w_1 \dots w_n (t_2t_1)^{\alpha}t_2$$
\end{lemma}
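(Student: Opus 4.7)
The plan is to prove this by induction on $n$, with the key observation being that each individual hypothesis (Eq.~\ref{eqn:toto}) already implies a ``local'' cancellation identity of the form $(t_2t_1)^\alpha t_2 v_i = w_i(t_2t_1)^\alpha t_2$.

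First I would exploit the length condition $|v_i|=|w_i|$. Taking lengths in Eq.~\ref{eqn:toto} gives $(\alpha + a_i)|t_1t_2| + |v_i| + |t_2| = b_i|t_1t_2| + |w_i| + |t_2|$. Assuming $t_1t_2\neq\epsilon$ (the degenerate case $t_1t_2 = \epsilon$ reduces trivially to $t_2 v_i = w_i t_2$, and then the conclusion follows by a similar but simpler induction), this yields $b_i = \alpha + a_i$. Plugging this back, and using the trivial identity $(t_2t_1)^\beta t_2 = t_2(t_1t_2)^\beta$ (telescoping), the right-hand side rewrites as $w_i(t_2t_1)^\alpha t_2 (t_1t_2)^{a_i}$. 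Eq.~\ref{eqn:toto} becomes
\[
(t_2t_1)^\alpha t_2 v_i (t_1t_2)^{a_i} = w_i (t_2t_1)^\alpha t_2 (t_1t_2)^{a_i},
\]
and cancelling the common suffix $(t_1t_2)^{a_i}$ on the right gives the key local identity
\[
(t_2t_1)^\alpha t_2 v_i = w_i (t_2t_1)^\alpha t_2 \qquad (\star_i).
\]

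With $(\star_i)$ in hand for every $i\in\{1,\dots,n\}$, the rest is a straightforward induction on $n$. The base case $n=1$ is exactly $(\star_1)$. For the inductive step, assuming $(t_2t_1)^\alpha t_2 v_1\dots v_k = w_1\dots w_k (t_2t_1)^\alpha t_2$, we compute
\[
(t_2t_1)^\alpha t_2 v_1\dots v_k v_{k+1} = w_1\dots w_k (t_2t_1)^\alpha t_2 v_{k+1} = w_1\dots w_k w_{k+1}(t_2t_1)^\alpha t_2,
\]
where the first equality uses the induction hypothesis and the second uses $(\star_{k+1})$.

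There is no real obstacle here: the only subtle point is justifying the cancellation of $(t_1t_2)^{a_i}$, which is purely a free-monoid (right-cancellation) fact, and handling the degenerate case $t_1t_2 = \epsilon$ separately. Once $(\star_i)$ is established, the telescoping induction is mechanical.
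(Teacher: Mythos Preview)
Your proof is correct and follows essentially the same approach as the paper: derive $b_i=\alpha+a_i$ from the length hypothesis, cancel to obtain the local identity $(t_2t_1)^\alpha t_2\, v_i = w_i\,(t_2t_1)^\alpha t_2$, and conclude by a straightforward induction on $n$. Your version is slightly more detailed (explicit telescoping identity and the degenerate case $t_1t_2=\epsilon$), but the structure is identical.
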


\begin{proof}
From Eq.\ref{eqn:toto}, and $|v_i|=|w_i|$ we deduce that
$b_i=\alpha+a_i$, so that:

\begin{eqnarray}
(t_2 t_1)^{\alpha}t_2  v_i  = w_i (t_2 t_1)^{\alpha} t_2 \label{lemma:concatenation:eq1}
\end{eqnarray}

By induction on $n$ we show that $(t_2 t_1)^{\alpha}t_2 v_1\dots v_n= w_1\dots w_n (t_2 t_1)^{\alpha}t_2$.
Indeed, it is trivial if $n=0$. So suppose it is true for $n-1$, we have:
$$
\begin{array}{lllll}
&& (t_2 t_1)^{\alpha}t_2 v_1\dots v_{n} &\\
& = & w_1\dots w_{n-1} (t_2 t_1)^{\alpha}t_2 v_n&  \text{(by induction hypothesis)}\\
& = & w_1\dots w_{n} (t_2 t_1)^{\alpha}t_2& \text{(by (\ref{lemma:concatenation:eq1}))}
\end{array}
$$
\qed
\end{proof}

\begin{proposition}\label{prop:cases}
  One of the following propositions holds:
  
  \begin{enumerate}
    \item $\forall i\in\{1,\dots,n\}: v_i =  (t_1t_2)^{\alpha_i}\wedge
      w_i= (t_2t_1)^{\beta_i}\wedge v_{\ol i} = (t_3t_4)^{\alpha_{\ol
          i}}\wedge w_{\ol i} = (t_4t_3)^{\beta_{\ol i}}$
    \item $\exists k\in\{1,\dots,n\} \forall i\neq k: |v_i|=|w_i|$ and $|v_{\ol i}| =
      |w_{\ol i}|$
  \end{enumerate}
\end{proposition}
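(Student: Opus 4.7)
The plan is to perform a case analysis on how many indices carry a length mismatch. Let $K = \{\, i \in \{1,\dots,n\} : |v_i| \neq |w_i|\,\}$. By Proposition \ref{prop:lengtheq2} we have $|v_i|+|v_{\ol i}| = |w_i|+|w_{\ol i}|$, so $|v_i|\neq |w_i|$ iff $|v_{\ol i}|\neq |w_{\ol i}|$. In particular, $K$ coincides with the analogous set defined on the barred components, which is what makes the dichotomy possible with a single threshold.

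If $|K|\leq 1$, I take $k$ to be the unique element of $K$ when $K\neq \emptyset$, and any index in $\{1,\dots,n\}$ otherwise. By construction, for every $i\neq k$ we simultaneously have $|v_i|=|w_i|$ and $|v_{\ol i}|=|w_{\ol i}|$, which is exactly alternative (2) of the proposition.

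If $|K|\geq 2$, I pick two distinct indices $k_1,k_2\in K$ and extract the strong form from the two already-proven lemmas. Applying Lemma~\ref{lemma:uk_neq_vk} with $k=k_1$ supplies exponents so that $v_i=(t_1t_2)^{\alpha_i}$ and $w_i=(t_2t_1)^{\beta_i}$ for every $i\neq k_1$; in particular this handles $v_{k_2},w_{k_2}$. Reapplying Lemma~\ref{lemma:uk_neq_vk} with $k=k_2$ covers the remaining index $k_1$ in the same form. Because $K$ equals the barred version, $k_1,k_2$ are also indices where $|v_{\ol{k_i}}|\neq |w_{\ol{k_i}}|$, and two analogous applications of Corollary~\ref{coro:formeenforme} give the required shape for all $v_{\ol i},w_{\ol i}$. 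This yields alternative (1).

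The one subtlety, and the main thing to verify, is that the $t_1,t_2$ appearing in the two invocations of Lemma~\ref{lemma:uk_neq_vk} are literally the \emph{same} pair, and likewise for $t_3,t_4$ in the two invocations of Corollary~\ref{coro:formeenforme}. This is fine: these words are extracted in Lemma~\ref{lemma:formabove} (resp.\ Corollary~\ref{coro:formabove}) purely from the primitive-root factorisation of $v_\ell$ and $w_\ell$ (resp.\ $v_{\ol\ell}$ and $w_{\ol\ell}$), and such a factorisation is unique; hence the extracted pair does not depend on the choice of $k$. Once this is observed, the two alternatives exhaust the possibilities and the proposition follows.
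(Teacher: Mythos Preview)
Your proof is correct and follows essentially the same approach as the paper: a case split on whether at most one index has $|v_i|\neq|w_i|$, using Proposition~\ref{prop:lengtheq2} to transfer the mismatch to the barred side, and applying Lemma~\ref{lemma:uk_neq_vk} and Corollary~\ref{coro:formeenforme} twice (once for each of two distinct mismatch indices) to cover every~$i$. Your explicit remark that the pair $t_1,t_2$ (resp.\ $t_3,t_4$) is fixed once and for all by Lemma~\ref{lemma:formabove} (resp.\ Corollary~\ref{coro:formabove}) and hence coincides across both invocations is a point the paper leaves implicit.
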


\begin{proof}
Indeed, if there are $k\neq k'$ such that
$|v_k|\neq|w_k|$ and $|v_{k'}|\neq|w_{k'}|$, then by Lemma
$\ref{lemma:uk_neq_vk}$ $\forall i: v_i =  (t_1t_2)^{\alpha_i}\wedge
w_i= (t_2t_1) ^{\beta_i}$. By Proposition \ref{prop:lengtheq2}, 
$|v_{\ol k}|\neq |w_{\ol k}|$ and $|v_{\ol k'}|\neq |w_{\ol k'}|$ so
that by Corollary \ref{coro:formeenforme}, for all $i$, 
$v_{\ol i} = (t_3t_4)^{\alpha_{\ol i}}$ and $w_{\ol i} =
(t_4t_3)^{\beta_{\ol i}}$.

Otherwise we have at most one $k$ with
$|v_k|\neq|w_k|$, and for all $i\neq k$, $|v_i| = |w_i|$, and by
Prop. \ref{prop:lengtheq2}, $|v_{\ol i}| = |w_{\ol i}|$.\qed
\end{proof}

We now prove Theorem \ref{thm:sys} for each of the cases
of Prop. \ref{prop:cases}. This is done in two lemmas: Lemma
\ref{lem:case1} and Lemma \ref{lem:case2}.

\begin{lemma}\label{lem:case1}
If for all $i\in\{1,\dots,n\}$, $v_i =  (t_1t_2)^{\alpha_i}$,
$w_i= (t_2t_1)^{\beta_i}$, $v_{\ol i} = (t_3t_4)^{\alpha_{\ol i}}$ and
$w_{\ol i} = (t_4t_3)^{\beta_{\ol i}}$, then 
$v_0\dots v_n v_m v_{\ol{n}}\dots v_{\ol
  0} = w_0 \dots w_{n} w_m w_{\ol{n}}\dots w_{\ol{0}}$.
\end{lemma}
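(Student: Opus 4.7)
The plan is to leverage $\Sys_5$ together with the length equality $|v_\ell|=|w_\ell|$.

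I begin with the standard normalization: by removing common prefixes and suffixes, WLOG $v_0=\epsilon$ or $w_0=\epsilon$, and $v_{\ol 0}=\epsilon$ or $w_{\ol 0}=\epsilon$. Lemma~\ref{lemma:formabove} and Corollary~\ref{coro:formabove} then give explicit forms for the non-empty one in each pair (e.g.\ $v_0=(t_2t_1)^{\alpha_0}t_2$ when $w_0=\epsilon$). Crucially, since $|v_\ell|=|w_\ell|\neq 0$, $v_\ell=(t_1t_2)^{\alpha_\ell}$, $w_\ell=(t_2t_1)^{\beta_\ell}$, and $|t_1t_2|=|t_2t_1|$, I immediately get $\alpha_\ell=\beta_\ell$; symmetrically $\alpha_{\ol \ell}=\beta_{\ol \ell}$.

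Next I invoke $\Sys_5$, i.e.\ the equation $v_\pi=w_\pi$ for the $\pi$ enumerating $\{1,\dots,n\}\setminus\{\ell\}$ in order. This equation belongs to $\Sys$ because $u_\ell,u_{\ol \ell}\neq\epsilon$, so $|u_\pi|=|u_{id_n}|-|u_\ell u_{\ol \ell}|<|u_{id_n}|$. Denote the two sides of this equation by $V_5=W_5$.

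The heart of the argument is to show that in each of the four subcases determined by the normalization, $v_{id_n}$ and $w_{id_n}$ admit factorisations $v_{id_n}=X\cdot V_5\cdot Y$ and $w_{id_n}=X\cdot W_5\cdot Y$ with the \emph{same} left and right factors $X,Y$. Here $X$ is one of $(t_1t_2)^{\alpha_\ell},(t_2t_1)^{\alpha_\ell}$ and $Y$ one of $(t_3t_4)^{\alpha_{\ol \ell}},(t_4t_3)^{\alpha_{\ol \ell}}$, the choice depending on which of $v_0,w_0$ (resp.\ $v_{\ol 0},w_{\ol 0}$) is $\epsilon$. Substituting the given forms and applying the standard commutation identities $t_2(t_1t_2)^k=(t_2t_1)^k t_2$, $t_1(t_2t_1)^k=(t_1t_2)^k t_1$, $(t_3t_4)^k t_3=t_3(t_4t_3)^k$ and $(t_4t_3)^k t_4=t_4(t_3t_4)^k$, I pull the $v_\ell$ factor all the way to the left as $X$ and push the $v_{\ol \ell}$ factor all the way to the right as $Y$. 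The identical computation on the $w$-side, using $w_i=(t_2t_1)^{\beta_i}$ and $w_{\ol i}=(t_4t_3)^{\beta_{\ol i}}$, yields the same $X,Y$ because $\alpha_\ell=\beta_\ell$ and $\alpha_{\ol \ell}=\beta_{\ol \ell}$. Combining with $V_5=W_5$ gives $v_{id_n}=w_{id_n}$.

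The only real obstacle is the bookkeeping of the four subcases of $(v_0,w_0,v_{\ol 0},w_{\ol 0})$, in each of which $X$ and $Y$ take slightly different forms; but the conceptual content is entirely captured by $\alpha_\ell=\beta_\ell$, $\alpha_{\ol \ell}=\beta_{\ol \ell}$, and the availability of $\Sys_5$.
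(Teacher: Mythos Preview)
Your argument is correct and hinges on the same pivot as the paper's proof, namely the equation $\Sys_5$. Both proofs boil down to showing that inserting the $\ell$-th block on each side of $\Sys_5$ amounts to pre-multiplying by a common word $X$ and post-multiplying by a common word $Y$.

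There is a mild methodological difference. The paper collapses your four subcases into a single one by the prepending trick (replacing $v_0,w_0$ by $t_1v_0,t_1w_0$ when $w_0=\epsilon$, etc.), so that uniformly $v_0\in(t_1t_2)^*$, $w_0\in(t_1t_2)^*t_1$, $v_{\ol 0}\in(t_3t_4)^*$, $w_{\ol 0}\in t_4(t_3t_4)^*$. It then spends a paragraph deriving, from $\Sys_1$ and $\Sys_2$, the auxiliary identity $(t_1t_2)^{\alpha_\ell} w_m (t_3t_4)^{\alpha_{\ol\ell}} = (t_1t_2)^{\beta_\ell} w_m (t_3t_4)^{\beta_{\ol\ell}}$ before the final chain of equalities. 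Your route is more direct: you read $\alpha_\ell=\beta_\ell$ and $\alpha_{\ol\ell}=\beta_{\ol\ell}$ straight off the hypothesis $|v_\ell|=|w_\ell|$ (and Proposition~\ref{prop:lengtheq2}), which makes that auxiliary identity trivially true and lets you bypass $\Sys_1,\Sys_2$ entirely. The price you pay is the four-way case split on which of $v_0,w_0$ and $v_{\ol 0},w_{\ol 0}$ is empty, but as you note these cases are symmetric and each reduces to the same commutation identities $t_2(t_1t_2)^k=(t_2t_1)^kt_2$ and $t_1(t_2t_1)^k=(t_1t_2)^kt_1$ (and their $t_3,t_4$ analogues). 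In short: same core idea, your packaging is a bit leaner, the paper's is a bit more uniform.
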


\begin{proof}
 First by Lemma \ref{lemma:formabove} and Corollary
 \ref{coro:formabove}, we have:

$$
\begin{array}{lllllll}
w_0 = \epsilon & \Rightarrow & v_0 \in (t_2t_1)^*t_2 &\qquad & 
v_0 = \epsilon & \Rightarrow & w_0 \in (t_1t_2)^*t_1 \\
w_{\ol 0} = \epsilon & \Rightarrow & v_{\ol 0} \in (t_3t_4)^*t_3&\qquad&
v_{\ol 0} = \epsilon & \Rightarrow & w_{\ol 0} \in (t_4t_3)^*t_4
\end{array}
$$

Since $v_0=\epsilon$ or $w_0=\epsilon$, and $v_{\ol 0} = \epsilon$ or
$w_{\ol 0} = \epsilon$, we can assume wlog that
$v_0 = (t_1t_2)^{\alpha_0}$ and $w_0 = (t_1t_2)^{\beta_0}t_1$ for some
$\alpha_0,\beta_0\geq 0$. Indeed, if $w_0 = \epsilon$, we
simply replace in $\Sys_1,\dots,\Sys_5$ $v_0$ by $t_1v_0$ and
$w_0$ by $t_1w_0$ (which is indeed of the form $(t_1t_2)^*t_1$).
If $v_0=\epsilon$, then it is of the form $(t_1t_2)^*$ and 
$w_0$ is of the form $(t_1t_2)^*t_1$.

Similarly, we can assume wlog that
$v_{\ol 0} = (t_3t_4)^{\alpha_{\ol 0}}$ and $w_{\ol 0} =
(t_4t_3)^{\beta_{\ol 0}}t_4$ for some $\alpha_{\ol 0},\beta_{\ol
  0}\geq 0$.

Now, by $\Sys_1$ and $\Sys_2$, we have:
\begin{eqnarray*}
  v_0v_\ell v_m  v_{\ol{\ell}} v_{\ol 0} &= & w_0 w_\ell w_m w_{\ol{\ell}}w_{\ol{0}}\\
  v_0 v_m v_{\ol 0}   &= & w_0 w_m w_{\ol{0}}\\
\end{eqnarray*}

So we deduce:
 \begin{eqnarray*}
v_0v_\ell v_m  v_{\ol{\ell}} v_{\ol 0}&= & w_0 w_\ell w_m w_{\ol{\ell}}w_{\ol{0}}\\
\Leftrightarrow  (t_1 t_2)^{\alpha_0+\alpha_\ell}  v_m
(t_3t_4)^{\alpha_{\ol \ell}+\alpha_{\ol 0}}&= & w_0 w_\ell w_m w_{\ol{\ell}}w_{\ol{0}}\\
\Leftrightarrow  (t_1t_2)^{\alpha_\ell}(t_1t_2)^{\alpha_0}v_m
(t_3t_4)^{\alpha_{\ol 0}} (t_3t_4)^{\alpha_{\ol \ell}} &= & w_0w_\ell w_m w_{\ol{\ell}}w_{\ol{0}}\\
\Leftrightarrow  (t_1t_2)^{\alpha_\ell}  v_0  v_m v_{\ol 0}
(t_3t_4)^{\alpha_{\ol \ell}}&= & w_0 w_\ell w_m w_{\ol{\ell}}w_{\ol{0}}\\
\Leftrightarrow  (t_1t_2)^{\alpha_\ell}  w_0  w_m w_{\ol 0}
(t_3t_4)^{\alpha_{\ol \ell}} &= & w_0 w_\ell w_m w_{\ol{\ell}}w_{\ol{0}}\\
\Leftrightarrow  (t_1t_2)^{\alpha_\ell+\beta_0}t_1 w_m t_4
(t_3t_4)^{\alpha_{\ol \ell}+\beta_{\ol 0}} &= & w_0 w_\ell w_m w_{\ol{\ell}}w_{\ol{0}}\\
\Leftrightarrow  w_0(t_1t_2)^{\alpha_\ell} w_m 
(t_3t_4)^{\alpha_{\ol \ell}}w_{\ol 0} &= & w_0 w_\ell w_m w_{\ol{\ell}}w_{\ol{0}}\\
\Leftrightarrow  (t_1t_2)^{\alpha_\ell} w_m 
(t_3t_4)^{\alpha_{\ol \ell}} &= &  w_\ell w_m w_{\ol{\ell}}\\
\Leftrightarrow  (t_1t_2)^{\alpha_\ell}  w_m  (t_3t_4)^{\alpha_{\ol \ell}}&=
& (t_1t_2)^{\beta_\ell} w_m (t_3t_4)^{\beta_{\ol \ell}}
 \end{eqnarray*}

Then we conclude with:
\begin{eqnarray*}
&&v_0v_1\dots v_m v_{\ol{n}}\dots v_{\ol{1}}v_{\ol 0}\\
 &= &  (t_1 t_2)^{\alpha_0+\dots+\alpha_n}  v_m  (t_3t_4)^{\alpha_{\ol
      n}+\dots+\alpha_{\ol 0}}\\
 &= &  (t_1 t_2)^{\alpha_\ell}v_0\dots v_{\ell-1}v_{\ell+1}\dots
  v_nv_mv_{\ol n}\dots v_{\ol {\ell+1}}v_{\ol {\ell-1}}  (t_3t_4)^{\alpha_{\ol \ell}}\\
 &= &  (t_1 t_2)^{\alpha_\ell}w_0\dots w_{\ell-1}w_{\ell+1}\dots
  w_nw_mw_{\ol n}\dots w_{\ol {\ell+1}}w_{\ol {\ell-1}}
  (t_3t_4)^{\alpha_{\ol \ell}}\text{ by $\Sys_5$}\\
 &= &  (t_1
  t_2)^{\alpha_\ell+\beta_0+\dots+\beta_{\ell-1}+\beta_{\ell+1}\dots
    \beta_n}w_m(t_3t_4)^{\beta_{\ol n}+\dots\beta_{\ol
      {\ell+1}}+\beta_{\ol {\ell-1}}+\dots +\beta_{\ol 1}+\alpha_{\ol \ell}}\\
 &= &  (t_1
  t_2)^{\beta_0+\dots+\beta_{\ell-1}+\beta_{\ell+1}\dots
    \beta_n} (t_1t_2)^{\alpha_\ell}w_m(t_3t_4)^{\alpha_{\ol \ell}}(t_3t_4)^{\beta_{\ol n}+\dots\beta_{\ol
      {\ell+1}}+\beta_{\ol {\ell-1}}+\dots +\beta_{\ol 1}}\\
 &= &  (t_1
  t_2)^{\beta_0+\dots+\beta_{\ell-1}+\beta_{\ell+1}\dots
    \beta_n} (t_1t_2)^{\beta_\ell}w_m(t_3t_4)^{\beta_{\ol \ell}}(t_3t_4)^{\beta_{\ol n}+\dots\beta_{\ol
      {\ell+1}}+\beta_{\ol {\ell-1}}+\dots +\beta_{\ol 1}}\\
 &= &w_0w_1 \dots w_{n} w_m w_{\ol{n}}\dots w_{\ol{0}}
 \end{eqnarray*}\qed
\end{proof}

\begin{lemma}\label{lem:case2}
   If there exists $\exists k\in\{1,\dots,n\}$ such that for all
   $i\neq k$,  $|v_i|=|w_i|$ and $|v_{\ol i}| = |w_{\ol i}|$, then
   $v_0\dots v_n v_m v_{\ol{n}}\dots v_{\ol 0} = w_0 \dots w_{n} w_m
   w_{\ol{n}}\dots w_{\ol{0}}$.
\end{lemma}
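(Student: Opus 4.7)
The plan is to mirror the proof of Lemma~\ref{lem:case1} as closely as possible, but with extra care around the exceptional index $k$ where the uniform ``powers of $t_1t_2, t_2t_1, t_3t_4, t_4t_3$'' structure may fail. First, since $|v_\ell|=|w_\ell|\neq 0$ and $|v_{\ol\ell}|=|w_{\ol\ell}|\neq 0$, Lemma~\ref{lemma:formabove} and Corollary~\ref{coro:formabove} supply canonical forms $v_\ell=(t_1t_2)^{\alpha_\ell}$, $w_\ell=(t_2t_1)^{\alpha_\ell}$, $v_{\ol\ell}=(t_3t_4)^{\alpha_{\ol\ell}}$, $w_{\ol\ell}=(t_4t_3)^{\alpha_{\ol\ell}}$ (the paired exponents coinciding because the lengths match). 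As in Lemma~\ref{lem:case1}, I may wlog normalize the boundary words to $v_0=(t_1t_2)^{\alpha_0}$, $w_0=(t_1t_2)^{\beta_0}t_1$, $v_{\ol 0}=(t_3t_4)^{\alpha_{\ol 0}}$, $w_{\ol 0}=(t_4t_3)^{\beta_{\ol 0}}t_4$.

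Next, for every $i\neq k$, both $|v_i|=|w_i|$ and $|v_{\ol i}|=|w_{\ol i}|$ hold. Substituting the normalized forms into $\Sys_3$ (at $a=0,1$) and comparing with $\Sys_1$ and $\Sys_2$, Lemma~\ref{lemma:overlap} (items~\ref{lemma:overlap2:E}, \ref{lemma:overlap2:A}, \ref{lemma:overlap2:G}) yields an identity of the form $(t_2t_1)^\alpha t_2\,v_i\,(t_1t_2)^{a_i}=w_i\,(t_2t_1)^{b_i}t_2$, and symmetrically for $v_{\ol i},w_{\ol i}$ with $t_3,t_4$. Lemma~\ref{lemma:concatenation} then collapses these identities over all $i\neq k$ into
\[
(t_2t_1)^\alpha t_2\,(v_1\cdots v_{k-1}v_{k+1}\cdots v_n) = (w_1\cdots w_{k-1}w_{k+1}\cdots w_n)\,(t_2t_1)^\alpha t_2,
\]
together with the analogous identity for the $\ol i$ parts.

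Finally, I splice the exceptional index $k$ back in. If $k=\ell$, the hypothesis in fact holds for every $i\in\{1,\dots,n\}$ (the only potentially exceptional index is $\ell$, which already has matching lengths), and the closing computation of Lemma~\ref{lem:case1} goes through verbatim, using $t_4t_3\,w_m=w_m\,t_2t_1$ to shuffle powers across $w_m$. If $k\neq\ell$, I combine $\Sys_5$ (which skips $\ell$) with the concatenated identity above: Proposition~\ref{prop:lengtheq2} forces $|v_k|-|w_k|=|w_{\ol k}|-|v_{\ol k}|$, and $\Sys_5$ pins down how the exceptional pair $v_k, v_{\ol k}$ must concatenate with its matched-length neighbours on the $w$-side. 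Re-inserting the $v_\ell, w_\ell, v_{\ol\ell}, w_{\ol\ell}$ powers in their proper positions, using again Lemma~\ref{lemma:overlap}, then yields $v_{id_n}=w_{id_n}$.

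The hardest step is this last one in the sub-case $k\neq\ell$: unlike in Lemma~\ref{lem:case1}, the block $v_k$ (respectively $w_k$) may have arbitrary shape and does not commute freely with $t_1t_2$, so the insertion of the $v_\ell$-powers alongside $v_k$ must be justified combinatorially via $\Sys_4$, which couples $i$, $k$ and $\ell$ and is the only equation with enough strength to constrain $v_k, w_k$ in the presence of an arbitrary number of $v_\ell$'s. Checking that the length discrepancy at $k$ does not produce spurious contributions once absorbed back into $V$ and $W$ is the real delicate point.
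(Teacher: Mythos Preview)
Your proposal assembles the right ingredients but has a genuine gap in the ``splicing'' step. Applying Lemma~\ref{lemma:concatenation} uniformly over all $i\neq k$ and then reinserting $v_k$ cannot work when $|v_k|\neq|w_k|$: the index $k$ sits in the \emph{middle} of $v_1\cdots v_n$, so an identity about $v_1\cdots v_{k-1}v_{k+1}\cdots v_n$ does not control the full product. Concretely (taking $w_0=\epsilon$, $v_0=(t_2t_1)^{\alpha_0}t_2$ as in the paper rather than your Lemma~\ref{lem:case1}-style normalization), from $\Sys_3$ and $|v_i|=|w_i|$ one gets $v_0v_i=w_iv_0$ for each $i\neq k$, hence $v_0v_1\cdots v_{k-1}=w_1\cdots w_{k-1}v_0$; from $\Sys_3$ at index $k$ one has (in the favourable subcase) $v_0v_k=w_k(t_2t_1)^{b_k}t_2$. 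After these two steps the running prefix is $(t_2t_1)^{b_k}t_2$, not $v_0$, and a length count gives $(\alpha_0-b_k)|t_1t_2|=|w_k|-|v_k|\neq 0$, so $b_k\neq\alpha_0$ and the identity $v_0v_i=w_iv_0$ is useless for pushing through $v_{k+1}\cdots v_n$. The paper's fix is the one you gesture at in your last paragraph but do not carry out: for $i>k$ use $\Sys_4$ with first index $k$, substitute $v_0v_k=w_k(t_2t_1)^{b_k}t_2$, and obtain $(t_2t_1)^{b_k}t_2\,v_i(t_1t_2)^{a'_i}=w_i(t_2t_1)^{b'_i}t_2$; then Lemma~\ref{lemma:concatenation} applies on the block $i>k$ with $\alpha=b_k$. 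The call and return sides are finally glued by $\Sys_3$ at $a=0$ with index $k$, namely $v_0v_kv_mv_{\ol k}=w_kw_mw_{\ol k}w_{\ol 0}$. No split $k=\ell$ versus $k\neq\ell$ is needed, and $\Sys_5$ plays no role in this lemma.

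Two side claims are also incorrect. Your $k=\ell$ reduction fails: having $|v_i|=|w_i|$ for all $i$ does not make the $v_i,w_i$ powers of $t_1t_2,t_2t_1$, so the chain of equalities in Lemma~\ref{lem:case1} does not go through verbatim. And the identity $t_4t_3\,w_m=w_m\,t_2t_1$ you invoke is part of Lemma~\ref{lem:allform}, which belongs to the case $|v_\ell|>|w_\ell|$; it is not available in the present subcase, and the paper never shuffles powers across $w_m$ in this lemma.
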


\begin{proof}
By hypothesis, we have assumed that
$v_0=\epsilon$ or $w_0=\epsilon$, and $v_{\ol 0} = \epsilon$ or
$w_{\ol 0}=\epsilon$. This leads to four cases:

\begin{enumerate}
  \item $w_0 = \epsilon$ and $v_{\ol 0} = \epsilon$;
  \item $v_0 = \epsilon$ and $v_{\ol 0} = \epsilon$;
  \item $v_0 = \epsilon$ and $w_{\ol 0} = \epsilon$;
  \item $w_0 = \epsilon$ and $w_{\ol 0} = \epsilon$.
\end{enumerate}

We have assumed that $|v_\ell| = |w_\ell| \neq 0$ and
$|v_{\ol \ell}| = |w_{\ol \ell}| \neq 0$, and there is $k$ such that
for all $i\neq k$, $|v_i| = |w_i|$ and $|v_{\ol i}| = |w_{\ol i}|$.
This assumption is symmetric, so that with respect to the systems
$\Sys_1,\dots,\Sys_5$, cases $2$ and $4$ are symmetric, 
and case $1$ and $3$ are symmetric. Moreover, the proofs of cases $1$
and $2$ are very similar, therefore we focus on case $1$ only.

From now one, we assume that $w_0 = \epsilon$ and $v_{\ol 0} = \epsilon$.
By $\Sys_3$ and $v_\ell = (t_1t_2)^{\alpha_\ell}$ (Lemma \ref{lemma:formabove}) 
we have $v_0v_k(t_1t_2)^\omega = w_k(t_2t_1)^\omega$. Wlog we can
assume that $t_1\neq\epsilon$. Therefore by Lemma
\ref{lemma:overlap}.\ref{lemma:overlap2:A}, there exist $a_k,b_k$ such that
$v_0 v_k (t_1 t_2)^{a_k} = w_k (t_2 t_1)^{b_k} t_2$, equivalently we
consider two cases we suppose that either $a_k=0$ or that $a_k\neq 0, b_k=0$ i.e.
either $v_0 v_k  = w_k (t_2 t_1)^{b_k} t_2$ or $v_0 v_k (t_1 t_2)^{a_k-1} t_1 = w_k $. 

\begin{itemize}
\item Case $v_0 v_k  = w_k (t_2 t_1)^{b_k} t_2$:

	First, we know that $|v_i|=|w_i|$ for all $i< k$ and that
        there are $a_i,b_i\in\mathbb{N}$ with $v_0 v_i (t_1t_2)^{a_i}
        =  w_i (t_2t_1)^{b_i}t_2$
 (by $\Sys_3$ and Lemma \ref{lemma:overlap}.\ref{lemma:overlap2:A}) where $v_0=(t_1t_2)^{\alpha_0} t_2$, so by Lemma \ref{lemma:concatenation}  we have:
	\begin{eqnarray}
	 v_0 v_1\dots v_{k-1}= w_1\dots w_{k-1} v_0
	\end{eqnarray} 
	
	Second we have $v_0 v_k  = w_k (t_2 t_1)^{b_k} t_2$ by hypothesis (the case we are considering).
	
	Third, again by $\Sys_3$ and Lemma
        \ref{lemma:overlap}.\ref{lemma:overlap2:A} we know that $|v_i|=|w_i|$ for all $i> k$ and that there are $a'_i,b'_i\in\mathbb{N}$ with $v_0 v_k v_i (t_1t_2)^{a'_i} =  w_k w_i (t_2t_1)^{b'_i}t_2$ i.e. by replacing $v_0 v_k$ with $w_k (t_2 t_1)^{b_k} t_2$ we have $(t_2t_1)^{b_k}t_2 v_i (t_1t_2)^{a'_i} =  w_i (t_2t_1)^{b'_i}t_2$, so by Lemma \ref{lemma:concatenation} we have:
	\begin{eqnarray}
	 (t_2 t_1)^{b_k} t_2 v_{k+1}\dots v_n= w_{k+1}\dots w_n (t_2 t_1)^{a_k} t_2
	\end{eqnarray}

	As a consequence we have:

\begin{eqnarray}
&& v_0 \dots v_n \nonumber\\
 &= & v_0 \dots v_{k-1} v_{k} v_{k+1} \dots v_n \nonumber\\ 
 &= & w_1 \dots w_{k-1} v_0 v_{k} v_{k+1} \dots v_n \nonumber\\
 &= & w_1 \dots w_{k-1} w_{k} (t_2 t_1)^{b_k} t_2 v_{k+1} \dots v_n \nonumber\\
 & = & w_1\dots w_{k-1}w_{k} w_{k+1} \dots w_n (t_2 t_1)^{b_k} t_2 \nonumber\\
\label{lem:case2:formu0_un} & = & w_1 \dots w_n (t_2 t_1)^{b_k} t_2
 \end{eqnarray}
		
\item Case $v_0 v_k (t_1 t_2)^{a} t_1 = w_k $: We can show that $v_0 \dots v_n (t_2 t_1)^{a_k-1} t_2 = w_1 \dots w_n $ with a very similar proof.
\end{itemize}

By symmetry (since $v_{\ol 1}\neq \epsilon$ and $w_{\ol 1}\neq \epsilon$), we have either
	  $t_3 (t_4 t_3)^{d_k} v_{\ol {k}} = w_{\ol{k}} w_{\ol{0}}$ or $v_{\ol {k}} = t_4 (t_3 t_4)^{c_k}   w_{\ol{k}} w_{\ol{0}}$. 

We conclude the proof by putting this together and showing that $v_0v_1\dots v_m v_{\ol{n}}\dots v_{\ol{1}}=w_1 \dots w_{n} w_m w_{\ol{n}}\dots w_{\ol{0}}$:
	 
	\begin{itemize}
		\item Subcase $t_3 (t_4 t_3)^{d_k} v_{\ol {k}} = w_{\ol{k}} w_{\ol{0}}  $: this implies that  $t_3 (t_4 t_3)^{d_k} v_{\ol {n}}\dots v_{\ol 1}  = w_{\ol{n}}\dots w_{\ol{0}}  $. 
		Moreover we know that $v_0 v_k v_m v_{\ol k} = w_k w_m w_{\ol{k}} w_{\ol{0}}$ i.e. $(t_2 t_1)^{b_k} t_2 v_m  = w_m t_3 (t_4 t_3)^d$. We can deduce:
\begin{eqnarray*}
&&v_0v_1\dots v_m v_{\ol{n}}\dots v_{\ol{1}}\\
 &= &w_1 \dots w_{n} (t_1 t_2)^{b_k} t_1  v_m  v_{\ol{n}}\dots v_{\ol{1}}\\
 &= &w_1 \dots w_{n} w_m t_3 (t_4 t_3)^{d_k} v_{\ol{n}}\dots v_{\ol{1}}\\
 &= &w_1 \dots w_{n} w_m w_{\ol{n}}\dots w_{\ol{0}}
 \end{eqnarray*}

		\item Subcase $v_{\ol {k}} = t_4 (t_3 t_4)^c   w_{\ol{k}} w_{\ol{0}}  $: this implies that  $v_{\ol {n}}\dots v_{\ol 1}  = t_4 (t_3 t_4)^c w_{\ol{n}}\dots w_{\ol{0}}  $. 
		Moreover we know that $v_0 v_k v_m v_{\ol k} = w_k w_m w_{\ol{k}} w_{\ol{0}}$ i.e. $(t_2 t_1)^{b} t_2 v_m t_4 (t_3 t_4)^c = w_m $. We can deduce:
\begin{eqnarray*}
&&v_0v_1\dots v_m v_{\ol{n}}\dots v_{\ol{1}}\\
 &= &w_1 \dots w_{n} (t_1 t_2)^{b} t_1  v_m  v_{\ol{n}}\dots v_{\ol{1}}\\
 &= &w_1 \dots w_{n} (t_1 t_2)^{b} t_1  v_m  t_4 (t_3 t_4)^c w_{\ol{n}}\dots w_{\ol{0}}\\
 &= &w_1 \dots w_{n} w_m w_{\ol{n}}\dots w_{\ol{0}}
 \end{eqnarray*}
\end{itemize}
\qed	
\end{proof}


\subsubsection{Subcase $|v_\ell| = |w_\ell| \neq 0$ and $|v_{\ol \ell}| = |w_{\ol \ell}|
  = 0$}

Similarly as Proposition \ref{prop:cases}, one can prove the following
proposition:

\begin{proposition}\label{prop:cases2}
  One of the following propositions holds:
  
  \begin{enumerate}
    \item $\forall i\in\{1,\dots,n\}: v_i =  (t_1t_2)^{\alpha_i}\wedge
      w_i= (t_2t_1)^{\beta_i}$
    \item $\exists k\in\{1,\dots,n\} \forall i\neq k: |v_i|=|w_i|$.
  \end{enumerate}
\end{proposition}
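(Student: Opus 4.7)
The plan is to adapt the proof of Proposition~\ref{prop:cases} to the simpler setting where $|v_{\ol\ell}|=|w_{\ol\ell}|=0$: since the conclusion no longer mentions the barred words $v_{\ol i}, w_{\ol i}$, only half of the earlier case analysis is needed, and no appeal to the mirror/corollary machinery (Corollary~\ref{coro:formabove}, Corollary~\ref{coro:formeenforme}) is required.

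First I would verify that Lemma~\ref{lemma:formabove} and Lemma~\ref{lemma:uk_neq_vk} remain applicable in this subcase. Inspecting their proofs, the only hypothesis they actually invoke is $|v_\ell|=|w_\ell|\neq 0$ together with the systems $\Sys_1,\dots,\Sys_4$ and the word-combinatorial tools of Section~\ref{sec:wordeq}; the condition $|v_{\ol\ell}|=|w_{\ol\ell}|\neq 0$ plays no role. In particular Lemma~\ref{lemma:formabove} still yields a primitive $t_1t_2$ with $v_\ell\in(t_1t_2)^+$ and $w_\ell\in(t_2t_1)^+$, and Lemma~\ref{lemma:uk_neq_vk}, applied with any single index $k$ for which $|v_k|\neq|w_k|$, still produces $v_i=(t_1t_2)^{\alpha_i}$ and $w_i=(t_2t_1)^{\beta_i}$ for every $i\neq k$, using that same primitive pair $(t_1,t_2)$.

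The case split then mirrors that of Proposition~\ref{prop:cases}. If there are two distinct indices $k\neq k'$ in $\{1,\dots,n\}$ with $|v_k|\neq|w_k|$ and $|v_{k'}|\neq|w_{k'}|$, apply Lemma~\ref{lemma:uk_neq_vk} at $k$ to cover every index in $\{1,\dots,n\}\setminus\{k\}$, and then at $k'$ to cover every index in $\{1,\dots,n\}\setminus\{k'\}$. The union of the two exceptional sets being empty, the representation $v_i=(t_1t_2)^{\alpha_i}$, $w_i=(t_2t_1)^{\beta_i}$ holds for every $i\in\{1,\dots,n\}$, which is item~$1$ of the statement. Otherwise, at most one index $k$ satisfies $|v_k|\neq|w_k|$, so $|v_i|=|w_i|$ holds for every $i\neq k$, which is item~$2$.

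The one point that deserves checking is that the two invocations of Lemma~\ref{lemma:uk_neq_vk} yield the \emph{same} pair $(t_1,t_2)$, otherwise one could not consolidate the two partial representations into a single one. This is routine: the pair is extracted from the unique primitive-root factorisation of $v_\ell$ (Lemma~\ref{lemma:commute}) and depends only on $v_\ell$ (and $w_\ell$), not on the auxiliary index used. Apart from this bookkeeping, the argument is a direct transcription of the proof already given for Proposition~\ref{prop:cases}, with the barred half of the reasoning simply omitted.
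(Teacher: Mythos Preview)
Your proposal is correct and coincides with the paper's intended argument: the paper does not spell out a proof of Proposition~\ref{prop:cases2} but simply says it is proved ``similarly as Proposition~\ref{prop:cases}'', and what you have written is precisely that adaptation---drop the barred half of the case analysis (no Corollary~\ref{coro:formeenforme} needed), apply Lemma~\ref{lemma:uk_neq_vk} at two distinct indices when they exist, and otherwise fall into item~2. Your observation that the pair $(t_1,t_2)$ is fixed once and for all by Lemma~\ref{lemma:formabove} (hence consistent across both invocations of Lemma~\ref{lemma:uk_neq_vk}) is a useful sanity check that the paper leaves implicit.
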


\begin{lemma}
  If $|v_\ell| = |w_\ell| \neq 0$ and $|v_{\ol \ell}| = |w_{\ol \ell}|
  = 0$, then $v_{id_n} = w_{id_n}$.
\end{lemma}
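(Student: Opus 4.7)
The plan is to apply Proposition \ref{prop:cases2} and argue separately in its two cases, closely following the template of the proofs of Lemmas \ref{lem:case1} and \ref{lem:case2} but exploiting the simplification $v_{\ol \ell} = w_{\ol \ell} = \epsilon$. As before, I would assume wlog that $v_0 = \epsilon$ or $w_0 = \epsilon$, and $v_{\ol 0} = \epsilon$ or $w_{\ol 0} = \epsilon$.

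In Case 1, where $v_i = (t_1t_2)^{\alpha_i}$ and $w_i = (t_2t_1)^{\beta_i}$ for all $i\in\{1,\dots,n\}$, I would first apply Lemma \ref{lemma:formabove} to fix the form of $v_\ell, w_\ell$, and (as in the proof of Lemma \ref{lem:case1}) absorb a copy of $t_1$ into the equations so that, wlog, $v_0 = (t_1t_2)^{\alpha_0}$ and $w_0 = (t_1t_2)^{\beta_0} t_1$. Since $|v_\ell|=|w_\ell|$ and $t_1t_2$ is primitive, we have $\alpha_\ell = \beta_\ell$, and a direct computation gives $w_0 (t_2t_1)^{\alpha_\ell} = (t_1t_2)^{\alpha_\ell} w_0$. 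The key observation is that, because $v_{\ol \ell}=w_{\ol \ell}=\epsilon$, the equation $\Sys_5$ is precisely the target equation with the $v_\ell$ and $w_\ell$ factors removed. Inserting $v_\ell$ at position $\ell$ on the left and $w_\ell$ at position $\ell$ on the right amounts to left-multiplying both sides by $(t_1t_2)^{\alpha_\ell}$, using the commutation identity above together with the fact that all $v_i, w_i$ are (commuting) powers of $t_1t_2$ or $t_2t_1$. The target then follows from $\Sys_5$ by left-multiplication.

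In Case 2, where there exists $k\in\{1,\dots,n\}$ with $|v_i|=|w_i|$ for all $i\neq k$, I would mirror the proof of Lemma \ref{lem:case2}. Assume wlog $v_0 = \epsilon$, so $w_0 = (t_1t_2)^{\beta_0} t_1$ by Lemma \ref{lemma:formabove}. From $\Sys_3$ and $v_\ell = (t_1t_2)^{\alpha_\ell}$, Lemma \ref{lemma:overlap}.\ref{lemma:overlap2:A} yields equations $v_0 v_i (t_1t_2)^{a_i} = w_i (t_2t_1)^{b_i} t_2$ for each $i\neq k$, together with a special equation for $v_0v_k$ versus $w_k$ (either $v_0 v_k = w_k (t_2t_1)^{b_k} t_2$ or $v_0 v_k (t_1t_2)^{a_k-1}t_1 = w_k$). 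Applying Lemma \ref{lemma:concatenation} separately to the ranges $i<k$ and $i>k$ telescopes the call-side products into an identity of the form $v_0 v_1 \dots v_n = w_1 \dots w_n (t_2t_1)^b t_2$ (or its symmetric variant).

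The main obstacle is Case 2: the return side no longer admits the clean $t_3,t_4$ structure used in Lemma \ref{lem:case2}, because $v_{\ol \ell} = w_{\ol \ell} = \epsilon$ prevents applying Lemma \ref{lemma:formabove} to the mirror system. Instead of decomposing the $v_{\ol i}$ and $w_{\ol i}$ into canonical forms, I would close the argument by treating $v_{\ol n}\dots v_{\ol 0}$ as a single block: combining $\Sys_1$ (which gives $v_m v_{\ol 0} = (t_1t_2)^{\beta_0} t_1 w_m w_{\ol 0}$) with the bundled call-side identity just obtained, a single cancellation of the prefix $w_1\dots w_n (t_2t_1)^b t_2$ (or its conjugate) against the factored right-hand side yields the desired equality $v_{id_n}=w_{id_n}$. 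The delicate point is checking that the exponent $b$ arising from Lemma \ref{lemma:overlap}.\ref{lemma:overlap2:A} on the ``$i>k$'' range matches the one obtained for the $v_0 v_k$ vs $w_k$ equation; this is forced by comparing lengths via Proposition \ref{prop:lengtheq2}.
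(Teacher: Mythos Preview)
Your Case 1 is correct and matches the paper: both arguments reduce to the observation that inserting $v_\ell=(t_1t_2)^{\alpha_\ell}$ and $w_\ell=(t_2t_1)^{\alpha_\ell}$ into $\Sys_5$ amounts to the same left-multiplication on both sides, using commutation through $w_0$.

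Your Case 2, however, has a genuine gap. After the call-side telescoping you obtain an identity of the shape $v_0v_1\cdots v_n = w_0w_1\cdots w_n\cdot X$ for some explicit word $X$ built from $t_1,t_2$. To conclude $v_{id_n}=w_{id_n}$ you then need
\[
X\, v_m v_{\ol n}\cdots v_{\ol 0} \;=\; w_m w_{\ol n}\cdots w_{\ol 0}.
\]
You propose to derive this from $\Sys_1$, but $\Sys_1$ is $v_0v_mv_{\ol 0}=w_0w_mw_{\ol 0}$: it says nothing about $v_{\ol 1},\dots,v_{\ol n}$ or $w_{\ol 1},\dots,w_{\ol n}$. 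Since $v_{\ol \ell}=w_{\ol \ell}=\epsilon$ kills the return-side pumping, none of $\Sys_2,\Sys_3,\Sys_4$ gives you access to the \emph{concatenated} return block either (they only see one $v_{\ol i},w_{\ol i}$ at a time, sandwiched between $v_m$ and $v_{\ol 0}$). Proposition~\ref{prop:lengtheq2} is a length identity and cannot manufacture the missing equality of words.

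The paper closes exactly this hole with $\Sys_5$, which is the unique equation in the list carrying the full return block $v_mv_{\ol n}\cdots v_{\ol 0}$. Writing $V_1=v_0\cdots v_{\ell-1}v_{\ell+1}\cdots v_n$, $W_1=w_0\cdots w_{\ell-1}w_{\ell+1}\cdots w_n$, $V=v_mv_{\ol n}\cdots v_{\ol 0}$, $W=w_mw_{\ol n}\cdots w_{\ol 0}$, equation $\Sys_5$ reads $V_1V=W_1W$; setting $W_1=V_1W'$ gives $V=W'W$. The point is then purely on the call side: one shows that the \emph{same} shift $W'$ also satisfies $W_2=V_2W'$ for $V_2=v_0\cdots v_n$ and $W_2=w_0\cdots w_n$, because your telescoping construction (Eq.~\ref{lem:case2:formu0_un}) produces the same residual $(t_2t_1)^{a_k}t_2$ whether or not the pair $v_\ell,w_\ell$ is present (they are $(t_1t_2)^{\alpha_\ell}$ and $(t_2t_1)^{\alpha_\ell}$ with $|v_\ell|=|w_\ell|$, so they slide through the telescope without changing the offset). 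Then $v_{id_n}=V_2V=V_2W'W=W_2W=w_{id_n}$. Your intuition to ``treat $v_{\ol n}\cdots v_{\ol 0}$ as a single block'' is right; the equation that lets you do so is $\Sys_5$, not $\Sys_1$.
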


\begin{proof}
Let pose $V_1=v_0\dots v_{\ell-1}v_{\ell+1}\dots v_n$, resp. $W_1=w_0\dots w_{\ell-1}w_{\ell+1}\dots w_n$, and 
$V=v_mv_{\ol n}\dots v_{{\ol \ell+1}}v_{\ol \ell-1}\dots v_{\ol 0}=v_mv_{\ol n}\dots v_{\ol 0}$, resp. 
$W=w_mw_{\ol n}\dots w_{{\ol \ell+1}}w_{\ol \ell-1}\dots w_{\ol 0}=w_mw_{\ol n}\dots w_{\ol 0}$.
By $\Sys_5$ we have $V_1V=W_1W$. 
We can suppose wlog that $W_1=V_1W'$, i.e. we have:
\begin{eqnarray}
V =  W'W \label{subcaseeq:eq1}
\end{eqnarray}

Now let $V_2=v_0\dots v_n$ and $W_2=w_0\dots w_n$. 
We have $v_{id_n}=V_2V$ and $w_{id_n}=W_2W$. We will show that $W_2=V_2W'$. 
This will conclude the proof as with Eq. \ref{subcaseeq:eq1} we have $v_{id_n}=V_2V= V_2W'W= W_2W =w_{id_n}$.

First note that Lemma \ref{lemma:formabove} is valid in this context and therefore we have 
$w_0 = \epsilon \Rightarrow  v_0 \in (t_2t_1)^*t_2$ and 
$v_0 = \epsilon  \Rightarrow  w_0 \in (t_1t_2)^*t_1$, 
as above we can consider that $v_0=(t_2t_1)^{\alpha_0}t_2 $ and $w_0=(t_2t_1)^{\beta_0}$.

We consider two cases following Proposition \ref{prop:cases2}:

\begin{enumerate}
\item $\forall i\in\{1,\dots,n\}: v_i =  (t_1t_2)^{\alpha_i}\wedge w_i= (t_2t_1)^{\beta_i}$: 
      Let write $\alpha= \alpha_0+\dots + \alpha{\ell-1}+ \alpha{\ell+1}+\dots+\alpha_n$ and $\beta_0+\dots + \beta{\ell-1}+ \beta{\ell+1}+\dots+\beta_n$
      we have $V_1=v_0\dots v_{\ell-1}v_{\ell+1}\dots v_n= (t_2t_1)^{\alpha} t_2$
      and $W_1=w_0\dots w_{\ell-1}w_{\ell+1}\dots w_n= (t_2t_1)^{\beta} $,
      therefore $W'=(t_2t_1)^{\alpha-\beta} t_2$. 
      Moreover  $V_2=V_1 (t_1t_2)^{\alpha_\ell}$ and $W_2=W_2  (t_1t_2)^{\alpha_\ell}$, as a result $W_2=V_2W'$. 
      
\item $\exists k\in\{1,\dots,n\} \forall i\neq k: |v_i|=|w_i|$:       
	By using the same construction as for Eq. \ref{lem:case2:formu0_un} of Lemma \ref{lem:case2}, 
	we can show that there exists $\alpha_k$ such that $W' = (t_2 t_1)^{a_k} t_2$ with $W_1=V_1W'$ and $W_2=V_2W'$.
\end{enumerate}
\qed

\end{proof}

\section{A \pspace algorithm for functionality}\label{sec:pspace}

We now show how the pumping lemma for functionality can be used to decide functionality 
in \pspace. It relies an \nlogspace algorithm for
functionality of \fsts, which is a consequence
of the following pumping argument by Sch{\"u}tzenberger:

\begin{theorem}[Sch{\"u}tzenberger, 1975 \cite{Schutz75}]
 Let $T$ be an \fst with $m$ states. If $T$ is non-functionnal then there
exists a word $w$ of length at most $3*m^2$ that admits two different
outputs.
\end{theorem}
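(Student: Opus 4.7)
The plan is to work with the \emph{square transducer} $T^{(2)}$ whose states are pairs $(p,q)\in Q\times Q$, whose transitions $(p,q)\xrightarrow{a\,/\,u,v}(p',q')$ exist iff $p\xrightarrow{a/u}p'$ and $q\xrightarrow{a/v}q'$ are both transitions of $T$, and whose initial (resp.\ final) states are $I\times I$ (resp.\ $F\times F$). Then the \fst $T$ is non-functional iff there exists an accepting run of $T^{(2)}$ producing a pair $(u,v)$ of output words with $u\neq v$, and the goal becomes: exhibit such a run on an input of length at most $3m^2$.

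Along any run of $T^{(2)}$ I would track the \emph{delay}: a word in $\Sigma^*$ together with a bit recording which of the two partial outputs is currently longer, as long as the two partial outputs are prefix-comparable, and a special value $\bot$ as soon as they disagree at some position. Once $\bot$ is reached it is preserved by all further transitions, and any completion of the run to a final pair still witnesses non-functionality. Hence it suffices to bound separately the prefix of the run up to reaching $\bot$ and the suffix that completes to some pair in $F\times F$.

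The bound is obtained by decomposing a shortest witness into three phases of length at most $m^2$ each. First, along the prefix, if a state pair $(p,q)$ is visited twice with the same delay, the loop in between can be excised, contradicting minimality; so after at most $m^2$ steps either $\bot$ has already been reached or the delay has strictly increased. Second, if a delay-extending loop is found, iterating it eventually forces one of the two partial outputs to contain a letter at a fixed position that differs from the corresponding letter of the other, producing $\bot$ within a further $m^2$ steps (the number of distinct pairs visited along the extension is at most $m^2$). Third, once $\bot$ is set, a shortest path in the $m^2$-state automaton $T^{(2)}$ to a pair in $F\times F$ has length at most $m^2$. Summing gives $|w|\leq 3m^2$.

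The main obstacle is formalizing the loop-excision and loop-iteration arguments: cutting a loop in $T^{(2)}$ trivially yields a valid accepting run by the product structure, but one still has to verify that a \emph{strictly} delay-extending loop produces a character disagreement in a bounded number of iterations rather than merely shuffling the delay. I would handle this by a short word-combinatorics argument showing that if iterating the loop never yields a disagreement then the two output words produced by the loop must be powers of a common primitive word (via Lemma~\ref{lemma:commute}-style reasoning), forcing the loop to be delay-preserving and contradicting the strict-extension hypothesis.
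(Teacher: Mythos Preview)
The paper does not prove this theorem; it is quoted from \cite{Schutz75} and used as a black box to derive Theorem~\ref{thm:funfst}. So there is nothing in the paper to compare against, and what follows is an assessment of your argument on its own merits.

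The setup via the product $T^{(2)}$ and the delay is the standard one and is fine; the three-phase bound, however, has genuine gaps. In Phase~1, pigeonhole on the $m^2$ pairs only gives you a repeated pair $(p,q)$ within $m^2{+}1$ steps; since there are infinitely many possible delays, nothing forces the two visits to carry the same delay, and nothing forces the later delay to be longer. You therefore obtain a delay-\emph{changing} loop, not a delay-\emph{extending} one, so the hypothesis you feed into Phase~2 is not established. In Phase~2 the bound ``$\bot$ within a further $m^2$ steps'' is unjustified: iterating the loop revisits the same pairs, so counting distinct pairs along the extension bounds nothing. Worse, the strict-prefix case breaks the whole scheme. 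If the loop outputs $(s^a,s^b)$ for a primitive word $s$ with $a>b$, and the suffix after the loop produces the same word on both components, then every iterate keeps the two partial outputs prefix-comparable (the delay stays in $s^*$ throughout), so $\bot$ is never reached; yet excising the loop makes the two global outputs equal, so it cannot be removed either. Your closing paragraph lands exactly in this situation and draws the wrong conclusion: with $y=s^a$, $z=s^b$, $a\neq b$, the loop is \emph{not} delay-preserving (the delay length moves by $(a{-}b)|s|$ each round), so no contradiction with your ``strict-extension hypothesis'' arises, and $\bot$ still never appears. These issues are fixable (an end-marker trick to eliminate the strict-prefix case, and a more careful two-sided case analysis on the loop), but as written the $3m^2$ bound is not obtained.
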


As a consequence, we obtain:

\begin{theorem}\label{thm:funfst}
  Functionality of \fsts is decidable in \nlogspace.
\end{theorem}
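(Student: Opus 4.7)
My plan is to describe an \nlogspace procedure for the complement problem, namely \emph{non-functionality}, and then invoke the Immerman--Szelepcs\'enyi theorem ($\nlogspace = \conlogspace$) to obtain the stated upper bound for functionality itself.

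The crucial ingredient is Sch\"utzenberger's bound just recalled: an \fst $T$ with $m$ states is non-functional iff some input word of length at most $3m^2$ admits two distinct outputs. Since each transition outputs a word whose length is bounded by a constant $k$ coming from the description of $T$, the length of any output produced along a witness run is bounded by a polynomial $P(m,k) = 3km^2$; positions inside such outputs can therefore be encoded in $O(\log|T|)$ space.

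The algorithm guesses the witness letter by letter while simulating two accepting runs in parallel. At each step it maintains the current pair of states $(q_1,q_2)$, a counter for the number of input letters read so far (required to stay at most $3m^2$), and two counters $c_1,c_2$ for the current lengths of the two outputs. Before starting, the algorithm nondeterministically guesses a target position $p\leq P(m,k)$ at which the two outputs should disagree, together with two slots $s_1,s_2\in\Sigma\cup\{\#\}$, where the symbol $\#$ stands for ``output shorter than $p$''. During the simulation, when the $i$-th run crosses position $p$ while firing a transition with output label $v$, the slot $s_i$ is set to the relevant letter of $v$ (easily computed from $c_i$ and $|v|$); if the run terminates without reaching $p$, the slot is set to $\#$. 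The procedure accepts iff both $q_1$ and $q_2$ are final at the end, the input counter is at most $3m^2$, and $s_1 \neq s_2$. The last condition captures both a genuine letter mismatch in the common part of the outputs and the case where one output is a strict prefix of the other.

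All stored data fits in $O(\log|T|)$ bits, so the procedure uses logarithmic working space. The subtle point---and really the main obstacle---is the \emph{a priori} boundedness of the counters $c_1,c_2$ and of the guessed position $p$: without Sch\"utzenberger's polynomial bound on the witness length these counters would have to track an unbounded delay between the two outputs, pushing the algorithm out of \nlogspace. Given the bound, soundness and completeness are straightforward (an accepting execution exhibits two accepting runs on the same input with distinct outputs; conversely, any witness of non-functionality can be assumed to meet the length bound). Closing by Immerman--Szelepcs\'enyi then yields the announced \nlogspace upper bound for functionality.
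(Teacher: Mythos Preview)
Your proposal is correct and follows essentially the same route as the paper: guess a mismatch position in the output, simulate two runs in parallel on a letter-by-letter guessed input while tracking output-length counters bounded via Sch\"utzenberger's theorem, verify the disagreement and acceptance, and close by Immerman--Szelepcs\'enyi. The only cosmetic difference is that the paper disposes of the strict-prefix case up front by a dummy end-marker reduction, whereas you fold it into the final comparison via your $\#$ symbol; both devices serve the same purpose.
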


\begin{proof}
We give a \conlogspace algorithm. The result follows as
\conlogspace = \nlogspace.

Note that each transition outputs a sequence of letters of bounded
length, therefore one can bound polynomially the length of the two different
outputs for a single input that witnesses non-functionality.
Let us point out that two outputs differ either because one is a
strict prefix of the other, or on a common position their letters
differ. By a small trick and a new dummy symbol in the input alphabet,
it is easy to reduce the first case to the second one with an
augmentation of the FST of constant size.

We consider a non-deterministic algorithm for deciding
non-functionality, operating as follows: one guesses a position $i$ in
the output where two outputs differ. Then using only logarithmic space,
one can check that this guess is correct. At each step, this algorithm guesses
itself one letter of the input and the two transitions
of the two runs computing the two different outputs. Therefore at each
step, this algorithm keeps two counters and the two states reached by the two
runs so far. The first (resp. second) counter counts the length of the
first (resp. second) output. When one of the outputs has reached
position $i$, the algorithm stores the $i$-th letter of
this output, and continue until the other output reaches the
$i$-th position. At this point, the two runs are in two states $p,q$,
and one just has to check whether the two
letters at the $i$-th position are different. Finally, the algorithm checks whether the two
runs can be continued into successful runs (from $p$ and $q$) on the
same input. This can be again done in non-deterministic logarithmic space.

By Sch{\"u}tzenberger's Theorem, one can take $i\leq 3m^2$, and
therefore the two counters are represented in logarithmic space in the
size of the \fst.\qed
\end{proof}

We can now give a \pspace algorithm for functionality. We devise a construction which
given a \vpt $A$, builds an \fst $B$ that simulates $A$ for nested input
words of small height. The height of the input word being polynomially
bounded (Lemma \ref{lem:pumpvert}), one can bound similarly the height
of the stack of the \vpt. Then, as runs cross only finitely many stacks, one can incorporate these stacks into a
finite-control part, turning the \vpt into an \fst. This construction is correct in the following sense:

\begin{proposition}
For all \vpt $A$ with $n$ states, one can construct an \fst $B$ of exponential
size wrt $n$, such that $\dom(B) = \{ u\in\dom(A)\ |\ h(u)\leq 8n^4\}$
and for all $w\in\dom(B)$, $B(w) = A(w)$. Moreover, $A$ is functional
iff $B$ is functional.
\end{proposition}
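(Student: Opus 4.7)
The plan is to encode the bounded-height stack of $A$ into the finite control of an \fst $B$. Concretely, let $A = (Q, I, F, \Gamma, \delta_c \uplus \delta_r)$ with $|Q|=n$ and set the height bound $H = 8n^4$. Define $B$ to have state set $Q' = Q \times \Gamma^{\leq H}$, initial states $I' = I \times \{\bot\}$, and final states $F' = F \times \{\bot\}$. For each call transition $(q, c, v, \gamma, q') \in \delta_c$ and each $\sigma \in \Gamma^{\leq H}$ with $|\sigma\gamma| \leq H$, add an \fst transition $((q,\sigma), c, v, (q', \sigma\gamma))$; for each return transition $(q, r, v, \gamma, q') \in \delta_r$ and each $\sigma \in \Gamma^{\leq H - 1}$, add $((q, \sigma\gamma), r, v, (q', \sigma))$. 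Since $|\Gamma|$ is bounded by the size of the description of $A$, the state space $Q'$ has size at most $n \cdot (|\Gamma|+1)^{H+1}$, which is exponential in $n$.

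Next, I would establish the correspondence between runs. A run of $A$ on a well nested word $u$ produces at step $k$ a configuration $(q_k, \sigma_k)$; this maps directly to the state $(q_k, \sigma_k)$ reached in $B$ after reading the first $k$ symbols, provided each intermediate stack $\sigma_k$ satisfies $|\sigma_k| \leq H$. For well nested inputs, the maximum stack height during the (unique choice of stack shape) run equals the height $h(u)$ of the input. Therefore accepting runs of $B$ on $u$ correspond bijectively to accepting runs of $A$ on $u$ whenever $h(u) \leq H$, and no accepting run exists in $B$ otherwise. This yields $\dom(B) = \{u \in \dom(A) \mid h(u) \leq H\}$ and $B(w) = A(w)$ for every $w \in \dom(B)$.

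Finally, I would prove the functional equivalence. One direction is immediate: since $\inter{B} \subseteq \inter{A}$, if $A$ is functional then so is $B$. Conversely, suppose $A$ is non-functional, so there exists $u \in \dom(A)$ with $|A(u)| \geq 2$. If $h(u) \leq H$, then $u \in \dom(B)$ and $B(u) = A(u)$ witnesses non-functionality of $B$. Otherwise $h(u) > H = 8n^4$, and by Lemma \ref{lem:pumpvert} there exists a strictly shorter $u' \in \dom(A)$ still satisfying $|A(u')| \geq 2$. Iterating this argument finitely many times (the length strictly decreases) yields some witness $u^* \in \dom(A)$ with $h(u^*) \leq H$ and $|A(u^*)| \geq 2$; then $u^* \in \dom(B)$ and $B$ is non-functional.

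There is no real obstacle here: the construction is essentially the classical powerset-of-stacks simulation, and the only substantive ingredient is the height bound supplied by Lemma \ref{lem:pumpvert}. The one thing to take care of is checking that the stack-height control is correctly threaded through the transitions so that $B$ genuinely rejects inputs whose required stack exceeds $H$ (rather than, say, accepting a partial run), which is ensured by the side conditions $|\sigma\gamma| \leq H$ in the call rules and by having $F' = F \times \{\bot\}$.
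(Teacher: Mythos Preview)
Your proposal is correct and follows exactly the approach the paper sketches: encode the bounded stack into the finite control of an \fst, then invoke Lemma~\ref{lem:pumpvert} to reduce any non-functionality witness to one of bounded height. The paper does not give a formal proof of this proposition beyond that sketch, so your argument is in fact more detailed than what the paper provides, but the idea is identical.
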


The idea is to apply the \nlogspace algorithm of Theorem \ref{thm:funfst} on $B$.
However, building this \fst $B$ of exponential size wrt to the size of
the \vpt $A$ as the first step of an algorithm will not yield a $\textsc{Pspace}$
algorithm. Therefore, the construction of the transition rules of 
$B$ has to be performed on-demand when such a transition is needed.
Altogether, this gives a $\textsc{Pspace}$ algorithm for deciding
functionality of \vpts.

\bibliographystyle{abbrv}
\bibliography{biblio}

\end{document}